\documentclass[11pt, oneside, english]{article}

\usepackage[a4paper]{geometry}
\geometry{verbose,tmargin=3cm,bmargin=3cm,lmargin=2.5cm,rmargin=2.5cm,headheight=0.8cm,headsep=1cm,footskip=0.5cm}
\pagestyle{headings}

\usepackage[english]{babel}

\usepackage{graphicx}
\graphicspath{{./img/} {./img/plot}}

\usepackage{parskip}

\usepackage{bm}
\usepackage{pdfpages}
\usepackage{siunitx}
\usepackage{commath}
\usepackage{amsmath}
\usepackage{amsthm}
\usepackage{amssymb}
\usepackage{booktabs}
\usepackage{mathtools}  


\newcounter{dummy}
\theoremstyle{plain}
\newtheorem{proposition}[dummy]{Proposition}
\newtheorem{lemma}[dummy]{Lemma}
\newtheorem{theorem}[dummy]{Theorem}
\newtheorem{corollary}[dummy]{Corollary}
\newtheorem*{example}{EXAMPLE}

\theoremstyle{definition}

\newtheorem{remark}[dummy]{Remark}
\newtheorem{note}[dummy]{Note}

\usepackage{xpatch} 
\makeatletter
\xpatchcmd{\TPT@doparanotes}{%
\hskip 1em\@plus .3em}{%
\hskip 0.75em\@plus .25em}
\makeatother

\usepackage{float}
\usepackage{subcaption}

\clubpenalty=9500
\widowpenalty=9500
\usepackage{setspace}


\newlength{\shiftwidth}
\setlength{\shiftwidth}{3em}

\usepackage{hyperref}
\hypersetup{colorlinks=true, linkcolor=black}



\DeclareMathOperator{\dom}{dom}
\newcommand{\dd}{\dif}

\DeclareMathOperator{\sgn}{sgn}


\DeclareMathOperator{\pp}{\ensuremath{\partial}}
\DeclareMathOperator{\grad}{grad}
\DeclareMathOperator{\rot}{rot}
\DeclareMathOperator{\ddiv}{div}

\DeclareMathOperator{\arctanh}{arctanh}
\newcommand{\sess}{\sigma_{\mathrm{ess}}}

\DeclareMathOperator{\e}{e}


\newcommand{\iu}{\mathrm{i}\mkern1mu}
\newcommand{\vphi}{{\varphi}}

\newcommand{\ob}{\overline}

\newcommand\restr[2]{{
\left.\kern-\nulldelimiterspace 
#1 
\right|_{#2} 
}}

\usepackage{caption}
\captionsetup[figure]{font=small}

\usepackage{authblk}
\author[1]{Tomáš Faikl}
\affil[1]{Department of Mathematics, Faculty of Nuclear Sciences and Physical Engineering,\protect\\ Czech Technical University in Prague}
\affil[ ]{\textit {tomas.faikl@fjfi.cvut.cz}}
\setcounter{Maxaffil}{0}

\title{Spectral analysis of metamaterials in curved manifolds}

\hypersetup{%
    linkcolor=red,          
    citecolor=blue,        
    urlcolor=cyan        
}

\begin{document}
\maketitle
\abstract{
Negative-index metamaterials possess a negative refractive index and thus
present an interesting substance for designing uncommon optical effects such as
invisibility cloaking. This paper deals with operators encountered in an
operator-theoretic description of metamaterials. First, we introduce an
indefinite Laplacian and consider it on a compact tubular neighbourhood in
constantly curved compact two-dimensional Riemannian ambient manifolds, with
Euclidean rectangle in $\mathbb{R}^2$ being present as a special case. As this
operator is not semi-bounded, standard form-theoretic methods cannot be
applied. We show that this operator is (essentially) self-adjoint via
separation of variables and find its spectral characteristics. We also provide
a new method for obtaining alternative definition of the self-adjoint operator
in non-critical case via a generalized form representation theorem. The main
motivation is existence of essential spectrum in bounded domains.}


\section{Introduction}
Metamaterials are artificially created materials that exhibit properties not found in natural substances. These exotic properties are usually limited to a certain electro-magnetic frequency range and they typically arise as a result metamaterial's internal structure, which typically features elements smaller than the respective wavelength. The concept of metamaterials was first theoretically explored by Viktor Veselago in 1967~\cite{veselago1967electrodynamics}, who demonstrated that materials with negative electric permittivity $\epsilon<0$ and magnetic permeability $\mu<0$ exhibit a negative refractive index. Maxwell's equations adapted for metamaterials reveal that the Poynting vector, which indicates the direction of energy flow, points opposite to the wave propagation direction. This counter-intuitive behavior is responsible for the remarkable properties of metamaterials. Interesting applications of the phenomena include effects such as superlensing~\cite{pendry2000negative} for super-resolution microscopy, metamaterial object cloaking~\cite{milton2006cloaking} --- effectively hiding the object from outer observers and reversed Doppler effect or reversed Cherenkov radiation.  Such materials have appeared in physical experiments since 1999, when J. Pendry and his team~\cite{pendry1999magnetism} provided practical designs for metamaterials with negative permittivity and permeability. 

This paper aims to provide a rigorous mathematical framework for understanding the spectral theory of the interface between conventional materials and metamaterials on two-dimensional surfaces in case of rectangular domains. The specific choice of domain is motivated by difficulties of handling sharp corners in existing literature on the subject and by effect of the surface curvature on the results. By focusing on operators involved in description of the interface between these materials, we seek to elucidate the underlying physical principles via spectral theory in various geometric configurations.

We will entertain the quasi-static approximation to Maxwell equations for a scalar electric potential. In this framework, the electric and magnetic fields are no longer dependent on the counterpart's field time derivatives. This way, the problems for electric and magnetic field separate. In the following, we will work only with the electric field
\begin{equation}
    \ddiv \vec D =\rho, \qquad \rot \vec E=0.
\end{equation}
These equations are the Gauss and Faraday law without presence of magnetic field. From Poincaré lemma, we see that the electric field $\vec E=-\grad \phi$ can be described by an electric potential $\phi$. Combining it with relation for homogeneous material $\vec D=\epsilon \vec E$, we obtain equation for the potential $\phi$:
\begin{equation}\label{eq:diveqn}
    -\ddiv(\epsilon\grad  \phi)=\rho.
\end{equation}

Such expressions make appearance in various areas of physics. The main motivation of this work is the metamaterial cloaking phenomenon in electromagnetism, for mathematical-oriented survey of recent progress, see~\cite{nguyen2017negative} and~\cite{felbacq2017metamaterials} for general information on metamaterials. Regarding connections to metamaterial cloaking, we refer the reader to a nice mathematical summary in~\cite{cacciapuoti2019self, nguyen2019cloaking, li2015quasi, ammari2013spectral, milton2006cloaking, milton2002theory}. Our model on curved two-dimensional surfaces can be viewed as a result of electromagnetism theory in curved two-dimensional background.
Also, the model can be viewed as a Hamiltonian of quantum particle in nanostructures with effectively negative mass, as in~\cite{znojil2012schrodinger}, optionally constraining the particle to a curved surface.

Regarding the mathematical justification of negative permitivity and/or permeability appearing in Maxwell equations and in this model, these parameters are negative only effectively --- they are negative in a sense of homogenisation, i.e. only when electromagnetic waves have wavelength much greater than typical distance of the metamaterial structure. See references~\cite{bouchitte2009homogenization,bouchitte2010homogenization,felbacq2005theory,lamacz2013effective,lamacz2016negative} for split-ring resonators and bulk dielectric inclusions to achieve effectively negative parameters also near resonant effects in the media in both $\Omega\subset\mathbb{R}^2$ and $\Omega\subset\mathbb{R}^3$.

Our model takes place on a general surface realized as a two-dimensional Riemannian manifold $\mathcal{M}$. We then define a bounded rectangle, as tubular $(a,b)$-neighbourhood of curve $\Gamma$, on this surface denoted by $\Omega$. It can be illustrated as a non-symmetric rectangle and we refer the reader to Figure~\ref{img:manifold-fermi}. The coordinates $(x_1,x_2)\equiv(x,y)$ are called Fermi or geodesic parallel coordinates and details are provided in Section~\ref{section:setting}. 
For $a$, $b$, $c>0$, define the domains
\begin{equation}\label{eq:def-rectangle}
    \Omega_0:=(-b,a)\times(0,c)\equiv J_1\times J_2, \quad \Omega_+ := (0,a)\times J_2, \quad \Omega_- := (-b,0)\times J_2
\end{equation}
in $\mathbb{R}^2$ and we let $\mathcal{C}:=\{0\}\times J_2$ be the interface between $\Omega_\pm$. 
\begin{figure}
    \centering
    \includegraphics[width=1\linewidth]{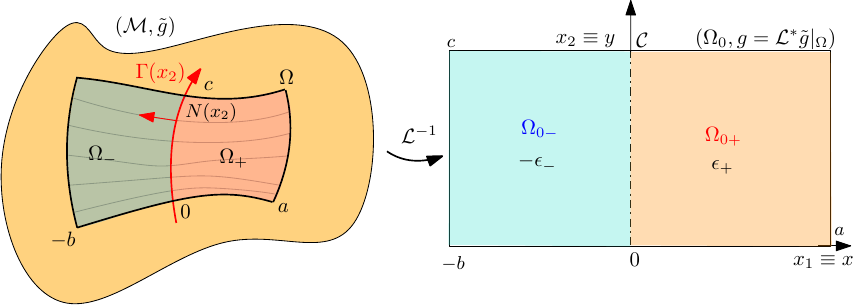}
    \caption{Curve $\Gamma$ on a Riemannian manifold $(\mathcal{M}, \tilde g)$ and its tubular neighbourhood $\Omega$. At every point of $\Gamma$, there exists a geodesic perpendicular to $\Gamma$ which is used to construct a rectangle on the manifold. The tubular neighbourhood $\Omega$ is diffeomorphic to a rectangle $\Omega_0$ in Fermi coordinates $x_1,x_2$ (on the right) with induced diagonal metric $g$. Details are provided in Section~\ref{section:setting}.}
    \label{img:manifold-fermi}
\end{figure}
Now, define an indefinite Laplacian on this \emph{rectangle}. The action of the indefinite Laplacian is formally given by 
\begin{align}\label{eq:divgrad-epsilon}
    \begin{split}
        A&=-\ddiv_g \left(\epsilon \nabla \right),\\
        \epsilon(x)&=
        \begin{cases}
            \epsilon_+ \quad &x \in \Omega_+,\\
            - \epsilon_- \quad &x \in \Omega_-,\\
        \end{cases}
        \quad\epsilon_\pm>0.
    \end{split}
\end{align}
with piecewise constant function $\epsilon$ representing jump in permitivity of material-metamaterial transition and $\ddiv_g$ being divergence associated to metric $g$. Let us naturally identify $L^2$ spaces on corresponding subdomains with measure induced by the metric, $L^2(\Omega_0,g)\simeq L^2(\Omega_+,g) \oplus L^2(\Omega_-,g)$. The differential expression~\eqref{eq:divgrad-epsilon} can be alternatively given for sufficiently differentiable function $\psi\in L^2(\Omega_0,g)$ in terms of the Laplace-Beltrami operator $\Delta_g$ associated to metric $g$.

For the moment, we restrict ourselves to the special case of constantly-curved manifolds -- even though we provide generalizations of some key results also to richer geometric domains and non-constant curvatures later. Fix a constant Gaussian curvature $K\in\mathbb{R}$ and let the metric $g$ correspond to $K$. Define an intermediate operator $\dot A_K: \dom \dot A_K \subset L^2(\Omega_0,g)\to L^2(\Omega_0,g)$ by
\begin{equation}
\label{eq:def-dotA_K}
    \begin{aligned}
        &{\hskip13em\relax}\dot A_K \begin{pmatrix}\psi_+\\\psi_-\end{pmatrix}=\begin{pmatrix}-\epsilon_+ \Delta_g \psi_+\\\epsilon_- \Delta_g \psi_-\end{pmatrix},\\
        &\dom \dot A_K:= \left\{
        \begin{array}{l|l}
         &\psi_\pm\mid_{\pp \Omega_0}=0, \\
        \psi=\begin{pmatrix}\psi_+\\\psi_-\end{pmatrix} \in H^2(\Omega_+,g)\oplus H^2(\Omega_-,g) &\psi_+ = \psi_- \ \mathrm{on\ } \mathcal{C}\\
        &\epsilon_+ \pp_x\psi_+ = -\epsilon_- \pp_x\psi_- \ \mathrm{on\ } \mathcal{C}
        \end{array}
    \right\}
    \end{aligned}
\end{equation}
where the main peculiarity is the sign-changing derivative on the interface. These precise interface conditions are needed in order for the differential expression to be well-defined. As a byproduct, it also corresponds to interface conditions for electric potential in Maxwell theory.

This paper concerns itself with finding self-adjoint Dirichlet realisation $A_K$ of the intermediate operator $\dot A_K$ along with its essential spectrum. Mathematically, the operator considered is symmetric, although it does not possess ellipticity, nor is semi-bounded and hence, standard form-theoretic methods do not apply directly\footnote{Although some representation theory of indefinite quadratic forms is developed in~\cite{grubivsic2013representation}.}. For further discussion, it is beneficial to define \emph{contrast} $\kappa>0$:
\begin{equation}\label{eq:def-contrast}
    \kappa := \frac{\epsilon_+}{\epsilon_-},
\end{equation}
as the results depend crucially on the \emph{criticality}, or \emph{non-criticality} of the contrast. For the rectangle~\eqref{eq:def-rectangle} it is the value of $\kappa=1$ which is \emph{critical} in some sense defined below and this case is usually much harder to reason about.

The mathematical problem of properly defining such operators has an enduring history along with some recent activity in the subject. 
It is known from work by Behrndt and Krejčiřík~\cite{behrndt2014indefinite} that the indefinite Laplacian on a rectangle in flat underlying space $\mathbb{R}^2$, in our notation $A_0$, with $a=b$, has zero as an infinitely-degenerate eigenvalue exactly when the contrast $\kappa$ is \emph{critical}, i.e. $\kappa=1$. Otherwise, the essential spectrum is empty. That is an unusual effect on bounded domain caused by a domain transmission condition on the interface. In their paper, they find its self-adjoint extension by employing separation of variables and Krein-von Neumann extensions. It was found that the functions in domain of the self-adjoint realisation of $\dot A_0$ do not belong to any local Sobolev space $H^s$, $s>0$.

One of the first conducted mathematical research of properties of such indefinite Laplacians for \emph{non-critical} contrast was published in 1999~\cite{bonnet1999analyse}. There, the authors consider the problem in $\mathbb{R}^2$, although in a \emph{different domain}:
\begin{align}
  \begin{split}
    \dom A&=\left\{u\in H_0^1(\Omega): \mathrm{div} \left(\epsilon \nabla u\right)\in L^2(\Omega)\right\},\\
    Au=-\mathrm{div} \left(\epsilon\nabla u\right)&,\quad \forall u\in\dom A,\qquad \epsilon(x):=
    \begin{cases}
      \epsilon_+,&x\in\Omega_+,\\
      -\epsilon_-,&x\in\Omega_-,
    \end{cases}
  \end{split}
\end{align}
for $\epsilon_\pm>0$, $\kappa\not=1$ such that $\Omega=\Omega_+\cup\,\Omega_-$, boundary $\Sigma$ of $\Omega$ sufficiently regular, boundary $\Gamma_-$ of $\Omega_-$ Lipschitz continuous and $\Sigma\cap\Gamma_-=\emptyset$. For this non-critical contrast, the resulting operator is self-adjoint, has a compact resolvent and its eigenvalues are accumulating to $\pm\infty$. It is also of interest that when interface $\Gamma$ is not smooth, for example when there is a right-angle \emph{corner} on $\Gamma$, then the results extend to values of contrast $\kappa:=\frac{\epsilon_+}{\epsilon_-}$ which do not belong to some interval containing the critical contrast of 1. For values of contrast $\kappa$ inside the critical interval, the operator $A$ is not self-adjoint.

Complementary results for \emph{smooth} $\Omega$ and $\Omega_\pm\in\mathbb{R}^n$ with \emph{smooth} interface $\Gamma$ is discussed and solved in~\cite{cacciapuoti2019self} by~Cacciapuoti, Pankrashkin and Posilicano via method of boundary triplets. In two dimensions, 0 is in the essential spectrum whenever $\kappa=1$. In higher dimensions, it is an intricate effect of the geometry of the interface $\Gamma$. Although the 1-dimensional case is fundamentally different, another paper~\cite{hussein2014sign} deals with quantum graph networks and as a special case, it contains the situation $\Omega\subset \mathbb{R}$. It is found that the operator has empty essential spectrum regardless of contrast $\kappa$.

Now, we bring forth the main results of the paper. None of the results depend on the curvature in a crucial way.
One technique of defining the operator is by separation of variables on the rectangle and decomposing $\dot A_K$ into an (infinite) sum of transversal one-dimensional operators $A_K^n$ with compact resolvent, $n\in\mathbb{N}$, acting in $L^2(J_1)$. This technique allows us to define the self-adjoint extension for arbitrary contrast $\kappa>0$.
\begin{theorem}\label{theorem:intro-essential}
    Let $K\in\mathbb{R}$. Operator $\dot A_{K}$ given in~\eqref{eq:def-dotA_K} is essentially self-adjoint. Denote $A_K:=\overline{\dot A_K}$ its closure. Then $A_K$ satisfies the following:
    \begin{enumerate}
        \item its eigenfunctions form a complete orthonormal set in $L^2(\Omega_0, g)$,
        \item $\sigma(A_K)=\cup_{n\in\mathbb{N}}\ \sigma\left(A_K^n\right)$.
        \item operator $A_K$ can be expressed as a direct sum of transversal one-dimensional self-adjoint operators with compact resolvent $A_K^n$, $n\in\mathbb{N}$, using formula
            \begin{equation}
                A_K = \bigoplus_{n=1}^\infty \left(\mathbb{I}\otimes A_K^n\right) P_n,
            \end{equation}
            where $P_n: L^2(\Omega_0,g)\to L^2(\Omega_0,g)$ are rank-one operators; for $\Psi\in L^2(\Omega_0,g)$
            \begin{equation}
                (P_n\Psi)(x,y):=\left(\phi_m, \Psi(x,\cdot)\right)_{L^2(J_2)}\phi_n(y),\quad \phi_n(y):=\sqrt{\frac{2}{c}}\sin\left(\frac{n\pi}{c}y\right).
            \end{equation}
    \end{enumerate}
\end{theorem}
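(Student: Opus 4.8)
The plan is to reduce the PDE operator on the rectangle $\Omega_0$ to a family of ODE operators via the Fourier sine basis $\{\phi_n\}$ in the transversal variable $y$, and then to leverage the standard fact that a countable orthogonal sum of self-adjoint operators is self-adjoint. The key structural point, to be established first, is that the metric $g$ in Fermi coordinates on a constantly-curved surface has the warped-product form $g = \mathrm{d}x^2 + f_K(x)^2\,\mathrm{d}y^2$ for an explicit function $f_K$ (so the metric does \emph{not} depend on $y$). This is what makes separation of variables work: writing $\Delta_g$ in these coordinates, the $y$-part is a multiple of $\partial_y^2$ with an $x$-dependent but $y$-independent coefficient, so the Dirichlet eigenfunctions $\phi_n(y)=\sqrt{2/c}\,\sin(n\pi y/c)$ of $-\partial_y^2$ on $J_2$ diagonalize that piece. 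Consequently, for $\psi$ in the (core) domain of $\dot A_K$, expanding $\psi(x,y)=\sum_n u_n(x)\phi_n(y)$ turns the eigenvalue/PDE problem into the decoupled system $A_K^n u_n$, where $A_K^n$ is the transversal $1$-D indefinite operator acting in $L^2(J_1)$ (with the appropriate weight from $f_K$) with Dirichlet conditions at $-b,a$ and the sign-changing transmission conditions $u_n(0^+)=u_n(0^-)$, $\epsilon_+ u_n'(0^+)=-\epsilon_- u_n'(0^-)$ at the interface.

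Second, I would quote (from the earlier part of the paper, where these one-dimensional operators are analyzed) that each $A_K^n$ is self-adjoint on its natural domain in $L^2(J_1)$ and has compact resolvent, hence a complete orthonormal basis of eigenfunctions $\{w^n_k\}_{k}$ in $L^2(J_1)$. Then the products $w^n_k(x)\phi_n(y)$ form a complete orthonormal set in $L^2(\Omega_0,g)\cong L^2(J_1,f_K)\otimes L^2(J_2)$ — completeness because $\{\phi_n\}$ is complete in $L^2(J_2)$ and each $\{w^n_k\}_k$ is complete in the $x$-factor. Each such product lies in $\dom\dot A_K$ (it is smooth up to the boundary away from $\mathcal C$, vanishes on $\partial\Omega_0$, and satisfies the transmission conditions because $w^n_k$ does and $\phi_n$ is just a multiplier), and $\dot A_K(w^n_k\phi_n)=\lambda^n_k\,w^n_k\phi_n$. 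So $\dot A_K$ has a complete orthonormal system of eigenfunctions; a symmetric operator with a complete orthonormal basis of eigenfunctions is essentially self-adjoint, and its closure $A_K:=\overline{\dot A_K}$ is the unique self-adjoint extension with those eigenpairs. This gives item~(1).

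For item~(3), I would define the rank-one projections $P_n$ as in the statement and verify $\sum_n P_n = \mathbb I$ strongly (Parseval for $\{\phi_n\}$ in the $y$-variable, applied fiberwise), that the $P_n$ are mutually orthogonal, and that each $P_n$ reduces $A_K$. On the range of $P_n$, which is isometrically $L^2(J_1,f_K)\otimes\{\phi_n\}$, the operator $A_K$ acts as $A_K^n\otimes\mathbb I$; assembling these gives $A_K=\bigoplus_n (\mathbb I\otimes A_K^n)P_n$, with the self-adjointness of the direct sum following from self-adjointness of each summand on a core (the span of the $w^n_k$), which is exactly the content just proved. Item~(2) is then immediate: $\sigma(A_K)$ of an orthogonal direct sum is the closure of the union of the $\sigma(A_K^n)$, but each $\sigma(A_K^n)$ is discrete and one checks — using the explicit asymptotics of the eigenvalues $\lambda^n_k$ in $n$ and $k$ obtained earlier — that the union is already closed (no new accumulation points are introduced by taking the union over $n$), so $\sigma(A_K)=\bigcup_n\sigma(A_K^n)$.

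\medskip

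\textbf{Main obstacle.} The genuinely delicate step is the rigorous passage from the PDE to the decoupled ODEs, i.e.\ showing that expanding in $\{\phi_n\}$ really does respect the operator \emph{including} the transmission conditions and the $H^2$-regularity, and — conversely — that no self-adjoint extension is lost by this decomposition (that is, that $\dot A_K$ is essentially self-adjoint rather than merely having a self-adjoint extension obtained fiber-by-fiber). Concretely, one must argue that $\ker(\dot A_K^{*}\mp \iu)$ is trivial: a defect element, being orthogonal to every eigenfunction $w^n_k\phi_n$, must vanish, which again reduces to completeness of the product system. Controlling this cleanly — rather than through the more laborious von Neumann deficiency-index bookkeeping used in the flat critical case of \cite{behrndt2014indefinite} — is where the warped-product structure of $g$ and the $y$-independence of the metric do all the work, so the bulk of the effort is in making the "separation of variables is exact here" claim watertight.
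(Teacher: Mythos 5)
Your proposal follows essentially the same route as the paper: separate variables with the Dirichlet sine basis $\{\phi_n\}$ (possible because the Fermi metric is $y$-independent), quote self-adjointness and compact resolvent of the transversal operators $A_K^n$ (the paper gets this by the unitary transform $U_{\pm1}$ reducing $A_{\pm1}^n$ to the flat operator $A_0^0$ plus a bounded real potential, with self-adjointness of $A_0^0$ taken from the one-dimensional literature), observe that the products $w^n_k\otimes\phi_n$ are eigenfunctions of $\dot A_K$ forming a complete orthonormal set in $L^2(\Omega_0,g)$, and invoke the criterion that a symmetric operator with a complete orthonormal set of eigenvectors in its domain is essentially self-adjoint, with spectrum the closure of the eigenvalue set; the direct-sum formula is then assembled exactly as you describe. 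Items (1) and (3) and the essential self-adjointness are therefore handled the same way as in the paper, and your remarks about why the decomposition respects the transmission conditions are in line with the paper's Lemma on the decomposition of $\dot A_K$.

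The one step that does not hold as you state it is in item (2): your claim that the union $\bigcup_n\sigma(A_K^n)$ is already closed, ``no new accumulation points being introduced,'' fails precisely in the critical case $\kappa=1$ with $a\neq b$ (e.g.\ $K=0$), where for each $n$ the relevant eigenvalue $\lambda_n$ is nonzero but $\lambda_n\to 0$, so $0$ is an accumulation point of the union that belongs to no $\sigma(A_K^n)$, while $0\in\sigma_{\mathrm{ess}}(A_K)$ by the singular-sequence construction of the critical-contrast section. What the completeness-of-eigenfunctions argument actually yields --- and what the paper's proof establishes --- is $\sigma(A_K)=\overline{\bigcup_n\sigma(A_K^n)}$; the closure cannot be dropped in general (the theorem as printed shares this imprecision, and only the closure version is provable). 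Apart from this point, your argument is sound and matches the paper's.
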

It is of interest that Theorem~\ref{theorem:intro-essential} can be generalized to situation where the Gaussian curvature is not constant as long as some mild regularity condition on $K(x)$ is satisfied.

The following Theorem quantifies an uncommon effect for differential operators defined on a bounded domain.
\begin{theorem}
    Let $K\in\mathbb{R}$. Then $\kappa=1$ implies $0\in\sigma_\mathrm{ess}(A_K)$. In manifold with zero Gaussian curvature $K=0$, we have precisely $\sigma_\mathrm{ess}(A_{K=0})=\{0\}$ and there is an exponential decay of the smallest (in absolute value) eigenvalue of $A_K^n$, $\lambda_n=o\left(\exp\left(-\frac{n\pi}{c}\min\{a,b\}\right)\right)$.
\end{theorem}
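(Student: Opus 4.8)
The plan is to route everything through the fibre decomposition of Theorem~\ref{theorem:intro-essential}. Since $A_K=\bigoplus_{n}(\mathbb I\otimes A_K^n)P_n$ is an orthogonal direct sum of self-adjoint operators with compact resolvent, $\sigma_{\mathrm{ess}}(A_K)$ is exactly the set of real accumulation points of $\bigcup_{n}\sigma(A_K^n)$ --- equivalently, the $\lambda$ that either lie in $\sigma(A_K^n)$ for infinitely many $n$ or are limits of eigenvalues drawn from infinitely many distinct fibres. So the whole statement reduces to locating $\sigma(A_K^n)$ for $n$ large, with $\nu_n:=\tfrac{n\pi}{c}\to\infty$ as the large parameter.

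First I would write down the characteristic (Weyl) function of $A_K^n$. At spectral parameter $\mu$ the fibre equation on each of $\Omega_\pm$ is a regular Sturm--Liouville equation; taking the solution obeying the Dirichlet condition at $x=a$, resp.\ $x=-b$, and forming the interface logarithmic derivatives $m_\pm(\mu)$, the matching conditions $u_+=u_-$ and $\epsilon_+\partial_x u_+=-\epsilon_-\partial_x u_-$ on $\mathcal C$ collapse to the scalar equation $F_n(\mu):=\kappa\,m_+(\mu)+m_-(\mu)=0$, whose real zeros are exactly $\sigma(A_K^n)$, each simple. The key point is that at $\mu=0$ this equation is explicitly solvable: the Liouville substitution $x\mapsto s(x):=\int_0^x f(t)^{-1}\,dt$ (with $f$ the off-diagonal metric coefficient, $f\equiv1$ for $K=0$) turns $\Delta_g$ on the $n$-th transverse mode into $f^{-2}(\partial_s^2-\nu_n^2)$, so the null mode on $\Omega_\pm$ vanishing at the outer boundary is $\sinh\!\bigl(\nu_n(\ell_\pm\mp s(x))\bigr)$, where $\ell_+:=\int_0^a f^{-1}$, $\ell_-:=\int_{-b}^0 f^{-1}$ are the geodesic half-lengths; hence $m_\pm(0)=\mp\tfrac{\nu_n}{f(0)}\coth(\nu_n\ell_\pm)$ \emph{exactly}, and for $\kappa=1$
\begin{equation*}
F_n(0)=\frac{\nu_n}{f(0)}\bigl(\coth(\nu_n\ell_-)-\coth(\nu_n\ell_+)\bigr)=\frac{\nu_n}{f(0)}\,\frac{\sinh\!\bigl(\nu_n(\ell_+-\ell_-)\bigr)}{\sinh(\nu_n\ell_+)\sinh(\nu_n\ell_-)}=O\!\bigl(\nu_n\,{\e}^{-2\nu_n\min\{\ell_+,\ell_-\}}\bigr),
\end{equation*}
which is exponentially small. (An alternative route to $0\in\sigma_{\mathrm{ess}}(A_K)$ bypassing $F_n$: glue the two $\sinh$-null-modes to match across $\mathcal C$, correct the derivative jump by a function supported in an $O(\nu_n^{-1})$-neighbourhood of the interface, and bound the Rayleigh quotient of $A_K^n$ on the result by $O(\nu_n^2{\e}^{-2\nu_n\min\{\ell_+,\ell_-\}})$.)

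To turn this into an eigenvalue near $0$ I would use that each $m_\pm$ is increasing and analytic on the interval around $0$ bounded by its nearest poles, whose length is $\asymp\nu_n^2$; differentiating $m_\pm(\mu)=\partial_x u_\pm(0)/u_\pm(0)$ in $\mu$ and using the Wronskian identity gives the positive integral representation of $\partial_\mu m_\pm(0)$, which evaluated against the explicit null modes yields $F_n'(0)=\tfrac{f(0)}{\epsilon_-\nu_n}\bigl(1+o(1)\bigr)$, so $\nu_nF_n'(0)$ is bounded below and $\nu_n\partial_\mu F_n$ stays of unit order on a fixed neighbourhood of $0$. Hence, for all large $n$, $F_n$ has a unique zero $\lambda_n$ near $0$ with $|\lambda_n|=|F_n(0)|/F_n'(0)\,(1+o(1))=O\!\bigl(\nu_n^2{\e}^{-2\nu_n\min\{\ell_+,\ell_-\}}\bigr)\to 0$, and for $n$ large $\lambda_n$ is the only eigenvalue of $A_K^n$ in an interval of length $\asymp\nu_n^2$, hence the one of smallest absolute value; when $\ell_+=\ell_-$ (in particular $a=b$, any $K$) one simply has $F_n(0)=0$, so $0\in\sigma(A_K^n)$ for every $n$. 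In either subcase $0$ is an accumulation point of $\bigcup_n\sigma(A_K^n)$ (respectively an infinitely degenerate eigenvalue), so $0\in\sigma_{\mathrm{ess}}(A_K)$ --- the first assertion, for all $K\in\mathbb R$ --- and for $K=0$, where $\ell_\pm=a,b$, we get $\lambda_n=O(\nu_n^2{\e}^{-2\nu_n\min\{a,b\}})=o\!\bigl({\e}^{-\frac{n\pi}{c}\min\{a,b\}}\bigr)$, the stated decay. To finish the $K=0$ statement I must exclude essential spectrum away from $0$: on the pole-free interval around $0$ one has $m_+(\mu)=-q_+\coth(q_+a)$, $m_-(\mu)=q_-\coth(q_-b)$ with $q_\pm:=\sqrt{\nu_n^2\mp\mu/\epsilon_\pm}$, and since $q\mapsto q\coth(qL)$ is increasing while $\mu\mapsto q_+$ is decreasing and $\mu\mapsto q_-$ increasing, $F_n$ is strictly increasing there, running from $-\infty$ to $+\infty$, so it has exactly one zero, namely $\lambda_n$; this interval contains $(-\epsilon_-\nu_n^2,\epsilon_+\nu_n^2)$, and every eigenvalue of $A_0^n$ outside it has $|\mu|\ge\min\{\epsilon_+,\epsilon_-\}\nu_n^2$. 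Thus, given $R>0$, for all $n$ with $\min\{\epsilon_+,\epsilon_-\}\nu_n^2>R$ one has $\sigma(A_0^n)\cap[-R,R]\subseteq\{\lambda_n\}$; since only finitely many fibres besides these meet $[-R,R]$ and $\lambda_n\to0$, the set $\bigcup_n\sigma(A_0^n)$ has in $[-R,R]$ no accumulation point other than $0$, and letting $R\to\infty$ gives $\sigma_{\mathrm{ess}}(A_0)=\{0\}$.

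The hard part is the uniform-in-$n$ control of $F_n$ near $\mu=0$: one needs both the exponential smallness of $F_n(0)$ --- clean here only because the Liouville substitution makes the null-mode equation constant-coefficient, so no WKB errors intervene --- and two-sided bounds on $\nu_n\,\partial_\mu F_n$ over a neighbourhood of $0$ that does \emph{not} shrink with $n$, so that the implicit-function/monotonicity step pinning down $\lambda_n$ is legitimate; the Herglotz monotonicity of $m_\pm$ and the Wronskian representation of $\partial_\mu m_\pm$ are exactly what make this work, and in the curved case one must additionally check that the Liouville reduction goes through verbatim and that $f$ is bounded above and below on the compact interval (so that $\ell_\pm$ and $f(0)$ are comparable to constants). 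A minor, essentially bookkeeping, point is to handle in one stroke the regime $\ell_+=\ell_-$ (a genuine infinitely-degenerate eigenvalue $0$) and the regime $\ell_+\neq\ell_-$ (an accumulation of simple eigenvalues $\lambda_n\to0$), both of which put $0$ in $\sigma_{\mathrm{ess}}(A_K)$.
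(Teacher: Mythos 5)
Your proposal is sound, but it reaches the theorem by a genuinely different route than the paper, at least for the first assertion. For $\kappa=1\Rightarrow 0\in\sigma_{\mathrm{ess}}(A_K)$ the paper never locates eigenvalues of the fibres: in Theorem~\ref{theorem:singular-seq} it builds explicit singular (Weyl) sequences from the very null solutions you use (your $\sinh\bigl(\nu_n(\ell_\pm\mp s(x))\bigr)$ are the paper's $f_n^{(K)}$ after the substitution $s=\arctanh\bigl(\sin(\sqrt K x)\bigr)/\sqrt K$), symmetrized about $\mathcal C$ and cut off, and only estimates $\|A_K\varphi_n\|/\|\varphi_n\|\to0$; this needs no control of a characteristic function, at the price of treating $K=0,\pm1$ case by case and yielding no eigenvalue information for $K\neq0$. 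Your route instead shows that each fibre $A_K^n$ has an eigenvalue $\lambda_n$ with $|\lambda_n|=O\bigl(\nu_n^2 e^{-2\nu_n\min\{\ell_+,\ell_-\}}\bigr)$ (or $\lambda_n=0$ when $\ell_+=\ell_-$), which is stronger and handles all constant curvatures uniformly through the Liouville substitution; the cost is precisely the uniform-in-$n$ lower bound on $\nu_n\partial_\mu F_n$ near $0$ that you flag, which is standard (Herglotz monotonicity of $m_\pm$ plus the Wronskian formula with weight $f^2$, $f$ bounded above and below) but genuine extra work the paper avoids. For the $K=0$ part your argument is the paper's in $m$-function clothing: Lemma~\ref{lemma:B0-root-uniqueness} proves uniqueness of the root in the window $\bigl(-\epsilon(\tfrac{n\pi}{c})^2,\epsilon(\tfrac{n\pi}{c})^2\bigr)$ by monotonicity of a reciprocal difference, Proposition~\ref{prop:B0-0-spectrum} extracts $\lambda_n\to0$ by a three-step asymptotic analysis of the explicit $\tanh$ characteristic equation, and Corollary~\ref{cor:convergence-rate} refines this to the stated rate, whereas you obtain the same conclusions (indeed the sharper order $\nu_n^2 e^{-2\nu_n\min\{a,b\}}$) in one stroke from $|\lambda_n|\approx|F_n(0)|/F_n'(0)$; your localisation of all other fibre eigenvalues outside $\bigl(-\epsilon_-\nu_n^2,\epsilon_+\nu_n^2\bigr)$ and the finite count on $[-R,R]$ is the same mechanism by which the paper concludes that $0$ is the only accumulation point.

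Two small caveats. First, your parenthetical alternative via a ``Rayleigh quotient'' is the one misstep: $A_K^n$ is not semibounded, so a small quadratic-form quotient $|\langle\psi,A_K^n\psi\rangle|/\|\psi\|^2$ does not place spectrum near $0$ (consider a two-level operator with eigenvalues $\pm1$ and the symmetric test vector); what is needed, and what the paper estimates, is the residual $\|A_K^n\psi\|/\|\psi\|$. Since your main argument does not use that aside, correctness is unaffected. Second, to make the step ``$\sigma(A_0^n)\cap[-R,R]\subseteq\{\lambda_n\}$'' airtight you should note explicitly that in the pole-free window eigenvalues of the fibre coincide with zeros of $F_n$ (no common Dirichlet zero of $u_\pm(0)$ can occur there) and that the exceptional energies $\pm\epsilon_\pm\nu_n^2$ of Proposition~\ref{prop:2d-spectrum} lie at distance of order $\nu_n^2$ from $0$ --- routine checks, but worth a line.
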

The detailed analysis below implies that for critical contrast and zero curvature $K=0$, $\lambda=0$ is an eigenvalue precisely when $a=b$ (then it is an infinitely-degenerate eigenvalue).

    \begin{example}
        For critical contrast, $a=b$ and $K=0$, we provide eigenfunctions from the eigensubspace corresponding to the infinitely-degenerate eigenvalue $\lambda=0$. The eigenfunctions are precisely those given by $f_n(x,y)=\mathcal{N}_n\cdot\phi_n(y)\psi_n(x)$ where $\phi_n(y)=\sqrt \frac{2}{c}\sin(\frac{n\pi}{c}y)$,
            \begin{equation}
                \psi_n(x)=
                \begin{cases}
                    \sinh\left(\frac{n\pi}{c}(a-x)\right), & x\ge0 \\
                    \sinh\left(\frac{n\pi}{c}(a+x)\right), & x<0
                \end{cases}
                %
            \end{equation}
            and $\mathcal{N}_n$ are norm constants so that $||f_n||=1$ for all $n\in\mathbb{N}$. These functions are localized near the interface for large $n\in\mathbb{N}$.
    \end{example}
    For critical contrast and $a\not=b$, 0 is not an eigenvalue, although it is an accumulation point of the spectrum.

As a next step, we provide --- only for non-critical contrast --- a new technique for defining the self-adjoint extension via forms. In the end, we will show that the two resulting operators (the other one due to separation of variables) coincide for non-critical contrast. The technique used is similar to $\mathbb{T}$-coercivity used in~\cite{Bonnet_Ben_Dhia_2010,dhia2012t} to provide a well-posedness for a similar problem and more general domains in $\mathbb{R}^2$ and $\mathbb{R}^3$. Combining $\mathbb{T}$-coercivity and arguments involving smooth partitions of unity, the authors there derive criteria for well-posedness of a similar problem in terms of quantities ($\epsilon$) in the neighbourhood of the boundary.

We apply a similar, although new approach using a generalized Lax-Milgram theorem of Almog and Helffer~\cite{almog2015spectrum} with a mirroring and a cut-off technique. Note that $\mathbb{T}$-coercivity is a special case of the referenced representation theorem.
\begin{theorem}\label{theorem:intro-forms}
    For contrast $\kappa\not = 1$, there is a unique self-adjoint operator $\mathcal{A}_K: \dom  \mathcal{A}_K \subset L^2(\Omega_0,g)\to L^2(\Omega_0,g)$, $\dot A_K\subset \mathcal{A}_K$, associated to the sesquilinear form $a$ given by
    \begin{equation}
        a(u,v)=:(u,\mathcal{A}_K v)_{L^2(\Omega_0,g)}, \quad u\in H_0^1(\Omega_0,g),\ v\in\dom \mathcal{A}_K\subset H_0^1(\Omega_0,g)
    \end{equation}
    with domain 
    \begin{equation}
        \dom  \mathcal{A}_K:= \left\{v\in H_0^1(\Omega_0,g): \Delta v_\pm \in L^2(\Omega_\pm,g),\ \restr{\left(\epsilon_+\partial_x v_+ +\epsilon_-\partial_x v_-\right)}{\mathcal{C}} =0\right\}
    \end{equation}
    where the interface condition is understood in the weak sense of 
    \begin{align}
        \begin{split}
            \Delta v_\pm\in &L^2(\Omega_\pm,g),\ \restr{\left(\epsilon_+\partial_x v_+ +\epsilon_-\partial_x v_-\right)}{\mathcal{C}} =0 \\
                &\vcentcolon\iff \forall u\in H_0^1(\Omega_0,g):\ \int_{\Omega_0} \overline{\nabla u}\epsilon\nabla v = - \int_{\Omega_0} \overline{u} \nabla\cdot\left(\epsilon\nabla v\right)
        \end{split}
    \end{align}
    and $\mathcal{A}_K$ has compact resolvent and $0\not\in\sigma\left(\mathcal{A}_K\right)$. Furthermore, for $\kappa\not=1$, we have $\mathcal{A}_K=A_K$ where $A_K$ is defined in Theorem~\ref{theorem:intro-essential}. 
\end{theorem}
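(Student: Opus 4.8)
The plan is to establish Theorem~\ref{theorem:intro-forms} in three stages: first construct the form $a$ and verify the hypotheses of the Almog--Helffer generalized Lax--Milgram theorem; second identify the operator $\mathcal{A}_K$ associated to $a$ and compute its domain; and third show $\mathcal{A}_K = A_K$ by comparing with the separation-of-variables construction of Theorem~\ref{theorem:intro-essential}. For the first stage, I would take $a(u,v) := \int_{\Omega_0}\overline{\nabla u}\,\epsilon\,\nabla v$ on $V := H_0^1(\Omega_0,g)$, which is bounded on $V\times V$ but not coercive because $\epsilon$ changes sign. The remedy is to exhibit a bijective bounded operator $\mathbb{T}: V\to V$ (the ``$\mathbb{T}$-coercivity'' trick, realized here through the generalized Lax--Milgram framework) such that $|a(u,\mathbb{T}u)| \ge \alpha\|u\|_V^2$. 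The natural candidate is a mirroring map: reflect the $\Omega_-$ part across $\mathcal{C}$ into $\Omega_+$ (and vice versa) with a sign, composed with a cut-off $\chi$ that localizes the reflection near $\mathcal{C}$ so that $\mathbb{T}u$ stays in $H_0^1$ and agrees with $u$ away from the interface. Writing $\mathbb{T}u = u - 2\chi R u$ on the appropriate side and computing $a(u,\mathbb{T}u)$, the reflection is chosen so that the cross terms involving $\epsilon_+$ and $-\epsilon_-$ add up rather than cancel; the non-criticality $\kappa\neq1$ is exactly what makes the leading coefficient $\min\{\epsilon_+,\epsilon_-\}$-type bound survive after absorbing the cut-off error terms (which are lower-order, controlled by $\|\chi'\|_\infty$ on a compact neighbourhood, via a Poincaré/interpolation estimate). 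One must also check the adjoint (``inf-sup'') condition, but by symmetry of $a$ the same $\mathbb{T}$ (or its analogue on the other side) works. This yields a unique self-adjoint $\mathcal{A}_K$ with $0\notin\sigma(\mathcal{A}_K)$ (since $a$ is $\mathbb{T}$-coercive, hence invertible) and, because $V = H_0^1(\Omega_0,g)\hookrightarrow L^2(\Omega_0,g)$ compactly on a bounded domain, compact resolvent.

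For the second stage, I would unravel the definition of the operator associated to a sesquilinear form in this generalized setting: $v\in\dom\mathcal{A}_K$ iff $v\in V$ and $u\mapsto a(u,v)$ extends to a bounded functional on $L^2$, i.e. $a(u,v) = (u,f)_{L^2}$ for some $f =: \mathcal{A}_K v$. Integrating by parts formally on each of $\Omega_\pm$ gives $a(u,v) = -\int_{\Omega_0}\overline{u}\,\nabla\cdot(\epsilon\nabla v) + \text{boundary terms on }\mathcal{C}$; the outer boundary terms vanish since $u\in H_0^1$, and the interface terms collapse to $\int_{\mathcal{C}}\overline{u}\,(\epsilon_+\partial_x v_+ + \epsilon_-\partial_x v_-)$ (signs tracked carefully using outward normals of $\Omega_+$ and $\Omega_-$ pointing in opposite $x$-directions). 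Requiring this to represent an $L^2$ functional forces $\Delta v_\pm\in L^2(\Omega_\pm,g)$ and the weak Neumann-type jump condition $(\epsilon_+\partial_x v_+ + \epsilon_-\partial_x v_-)|_{\mathcal{C}} = 0$, precisely as stated, with the proviso that this interface condition has to be read in the weak/distributional sense spelled out in the theorem because $v_\pm$ need only be $H^1$ with $L^2$ Laplacian, not $H^2$.

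For the third stage, $\mathcal{A}_K = A_K$, I would argue by showing $\dot A_K \subset \mathcal{A}_K$: any $\psi\in\dom\dot A_K$ is $H^2$ on each piece, lies in $H_0^1$, satisfies $\psi_+=\psi_-$ on $\mathcal{C}$ and $\epsilon_+\partial_x\psi_+ = -\epsilon_-\partial_x\psi_-$ on $\mathcal{C}$, so $(\epsilon_+\partial_x\psi_+ + \epsilon_-\partial_x\psi_-)|_{\mathcal{C}} = 0$ and integration by parts gives $a(u,\psi) = (u, \dot A_K\psi)_{L^2}$ for all $u\in H_0^1$; hence $\psi\in\dom\mathcal{A}_K$ with $\mathcal{A}_K\psi = \dot A_K\psi$. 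Since $\mathcal{A}_K$ is self-adjoint, it contains the closure $A_K = \overline{\dot A_K}$, so $A_K \subset \mathcal{A}_K$. Then I would invoke that $A_K$ is already self-adjoint (Theorem~\ref{theorem:intro-essential}) together with the fact that a self-adjoint operator has no proper self-adjoint extension: $A_K = \mathcal{A}_K$. Alternatively, if one prefers not to quote maximality, compare resolvents at a non-real point, or note $0\notin\sigma(\mathcal{A}_K)$ while $0\notin\sigma(A_K)$ for $\kappa\neq1$ (from the second theorem of the excerpt, $0\in\sigma_{\mathrm{ess}}$ only at $\kappa=1$) so both resolvents at $0$ exist and the inclusion $A_K\subset\mathcal{A}_K$ plus bijectivity of both $A_K$ and $\mathcal{A}_K$ forces equality of domains.

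The main obstacle I anticipate is the first stage: designing the mirroring-plus-cut-off map $\mathbb{T}$ so that $a(u,\mathbb{T}u)$ is genuinely coercive. The reflection across $\mathcal{C}$ does not preserve the geometry unless $a=b$, so on a non-symmetric rectangle one reflects only into the overlapping strip $\{|x| < \min\{a,b\}\}$ and must cut off smoothly before reaching the shorter end; this introduces commutator terms $[\,\Delta,\chi\,]$-type errors that need to be dominated by the coercive main term. The quantitative point is that after the reflection the ``bad'' indefinite cross-term is converted into a definite one with constant controlled by how far $\kappa$ is from $1$, and one must verify that the cut-off errors, which scale like $\|\chi'\|_\infty^2$ times an $L^2$-norm on a fixed compact collar of $\mathcal{C}$, can be absorbed — this typically uses that on that collar the $H^1$ seminorm controls the $L^2$ norm up to the full $V$-norm via an interpolation/Young's inequality argument, and works precisely because we are not at the critical contrast. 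Getting the bookkeeping of signs in the interface integration by parts consistent with the opposite orientations of $\partial\Omega_+$ and $\partial\Omega_-$ along $\mathcal{C}$ is the other place where care is required.
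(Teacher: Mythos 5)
Your overall architecture (reflection composed with a cut-off, fed into the Almog--Helffer representation theorem; integration by parts on $\Omega_\pm$ to identify $\dom\mathcal{A}_K$ and the weak interface condition; the inclusion $\dot A_K\subset\mathcal{A}_K$ plus maximality of self-adjoint operators to get $\mathcal{A}_K=A_K$) is the paper's, and your stages two and three match its argument essentially verbatim. The gap is in stage one, where you assert the genuine $\mathbb{T}$-coercivity bound $|a(u,\mathbb{T}u)|\ge\alpha\|u\|_{H_0^1}^2$ with the cut-off error ``absorbed via a Poincaré/interpolation estimate \dots precisely because we are not at the critical contrast''. Quantify it: for $\kappa>1$ (the case $\kappa<1$ is symmetric via $T_2$), with $T_1u=u_+$ on $\Omega_+$ and $-u_-+2\xi Pu_+$ on $\Omega_-$, Cauchy--Schwarz and Young with parameters $\delta,\eta>0$ give the lower bound $\epsilon_-\bigl[(\kappa-\tfrac1\delta)\|\nabla u_+\|^2+(1-\delta-\eta)\|\nabla u_-\|^2-\tfrac{\|\nabla\xi\|_\infty^2}{\eta}\|u_+\|^2\bigr]$. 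Absorbing the last term by Poincaré requires $(\kappa-\tfrac1\delta)\,\eta>C_P\|\nabla\xi\|_\infty^2$ with $\eta<1-\delta$, i.e. $\sup_{\delta\in(1/\kappa,1)}(\kappa-\tfrac1\delta)(1-\delta)>C_P\|\nabla\xi\|_\infty^2$. The left-hand side is at most $(\kappa-1)^2/\kappa$ and tends to $0$ as $\kappa\to1$, while the right-hand side is a fixed positive constant of the geometry: the cut-off must drop from $1$ to $0$ inside a strip of width less than $\min\{a,b\}$, so $\|\nabla\xi\|_\infty$ cannot be made small. Hence direct absorption only covers contrasts sufficiently far from $1$, not all $\kappa\neq1$ --- the same kind of restriction the paper's concluding Note records for the cut-off-free variants.

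The paper closes exactly this gap by never asking $a$ itself to be $\mathbb{T}$-coercive: it complexifies, $b_t(u,v)=a(u,v)+\iu t(u,v)$, so that $|b_t(u,u)|\ge t\|u\|^2$ supplies the missing $L^2$ control of $\|u_+\|^2$ once $t$ is large; the two Almog--Helffer inequalities are then verified for $b_t$ with $\Phi_1=\Phi_2=\beta T_1$ (or $\beta T_2$), yielding a closed bijective $\mathcal{B}_t$ with compact resolvent, and only afterwards is $\mathcal{A}_K:=\mathcal{B}_t-\iu tI$ shown to be self-adjoint and $t$-independent via the conjugate-form part of the representation theorem. Without this shift (or a substitute, e.g. $\mathbb{T}$-coercivity up to a compact perturbation plus separate injectivity and self-adjointness arguments), your stage one does not deliver the theorem for all non-critical contrasts; and your inference that $0\notin\sigma(\mathcal{A}_K)$ ``since $a$ is $\mathbb{T}$-coercive, hence invertible'' loses its basis, since in the paper that spectral statement is not obtained from a coercivity bound for $a$ itself.
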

Theorem~\ref{theorem:intro-forms} can be generalized to non-rectangular geometries and non-constant curvatures, an example of which is in Figure~\ref{fig:form-generalization}.
\begin{figure}
    \centering
    \includegraphics[width=0.5\textwidth]{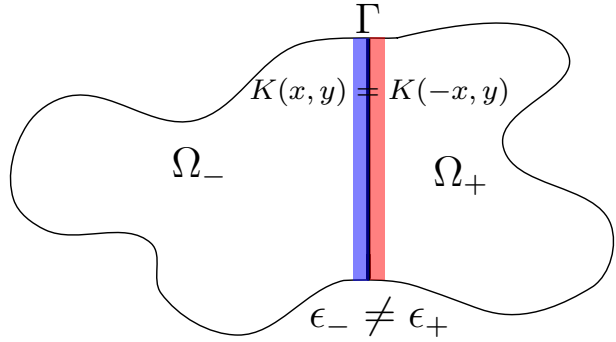}
    \caption{Form approach in non-critical case can be extended to non-constantly curved ambient manifold under assumption that there is a rectangular neighbourhood of the curve $\Gamma$ contained in $\Omega$ such that metric satisfies $g(x,y)=g(-x,y)$ or equivalently for curvature, $K(x,y)=K(-x,y)$ in the neighbourhood in Fermi coordinates.}
    \label{fig:form-generalization}
\end{figure}

\vspace{10pt}

The paper is organized as follows. In Section~\ref{section:setting}, we provide formal definitions of the underlying geometric and functional spaces and some basic results to gain a bit more insight into the problem. We proceed to Section~\ref{section:essential-self-adjoint} to provide means of defining the self-adjoint extension $A_K$ via separation of variables. Form representation for non-critical contrast is contained in Section~\ref{section:form-approach} and essential spectrum results are given in Section~\ref{section:critical-contrast}.

\subsection{Related work}
Beside the references mentioned in the introduction, in~\cite{dhia2007two} the authors explore well-posedness of system (for various bounded domains $\Omega$)
\begin{equation}
  \mathrm{div}\left(\frac{1}{\epsilon}\nabla u\right)+\omega^2\mu u=f \quad \mathrm{in}\ \Omega
\end{equation}
 with Dirichlet boundary conditions in $H^1_0(\Omega)$ for the case of sign-changing permitivity $\epsilon$ on the interface and source $f\in L^2(\Omega)$. The problem was reformulated in a variational approach and allowed to tackle non-constant permitivities $\epsilon_\pm$ and Lipschitz-regular interface $\Gamma$. In a following paper~\cite{Bonnet_Ben_Dhia_2010, dhia2012t}, the authors applied the framework of $\mathbb{T}$-coercivity and interface-localisation techniques. A large quantity of examples was provided for domains in both $\mathbb{R}^2$ and $\mathbb{R}^3$. Finally, similar results were derived for the full non-scalar Maxwell problem in~\cite{dhia2014t} in a time-harmonic case. Important observations are that the time-harmonic Maxwell problem can be fully solved in terms of scalar problems which highlights importance of studying the scalar problems introduced so far.

 \paragraph{Acknowledgements.} The author is grateful to David Krejčiřík, supervisor of his bachelor and diploma theses, for providing fruitful discussions and motivation. This article is the outcome and could not be finished without his help and support. Author is also grateful to Petr Siegl for ideas in the form approach. The author was supported by the \emph{EXPRO Grant No. 20-17749X} of the Czech Science Foundation.

\numberwithin{equation}{section}
\numberwithin{dummy}{section}

\section{Geometrical and functional problem setting}\label{section:setting}
\subsection{Geometrical setting}
We will use similar geometrical setting as in~\cite{krejcirik2010ptsymmetric}. In this section, we will make frequent use of Sobolev spaces on manifolds referenced in~\cite{hebey2000nonlinear} and Fermi (geodesic parallel) coordinates~\cite{hartman1964geodesic} in tubular neighbourhoods~\cite{gray2003tubes}.

We have defined the rectangular domain $\Omega_0\subset\mathbb{R}^2$ and other notions in~\eqref{eq:def-rectangle}. Overall, we have a disjoint union $\Omega_0=\Omega_{-}\cup\mathcal{C}\cup\Omega_{+}$. The metamaterial is located in $\Omega_{-}=(-b,0)\times J_2$ and material with positive permitivity is located in $\Omega_{+}=(0,a)\times J_2$. 

Consider a two-dimensional Riemannian manifold $\mathcal{M}$ and assume that its Gaussian curvature $K$ is continuous (which holds if $\mathcal{M}$ is $C^3$-smooth or is embedded into $\mathbb{R}^3$). Additionally, let $\Gamma: J_2\to\mathcal{M}$ be a $C^2$ curve parametrized by arc length. This curve $\Gamma$ will serve as the metamaterial interface.
Let us introduce a tubular neighbourhood $\Omega$ of curve $\Gamma$. In case of $a=b$, $\Omega$ can be seen as a set of points on $\mathcal{M}$ with geodesic distance less than $a$ from $\Gamma$. Construct a mapping $\mathcal{L}: \Omega_0\to\mathcal{M}$ given by
\begin{equation}\label{eq:manifold-expmap}
    \mathcal{L}(x_1,x_2) := \exp_{\Gamma(x_2)}\left(x_1N(x_2)\right),\quad (x_1,x_2)\in\Omega_0
\end{equation}
where $\exp_q$ is the exponential map of $\mathcal{M}$ at point~$q\in\mathcal{M}$ and $N(x_2)\in \textrm{T}_{\Gamma(x_2)}\mathcal{M}$ is a normal vector to curve $\Gamma$ in $x_2\in J_2$, an element of tangent space to manifold  $\mathcal{M}$. The coordinates are chosen so that $\mathcal{L}(\mathcal{C})=\Gamma$. Then 
\begin{equation}
    \Omega:=\mathcal{L}(\Omega_0)
    \label{eq:omega-on-manifold}
\end{equation}
and $(\Omega_0,g)$ is a Riemannian manifold with induced metric $g:=\mathcal{L}^*\restr{\tilde g}{\Omega}$ from $\mathcal{M}$. The coordinates $(x_1,x_2)\equiv(x,y)$ are called Fermi or geodesic parallel coordinates.

In the following text, $\mathcal{L}:\Omega_0\to\Omega$ will be assumed to be a diffeomorphism, although the condition can be weakened as seen below. The map $\mathcal{L}$ is always a diffeomorphism, provided $a$, $b$ are small enough. Set $\Omega$ can be parametrized via the geodesic parallel coordinates $(x_1,x_2)$. From Gauss lemma, it follows that
\begin{align}\label{eq:metric-g}
  \begin{split}
    g=
    \begin{pmatrix}
      1&0\\
      0&f^2
    \end{pmatrix}
  \end{split}
\end{align}
where $f$ is continuous and has continuous partial derivatives $\partial_1 f$,  $\partial_1 ^2 f$ satisfying the Jacobi equation
\begin{equation}
    \pp_1^2f + Kf=0 \quad \land \quad 
    \begin{cases}
        &f(0, \cdot)=1 \\
        &\pp_1f(0, \cdot)=-\kappa,
    \end{cases}
    \label{eq:Jacobi-curvature}
\end{equation}
where $K$ is Gaussian curvature at a point with local coordinates $(x_1,x_2)$ and $\kappa$ is the geodesic curvature of $\Gamma$. The solutions for constant Gaussian curvatures are
\begin{equation}
    f(x_1, x_2) =
    \begin{cases}
        \cos(\sqrt{K}x_1) - \frac{\kappa(x_2)}{\sqrt{K}} \sin(\sqrt{K}x_1) & \mathrm{ if }\ K~> 0, \\ 
        1- \kappa(x_2)\cdot x_1 & \mathrm{ if }\ K~= 0, \\
        \cosh(\sqrt{|K|}x_1) - \frac{\kappa(x_2)}{\sqrt{|K|}} \sinh(\sqrt{|K|}x_1) & \mathrm{ if }\ K~< 0,
    \end{cases}
\end{equation}
and from now on, we will assume that the geodesic curvature of $\Gamma$ is identically zero, i.e.
\begin{equation}
  \boxed{\mathrm{curve}\ \Gamma\ \mathrm{is\ a\ geodesic}.}
\end{equation}
A manifold $(\mathcal{M},g)$ with arbitrary $K\in\mathbb{R}$ is diffeomorphic to one with $K\in\left\{-1,0,1\right\}$. Hence, up to diffeomorphism, we can setup our problem in a $L^2(\Omega_0, g)$ space with measure $\dd \nu_g$ given by
\begin{equation}
    \dd\nu_g:=
    \begin{cases}
        \cos(x_1)\dd x_1 \dd x_2, & \mathrm{if} \ K=+1, \\
        \dd x_1 \dd x_2, &\mathrm{if} \ K=0, \\
        \cosh(x_1) \dd x_1 \dd x_2, & \mathrm{if} \ K=-1,
    \end{cases}
\end{equation}
and in order for $g$ to be a positive-definite metric, we have to restrict the dimensions of the rectangle in case $K=+1$ to 
\begin{equation}
  \boxed{K=+1\implies a,b\in\left(0,\frac{\pi}{2}\right).}
\end{equation}

The assumption of $\mathcal{L}$ being a diffemorphism also means that geodesics on $\Omega$ do not intersect as then $\mathcal{L}$ would not be bijective. This assumption can be weakened by only requiring $f,f^{-1}\in L^\infty(\Omega_0,g)$. 


As the domain $\Omega$ can be covered by a single choice of (Fermi) global coordinates, we will sometimes omit the index of $\Omega_0$ and write only $\Omega$ and we will often use notation for the coordinates on $\Omega_0$ as $x_1\equiv x$,  $x_2\equiv y$.

\begin{figure}
  \centering
  \begin{subfigure}{0.32\textwidth}
      \captionsetup{skip=25pt}
    \centering
    \includegraphics[width=\textwidth,trim=1.5cm 3cm 1cm 0, clip]{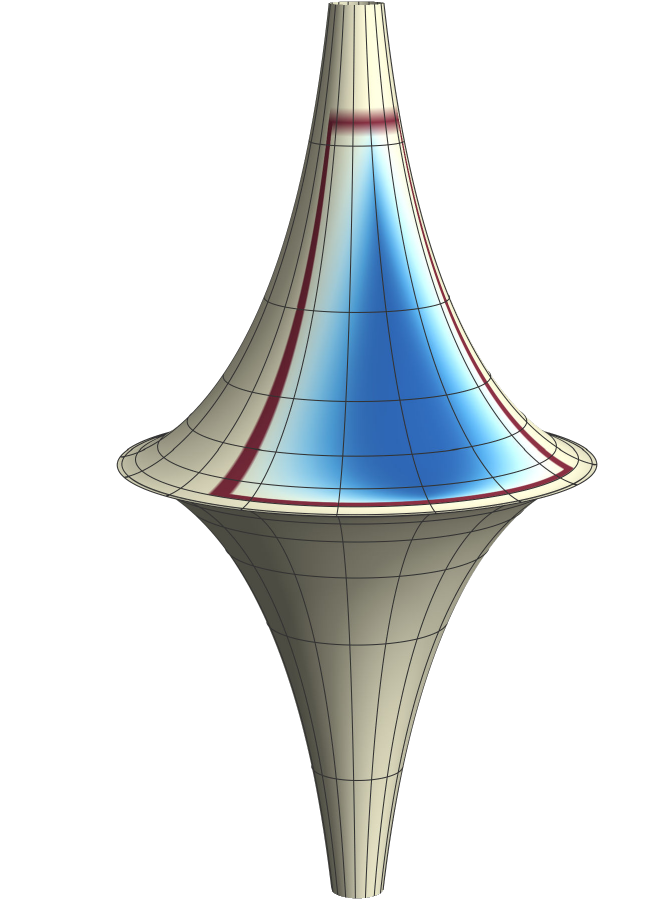}
    \caption{A \emph{pseudosphere}, $K=-1$.}
    \label{fig:subfig_a}
  \end{subfigure}
  \hfill
  \begin{subfigure}{0.3\textwidth}
      \captionsetup{skip=25pt}
    \centering
    \includegraphics[width=\textwidth,trim=1.5cm 0 .5cm 0, clip]{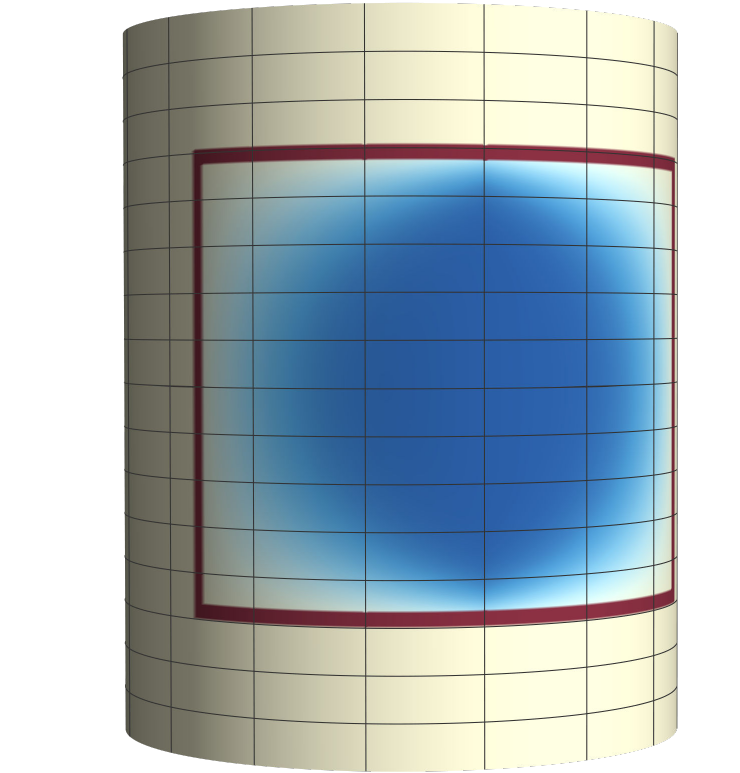}
    \caption{A \emph{cylinder}, $K=0$.}
    \label{fig:subfig_b}
  \end{subfigure}
  \hfill
  \begin{subfigure}{0.35\textwidth}
  \captionsetup{skip=25pt}
    \centering
    \includegraphics[width=\textwidth,trim=0 0 0 0, clip]{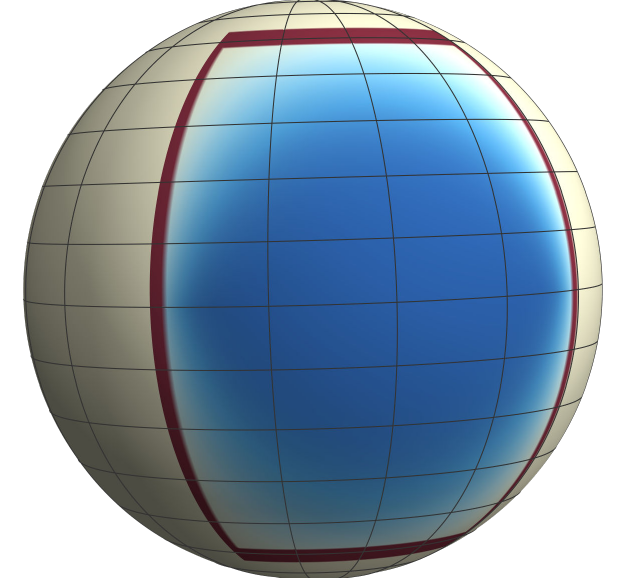}
    \caption{A \emph{sphere}, $K=+1$.}
    \label{fig:subfig_c}
  \end{subfigure}

  \caption{Rectangles as defined by construction~\eqref{eq:manifold-expmap} depicted on various manifolds with constant curvature. The boundary of the rectangle is red and the inside is blue. In fact, the blue color represents values of eigenfunction corresponding to mode $m=1$. }
  \label{fig:full_figure}
\end{figure}

\begin{remark}
    Note that the metric~\eqref{eq:metric-g} has the structure of the metric
    of any two-dimensional Riemannian manifold expressed in
    geodesic polar coordinates. This is consistent because
    our geodesic parallel coordinates are a generalision;
    indeed, the reference point (pole) of the polar coordinates
    is replaced by a submanifold (here a curve).
    The independence of the Jacobian on $x_2$ then corresponds
    to the independence of the Jacobian in polar coordinates
    on the angular variable for rotationally symmetric manifolds.
\end{remark}

\subsection{Hilbert space on manifold \texorpdfstring{$(\Omega_0,g)$}{(Omega0, g)}}
The space in which we will examine the differential operator is that of $L^2(\Omega_0,g)$, the Hilbert space of measurable functions defined on the Riemannian manifold $(\Omega_0.g)$ with the corresponding measure $\dd\nu_g=f\dd x_1\dd x_2$ induced by $g$ such that the norm $\|\cdot\|_g$ induced by the inner product
\begin{equation}
    (u,v)_g:=\int_{\Omega_0}\overline{u(x)}v(x)| \det g|^{\frac{1}{2}}\dd x=\int_{\Omega_0}\overline{u(x)}v(x) f(x)\dd x
\end{equation}
is finite. For $f,f^{-1}\in L^\infty(\Omega_0)$, this norm is equivalent to the standard $L^2(\Omega_0)$ norm. The Sobolev space
\begin{equation}
    W^{1,2}(\Omega_0,g):=\left\{\psi\in L^2(\Omega_0,g) \mid |\nabla\psi|_g^2:=\ob{\pp_i\psi}g^{ij}\pp_j\psi \in L^2(\Omega_0,g) \right\}
\end{equation}
can be identified with the usual Sobolev space $W^{1,2}(\Omega_0)$, frequently also denoted $H^1(\Omega_0)$. Note that, although this requirement is satisfied in our case, that the $W^{2,2}$-Sobolev space on manifold $(\Omega_0,g)$ can be identified with $W^{2,2}(\Omega_0)$ if
\begin{equation}
    \forall x_1\in J_1: \quad f(x_1,\cdot), f^{-1}(x_1,\cdot)\in W^{1,\infty}(J_2).
\end{equation}
For more information and references on the matter, see the original article~\cite{krejcirik2010ptsymmetric}.

\subsection{Indefinite Laplacian on a manifold}
Maxwell equations on the manifold lead to conceptually equivalent differential operator given in terms of $\mathrm{div}$ and $\mathrm{grad}$. In local coordinates on a manifold $(\Omega_0,g)$ we have the following identities for differentiable function $\psi$ and vector field $X$ in local coordinates $(x_1,x_2)$
\begin{align}
    (\grad \psi)^i &= (\dd\psi)^i = g^{ij}\pp_j \psi, \\
    \ddiv X &= \frac{1}{\sqrt{|g|}}\pp_i(\sqrt{|g|}X^i).
\end{align}

Differential expression $-\mathrm{div}(\epsilon \mathrm{grad})$ can be written as
\begin{equation}
    -\ddiv(\epsilon\grad\psi) 
    =
    -\frac{1}{f}\partial_1\left(\epsilon f \frac{\partial\psi}{\partial x^1}\right)
    - \frac{1}{f}\partial_2\left(\epsilon \frac{1}{f} \frac{\partial\psi}{\partial x^2}\right),
    \label{eq:laplace-manifold}
\end{equation}
and for piecewise constant $\epsilon$, it can be given as
  \begin{equation}
      -\ddiv\left(\epsilon\grad \begin{pmatrix}\psi_+\\\psi_-\end{pmatrix}\right) = \begin{pmatrix}-\epsilon_+\Delta_g\psi_+\\\epsilon_-\Delta_g\psi_-\end{pmatrix}
  \end{equation}
  for $\psi\in H^2(\Omega_+,g)\oplus H^2(\Omega_-,g)$ and $\Delta_g$ is a Laplace-Beltrami operator.

  For a Riemannian manifold $(\mathcal{M},g)$ with constant Gaussian curvature, define an initial, restricted operator $\dot A_K: \dom \dot A_K \to L^2(\Omega,g)$ with $\dom \dot A_K$ considered as a subset $\dom \dot A_K\subset L^2(\Omega,g)$ given in~\eqref{eq:def-dotA_K}.
The operator can be written in a unified fashion using expression valid for $K\in\mathbb{R}$ as
\begin{equation}
    \dot A_K=
    \begin{pmatrix}\epsilon_+\\-\epsilon_-\end{pmatrix} \cdot \left(
    - \frac{1}{\cos^2(\sqrt{K}x_1)}\pp_2^2 -\pp_1^2  + \sqrt{K}\tan(\sqrt{K}x_1)\pp_1 \right).
\end{equation}

\subsection{Basic properties of the indefinite Laplacian}\label{section:basic-properties}
The differential operator $\dot A_K$ defined in~\eqref{eq:def-dotA_K} is unbounded and its numerical range is not bounded from below neither from above.
\begin{lemma}
The operator $\dot A_K$ is symmetric for any $K\in\mathbb{R}$, $\kappa>0$.
\end{lemma}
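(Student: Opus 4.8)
The plan is to verify the defining identity $(\dot A_K\psi,\varphi)_g = (\psi,\dot A_K\varphi)_g$ for all $\psi,\varphi\in\dom\dot A_K$ directly by integration by parts, tracking the boundary and interface terms carefully. Since the measure is $\dd\nu_g = f\,\dd x_1\,\dd x_2$ and $\Delta_g$ is the Laplace--Beltrami operator, Green's formula on each subdomain $\Omega_\pm$ gives
\begin{equation}
    \int_{\Omega_\pm}\overline{(-\Delta_g\psi_\pm)}\,\varphi_\pm\,\dd\nu_g - \int_{\Omega_\pm}\overline{\psi_\pm}\,(-\Delta_g\varphi_\pm)\,\dd\nu_g = \int_{\pp\Omega_\pm}\left(\overline{\psi_\pm}\,\pp_n\varphi_\pm - \overline{\pp_n\psi_\pm}\,\varphi_\pm\right)\dd S_g,
\end{equation}
valid for $H^2$ functions. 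I would first note that on $\pp\Omega_0$ all four functions vanish (Dirichlet condition in $\dom\dot A_K$), so the only surviving contribution to the boundary integrals is over the common interface $\mathcal C=\{0\}\times J_2$. On $\mathcal C$ the outward normal of $\Omega_+$ is $-\pp_x$ and that of $\Omega_-$ is $+\pp_x$, and the induced surface measure from both sides is $f(0,y)\,\dd y = \dd y$ since $f(0,\cdot)=1$.

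Next I would assemble the weighted sum dictated by the block structure of $\dot A_K$: multiply the $\Omega_+$ identity by $\epsilon_+$ and the $\Omega_-$ identity by $-\epsilon_-$ (matching the signs in~\eqref{eq:def-dotA_K}) and add. The bulk terms combine to give exactly $(\dot A_K\psi,\varphi)_g-(\psi,\dot A_K\varphi)_g$, while the interface terms collapse to
\begin{equation}
    \int_{J_2}\left[\overline{\psi_+}\,(\epsilon_+\pp_x\varphi_+) - \overline{(\epsilon_+\pp_x\psi_+)}\,\varphi_+ - \overline{\psi_-}\,(-\epsilon_-\pp_x\varphi_-) + \overline{(-\epsilon_-\pp_x\psi_-)}\,\varphi_-\right]\dd y.
\end{equation}
Here the sign bookkeeping is the delicate point: the $-\epsilon_-$ prefactor on the $\Omega_-$ block and the $+\pp_x$ outward normal on $\Omega_-$ conspire so that the two interface conditions in $\dom\dot A_K$, namely $\psi_+=\psi_-$ and $\epsilon_+\pp_x\psi_+=-\epsilon_-\pp_x\psi_-$ on $\mathcal C$ (and likewise for $\varphi$), make each term cancel in pairs. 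Using the continuity condition, $\overline{\psi_+}=\overline{\psi_-}$ and $\varphi_+=\varphi_-$ on $\mathcal C$; using the flux condition, $\epsilon_+\pp_x\varphi_+ = -\epsilon_-\pp_x\varphi_-$ and $\epsilon_+\pp_x\psi_+ = -\epsilon_-\pp_x\psi_-$. Substituting, the four-term bracket becomes $\overline{\psi_+}(\epsilon_+\pp_x\varphi_+) - \overline{(\epsilon_+\pp_x\psi_+)}\varphi_+ + \overline{\psi_+}(\epsilon_+\pp_x\varphi_+) - \overline{(\epsilon_+\pp_x\psi_+)}\varphi_+$ — wait, one must instead check it vanishes; rewriting the $\Omega_-$ contributions via the conditions turns them into the negatives of the $\Omega_+$ contributions, so the integrand is identically zero. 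Hence the right-hand side vanishes and symmetry follows.

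The main obstacle I anticipate is purely the sign and normal-direction bookkeeping at the interface — it is easy to get a spurious factor of $-1$ somewhere, and the whole point of the unusual transmission conditions in~\eqref{eq:def-dotA_K} (the sign-changing Neumann-type matching $\epsilon_+\pp_x\psi_+=-\epsilon_-\pp_x\psi_-$) is precisely that they are the ones that make this cancellation work for the \emph{indefinite} operator. A secondary technical point is justifying Green's formula: since $\psi_\pm,\varphi_\pm\in H^2(\Omega_\pm,g)$ and $\Omega_\pm$ are Lipschitz (in fact smooth) bounded domains with the equivalent-to-Euclidean metric (as $f,f^{-1}\in L^\infty$), the trace theorem gives well-defined $H^{1/2}$ boundary traces of the functions and $H^{-1/2}$ traces of the normal derivatives, so all integrals above are meaningful; alternatively one first does the computation for smooth functions and passes to the limit by density. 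I would also remark that the curvature enters only through $f$, and since $f(0,\cdot)=1$ the interface computation is curvature-independent, which is why the statement holds uniformly for all $K\in\mathbb{R}$.
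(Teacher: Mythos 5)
Your proof is correct and takes essentially the same route as the paper: both apply the divergence theorem on $\Omega_+$ and $\Omega_-$ separately and use the transmission conditions to annihilate the interface contribution, the paper merely stopping after a single integration by parts and reading symmetry off the manifestly Hermitian expression $\int_{\Omega_0}\epsilon\left(\nabla u,\nabla v\right)_g\,\mathrm{d}\nu_g$, whereas you use Green's second identity and cancel the interface terms in pairs using both conditions. The only blemish is an overall sign flip in your displayed interface bracket relative to your stated outward-normal conventions, which is immaterial since the bracket is shown to vanish identically.
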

\begin{proof}
    The proof proceeds by an application of the divergence theorem~\cite{lee2012smooth} on $\Omega_{\pm}$ respectively and by using density of $C_0^\infty(\Omega)$ in $\dom\dot A_K$. Ultimately, let $u,v\in\dom\dot A_K$ and
     \begin{align}
        \begin{split}
            (u,\dot A_K v)_{L^2(\Omega_0,g)} &= - \int_{\Omega_0} \overline{u} \ddiv_g \left(\epsilon \nabla v \right) \dd \nu_g \\
                &= - \int_{\Omega_+} \overline{u} \epsilon_+\ddiv_g \left(\nabla v \right) \dd \nu_g + \int_{\Omega_-} \overline{u} \epsilon_-\ddiv_g \left( \nabla v \right) \dd \nu_g\\
                &= \int_{\Omega_0} \epsilon\left(\nabla u, \nabla v\right)_g \dd \nu_g + \int_{\mathcal{C}} \overline{u} \left(\epsilon_+ \pd{v_+}{x} + \epsilon_- \pd{v_-}{x}\right)\dd \tilde \nu_g\\
                &= \int_{\Omega_0} \epsilon\left(\nabla u, \nabla v\right)_g \dd \nu_g = \left(\dot A_K u,v\right)_{L^2(\Omega_0,g)}
        \end{split}
    \end{align}
    where $\left(\cdot,\cdot\right)_g$ is a pairing of 1-forms by metric $g$, $\dd \tilde\nu_g$ is the induced volume form on $\mathcal{C}$ (here $\dd\tilde\nu_g=\dd y$) and the interface integral is zero due to the interface condition in $\dom\dot A_K$. The traces here are well-defined due to $H^2$-regularity on $\Omega_{\pm}$ and $H^1_0$-regularity on $\Omega_0$.
\end{proof}

The following lemma enables us to restrict possible curvatures to $K\in\{-1,0,1\}$.
\begin{lemma}
  Let $(\Omega_0,g)$ be a Riemannian manifold with constant Gaussian curvature $K\not = 0$ and $\dot A$ the operator~\eqref{eq:def-dotA_K}. Then, there exists a homothetic transformation  $\tau: (\Omega_0, g)\to(\tilde\Omega_0,\tilde g)$ of domain $\Omega_0=(-b,a)\times(0,c)$ onto $\tilde\Omega_0=|K|^{\frac{1}{2}}\Omega_0$ so that $g$ corresponds to the original curvature $K$ and $\tilde g$ corresponds to a Gaussian curvature of $\sgn K$ via the Jacobi equation,
    \begin{equation}
    \dot A_K\psi \restr{}{(x_1,x_2)}=\left(|K|\tilde{\dot A}_{(\sgn K)}\tilde\psi\right) \circ \restr{\tau}{(x_1,x_2)},
    \end{equation}
    with tildes denoting object on $(\tilde\Omega_0,\tilde g)$. The eigenvalues of the operator on the original, respectively transformed domain, satisfies
    \begin{equation}
        \lambda\left[\dot A_K(a,b,c) \right]= |K| \cdot \lambda\left[\dot A_{\sgn K}\left(\sqrt{K}a,\sqrt{K}b,\sqrt{K}c\right)\right].
    \end{equation}
\label{lemma:curvature-scaling}
\end{lemma}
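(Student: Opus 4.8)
The plan is to build the homothety explicitly and track how each ingredient of $\dot A_K$ transforms. First I would define $\tau:(\Omega_0,g)\to(\tilde\Omega_0,\tilde g)$ by $\tau(x_1,x_2)=(|K|^{1/2}x_1,|K|^{1/2}x_2)=:(\tilde x_1,\tilde x_2)$, so that $\tilde\Omega_0=|K|^{1/2}\Omega_0=(-\sqrt{K}b,\sqrt{K}a)\times(0,\sqrt{K}c)$ in the case $K>0$ (and the analogous product for $K<0$). The point is to check that the metric $g$ with Jacobian $f$ solving the Jacobi equation \eqref{eq:Jacobi-curvature} for curvature $K$ is pulled back from the metric $\tilde g$ whose Jacobian $\tilde f$ solves the Jacobi equation for curvature $\sgn K$. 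Since $\Gamma$ is a geodesic, $\kappa\equiv 0$ and the solutions reduce to $f(x_1,\cdot)=\cos(\sqrt K x_1)$ (resp. $\cosh$), so $f(x_1,x_2)=\tilde f(|K|^{1/2}x_1,|K|^{1/2}x_2)$ is a direct trigonometric identity; then $g=\tau^*\tilde g$ up to the conformal factor $|K|$ in the pure scaling of the flat part, which I would record carefully.

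Next I would compute the action of $\dot A_K$ under this change of variables. Using the unified expression
\begin{equation}
    \dot A_K=\begin{pmatrix}\epsilon_+\\-\epsilon_-\end{pmatrix}\cdot\left(-\frac{1}{\cos^2(\sqrt K x_1)}\pp_2^2-\pp_1^2+\sqrt K\tan(\sqrt K x_1)\pp_1\right),
\end{equation}
the chain rule $\pp_{x_i}=|K|^{1/2}\pp_{\tilde x_i}$ produces a factor $|K|$ in front of each second-derivative term, while $\sqrt K\tan(\sqrt K x_1)\,\pp_1 = |K|\cdot\sqrt{\sgn K}\tan(\sqrt{\sgn K}\,\tilde x_1)\,\pp_{\tilde 1}$ after substituting $\sqrt K x_1=\sqrt{\sgn K}\,\tilde x_1$; similarly $\cos^2(\sqrt K x_1)=\cos^2(\sqrt{\sgn K}\,\tilde x_1)$. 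Collecting the common factor $|K|$ gives exactly $\dot A_K\psi|_{(x_1,x_2)}=\bigl(|K|\,\tilde{\dot A}_{\sgn K}\tilde\psi\bigr)\circ\tau|_{(x_1,x_2)}$ where $\tilde\psi:=\psi\circ\tau^{-1}$. I would also note that $\tau$ maps $\dom\dot A_K$ bijectively onto $\dom\tilde{\dot A}_{\sgn K}$, since the Dirichlet condition, continuity across $\mathcal C$, and the weighted-derivative interface condition $\epsilon_+\pp_x\psi_+=-\epsilon_-\pp_x\psi_-$ are all preserved (the $\epsilon_\pm$ are untouched and the common $|K|^{1/2}$ scaling of $\pp_x$ cancels from both sides of the interface relation).

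Finally, for the eigenvalue statement: if $\dot A_K\psi=\lambda\psi$, then applying the intertwining identity and substituting $\tilde\psi=\psi\circ\tau^{-1}$ gives $\tilde{\dot A}_{\sgn K}\tilde\psi=|K|^{-1}\lambda\,\tilde\psi$, so $\lambda\in\sigma(\dot A_K(a,b,c))$ iff $|K|^{-1}\lambda\in\sigma(\tilde{\dot A}_{\sgn K}(\sqrt K a,\sqrt K b,\sqrt K c))$, which rearranges to the claimed relation $\lambda[\dot A_K(a,b,c)]=|K|\cdot\lambda[\dot A_{\sgn K}(\sqrt K a,\sqrt K b,\sqrt K c)]$; an honest statement would add that $\tau$ being a diffeomorphism makes $\psi\mapsto\tilde\psi$ a bijection between eigenspaces (preserving multiplicities), after checking it is bounded with bounded inverse on $L^2$, which follows from $f,f^{-1}\in L^\infty$. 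The only mildly delicate point I anticipate is bookkeeping the distinction between the pullback of the metric (which carries the conformal factor $|K|$) and the pure coordinate dilation — i.e. making sure the factor $|K|$ in the operator identity is attributed correctly and not double-counted against the $|K|^{1/2}$ in $\tau$; everything else is a routine chain-rule computation.
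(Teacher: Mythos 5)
Your proposal is correct and follows essentially the same route as the paper: the dilation $\tau(x_1,x_2)=(|K|^{1/2}x_1,|K|^{1/2}x_2)$ combined with a chain-rule computation on the unified expression for $\dot A_K$, extracting the common factor $|K|$ and reading off the eigenvalue scaling. Your additional bookkeeping (verifying the Jacobi-equation/metric correspondence, that $\tau$ maps $\dom\dot A_K$ onto $\dom\tilde{\dot A}_{\sgn K}$, and that $\psi\mapsto\tilde\psi$ is a bijection of eigenspaces) is a welcome refinement of details the paper leaves implicit, but it is not a different method.
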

\begin{proof}
    Let us assume that $K>0$, the negative case is analogous. The Gaussian curvature of a 2-manifold is invariant to surface reparametrisation. Multiply both sides of the Jacobi equation~\eqref{eq:Jacobi-curvature} by $\frac{1}{K}$ and obtain
    \begin{equation}
        \frac{1}{K}\pp_1^2 f + f = 0.
    \end{equation}
    To find a relation to the operator defined on a manifold with curvature $\sgn K=1$, we need to find a coordinate transformation such that $(x_1,x_2)\mapsto(\tilde x_1, \tilde x_2)$ satisfying $\frac{1}{K}\pp_1^2 f = \tilde\pp_1^2 \tilde f$ (tilde denotes expressions in transformed coordinates). It is straightforward to check that the linear transform $\tau(x_1,x_2)=\left(\sqrt{K} x_1,\sqrt{K} x_2\right)=(\tilde x_1,\tilde x_2)$ satisfies
    \begin{align}
        \begin{split}
            A_K \psi \mid_{(x_1,x_2)}&= \left(-\pp_2^2 - \frac{1}{\cos^2(\sqrt K~x_2)}\pp_1^2 + \sqrt K\tan(\sqrt K~x_2)\pp_2\right) \psi \mid_{(x_1,x_2)} \\
                     &= \left(\left(-K\tilde\pp_2^2-\frac{K}{\cos(\tilde x_2)}\tilde\pp_1^2 + K\tan(\tilde x_2)\tilde\pp_2\right)\tilde\psi\right) \circ \tau \mid_{(x_1,x_2)} \\
                     &= K\left(\tilde{ \dot A}_1 \tilde \psi\right) \circ \tau \mid_{(x_1, x_2)}.
        \end{split}
    \end{align}
\end{proof}
The proof of the following proposition will be completed in the Section~\ref{section:essential-self-adjoint} by showing that the eigenfunctions of $\dot A_K$ form a complete orthonormal set in $L^2(\Omega_0,g)$. This justifies absence of any other eigevalues that those found by using the ansatz in the beginning of the following proof.
\begin{proposition}
    Let $\dot A_0$ be the operator defined in~\eqref{eq:def-dotA_K} for zero Gaussian curvature $K=0$. The operator satisfies the following:
    \begin{enumerate}
        \item Its eigenvalues are
            \begin{equation}
                \sigma_{\mathrm{point}}(\dot A_0) = \sigma_\infty \cup \sigma_0
            \end{equation}
            where
            \begin{equation}
                \sigma_\infty = \bigcup_{n=1}^{\infty}\bigcup_{m=-\infty}^{\infty}\{\lambda_{n,m}\},
            \end{equation}
            and $\{\lambda_{n,m}\}_{m\in\mathbb{Z}}$ for a fixed $n\in\mathbb{N}$ is a non-decreasing sequence of roots $\lambda\in\mathbb{R}$ of characteristic equation
            \begin{equation}
                \frac{\tan\left(a\sqrt{\frac{\lambda}{\epsilon_+}- (\frac{n\pi}{c})^2}\right)}{\epsilon_+\sqrt{\frac{\lambda}{\epsilon_+}- (\frac{n\pi}{c})^2}} = \frac{\tanh\left(b\sqrt{\frac{\lambda}{\epsilon_-} + (\frac{n\pi}{c})^2 }\right) }{\epsilon_- \sqrt{\frac{\lambda}{\epsilon_-} + (\frac{n\pi}{c})^2}}
                \label{eq:2d-nonhomog-eigenvalue}
            \end{equation}
            for $\lambda \ne \pm\epsilon_\pm(\frac{n\pi}{c})^2$ and we allow negative terms under square roots. If $\lambda=0$ is a solution, put $\lambda_{n,0}=0$, otherwise leave the index $\lambda_{n,0}$ undefined and put $\lambda_{n, \pm1}$ as the smallest positive, respectively largest negative, solution of the characteristic equation.

            There are eigenfunctions $f_{n,m}(x,y)$ corresponding to the eigenvalues as $f_{n,m}(x,y)=\mathcal{N}_{n,m}\cdot\phi_n(y)\psi_{n,m}(x)$ where $\phi_n(y)=\sqrt \frac{2}{c}\sin(\frac{n\pi}{c}y)$ and
            \begin{equation}
                \psi_{n,m}(x)=
                \begin{cases}
                    \sinh\left(\sqrt{\frac{\lambda_{n,m}}{\epsilon_-} + (\frac{n\pi}{c})^2}\ b\right)\sin\left(\sqrt{\frac{\lambda_{n,m}}{\epsilon_+} - (\frac{n\pi}{c})^2}\ (a-x)\right), & x\ge0 \\
                    \sin\left(\sqrt{\frac{\lambda_{n,m}}{\epsilon_+} - (\frac{n\pi}{c})^2}\ a\right)\sinh\left(\sqrt{\frac{\lambda_{n,m}}{\epsilon_-} + (\frac{n\pi}{c})^2}\ (b+x)\right), & x<0
                \end{cases}
                \label{eq:2d-nonhomog-eigenvector}
            \end{equation}
            and $\mathcal{N}_{n,m}$ is a norm constant.
            Further, define a set containing zero, one or two elements
            \begin{equation*}
                \sigma_0 \subseteq \{\nu_+, \nu_-\},
            \end{equation*}
            where $\nu_+=\epsilon_+(\frac{n\pi}{c})^2$ for at most one such $n\in \mathbb{N}$, which satisfies the equation
            \begin{equation}\label{eq:2d-nonhomog-singular-plus}
                \tanh\left(\frac{n\pi}{c}\sqrt{1+\frac{\epsilon_-}{\epsilon_+}}b\right)=\frac{\epsilon_-}{\epsilon_+}\frac{n\pi}{c}\sqrt{1+\frac{\epsilon_-}{\epsilon_+}}a,
            \end{equation}
            if it exists. Similarly, $\nu_-=-\epsilon_-(\frac{n\pi}{c})^2$ for at most one such $n\in \mathbb{N}$ satisfying
            \begin{equation}\label{eq:2d-nonhomog-singular-minus}
                \tanh\left(\frac{n\pi}{c}\sqrt{1+\frac{\epsilon_+}{\epsilon_-}}b\right)=\frac{\epsilon_+}{\epsilon_-}\frac{n\pi}{c}\sqrt{1+\frac{\epsilon_+}{\epsilon_-}}b,
            \end{equation}
            if it exists. It is possible that $\sigma_0$ is empty if neither of the solutions exist for a given parameter choice of $a,b,c$. To these potential eigenvalues $\nu_\pm$, the corresponding eigenfunction are given as $f_{\nu_\pm}(x,y)=\mathcal{N}_{n,m}^\pm\cdot\phi_n(y)\phi_{\nu_\pm}(x)$, where $n$ is the same as in definition of $\nu_\pm$ and $\phi_n$ is the normalised sine as specified above,
            \begin{equation}
                \phi_{\nu_+}(x) =
                \begin{cases}
                    (x-a)\sinh\left(b\frac{n\pi}{c}\sqrt{1+\frac{\epsilon_+}{\epsilon_-}}\right), & x\ge0, \\
                    -a\sinh\left(\frac{n\pi}{c}\sqrt{1+\frac{\epsilon_+}{\epsilon_-}}(x+b)\right), & x<0,
                \end{cases}
            \end{equation}
            \begin{equation}
                \phi_{\nu_-}(x) =
                \begin{cases}
                    b\sinh\left(\frac{n\pi}{c}\sqrt{1+\frac{\epsilon_-}{\epsilon_+}}(a-x)\right), & x\ge0, \\
                    (x+b)\sinh\left(a\frac{n\pi}{c}\sqrt{1+\frac{\epsilon_-}{\epsilon_+}}\right), & x<0.
                \end{cases}
            \end{equation}
            and $\mathcal{N}_{n,m}^\pm$ are again norm constants.
        \item Eigenvalues accumulate to $\pm\infty$.
        \item When $\kappa\not=1$, characteristic equation has no finite accumulation points.
        \item Zero is an infinitely-degenerate eigenvalue if and only if $\kappa=1$ and $a=b$.
    \end{enumerate}
    \label{prop:2d-spectrum}
\end{proposition}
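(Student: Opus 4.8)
The plan is to prove the four assertions by reducing everything to a transversal--longitudinal separation of variables, as already foreshadowed by the ansatz $f_{n,m}(x,y)=\mathcal N_{n,m}\phi_n(y)\psi_{n,m}(x)$. First I would fix $n\in\mathbb N$, expand any eigenfunction in the complete orthonormal transversal basis $\{\phi_n\}_{n\in\mathbb N}$ of $L^2(J_2)$, and observe that on $\Omega_\pm$ with $K=0$ the action of $\dot A_0$ reduces to the ordinary differential operator $-\epsilon_+(\psi'' - (\tfrac{n\pi}{c})^2\psi)$ on $(0,a)$ and $+\epsilon_-(\psi'' - (\tfrac{n\pi}{c})^2\psi)$ on $(-b,0)$, with Dirichlet conditions at $x=a$ and $x=-b$ and the two interface conditions $\psi_+(0)=\psi_-(0)$, $\epsilon_+\psi_+'(0)=-\epsilon_-\psi_-'(0)$ at $x=0$. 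Solving each constant-coefficient ODE with its Dirichlet endpoint gives $\psi_+\propto$ (a hyperbolic or trigonometric sine of $a-x$) and $\psi_-\propto$ (a hyperbolic or trigonometric sine of $b+x$), depending on the sign of $\tfrac{\lambda}{\epsilon_\pm}\mp(\tfrac{n\pi}{c})^2$; substituting into the two interface conditions and eliminating the amplitude ratio yields exactly the characteristic equation~\eqref{eq:2d-nonhomog-eigenvalue}, valid wherever the arguments are nonzero. This establishes item~1 for the generic eigenvalues $\lambda_{n,m}$ and the associated eigenfunctions~\eqref{eq:2d-nonhomog-eigenvector}.

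Next I would treat separately the degenerate cases $\lambda=\pm\epsilon_\pm(\tfrac{n\pi}{c})^2$, where one of the square roots vanishes and a hyperbolic/trigonometric sine must be replaced by a linear function: redoing the ODE solution with a linear profile on the degenerate side and a hyperbolic sine on the other, then imposing the interface conditions, produces the transcendental conditions~\eqref{eq:2d-nonhomog-singular-plus} and~\eqref{eq:2d-nonhomog-singular-minus} and the eigenfunctions $\phi_{\nu_\pm}$; a short monotonicity argument in $n$ shows each can be satisfied for at most one value of $n$, giving $\sigma_0\subseteq\{\nu_+,\nu_-\}$. For item~2 I would fix $n$ and analyse the characteristic equation~\eqref{eq:2d-nonhomog-eigenvalue} as a function of real $\lambda$: for $\lambda$ large positive the left side behaves like $\tan$ of an argument growing like $\sqrt\lambda$ while the right side decays to $0$, so the $\tan$ branches force infinitely many roots accumulating at $+\infty$; symmetrically (swapping the roles of $\epsilon_\pm$ and the sign of $\lambda$) one gets infinitely many roots accumulating at $-\infty$. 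Counting these over all $n$ and invoking the completeness of the eigenfunctions (proved in Section~\ref{section:essential-self-adjoint}) shows there are no further eigenvalues, so $\sigma_{\mathrm{point}}(\dot A_0)=\sigma_\infty\cup\sigma_0$ accumulates to $\pm\infty$.

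For item~3, the point is that a finite accumulation point of $\sigma_{\mathrm{point}}(\dot A_0)$ would have to be a finite accumulation point of $\{\lambda_{n,m}\}$ in which $n\to\infty$ (for fixed $n$ the spectrum of the transversal operator $A_0^n$ is discrete without finite accumulation, since it has compact resolvent). I would therefore rescale: setting $\mu=\lambda/(\tfrac{n\pi}{c})^2$ and letting $n\to\infty$, the arguments $a\sqrt{\tfrac\lambda{\epsilon_+}-(\tfrac{n\pi}{c})^2}$ and $b\sqrt{\tfrac\lambda{\epsilon_-}+(\tfrac{n\pi}{c})^2}$ both scale like $n$, so in the regime where $\lambda$ stays bounded the $\tan$ and $\tanh$ both saturate and the characteristic equation~\eqref{eq:2d-nonhomog-eigenvalue} degenerates, after dividing through by the common $\tfrac{n\pi}{c}$ factor, into the asymptotic relation $\tfrac1{\epsilon_+}\tan(\tfrac{n\pi}{c}\kappa^{-1/2}\sqrt{\mu\kappa-1}\,a)\approx \tfrac1{\epsilon_-}$-type balance whose solvability for bounded $\lambda$ requires $\tfrac{1}{\epsilon_+}=\tfrac{1}{\epsilon_-}$, i.e. $\kappa=1$; hence for $\kappa\ne1$ no bounded subsequence of eigenvalues survives and there is no finite accumulation point. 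Item~4 is then the sharp endpoint: with $\lambda=0$ the characteristic equation reduces (both square roots become $\pm\tfrac{n\pi}{c}$ after the relevant sign choice) to $\tfrac{\tanh(\tfrac{n\pi}{c}\sqrt\kappa^{-1}\cdot)}{\cdots}$-type identity that collapses to $\tfrac1{\epsilon_+}\tanh(\tfrac{n\pi}{c}a)=\tfrac1{\epsilon_-}\tanh(\tfrac{n\pi}{c}b)$ divided by $\tfrac{n\pi}{c}$, which for $\kappa=1$ becomes $\tanh(\tfrac{n\pi}{c}a)=\tanh(\tfrac{n\pi}{c}b)$, holding for all $n$ iff $a=b$; conversely if $\kappa\ne1$ or $a\ne b$ the identity fails for all but finitely many $n$, so $0$ is not an infinitely-degenerate eigenvalue. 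The eigenfunctions in the $a=b$, $\kappa=1$ case are precisely the $\psi_n$ of the EXAMPLE, and one checks $\{\phi_n\psi_n\}_{n\in\mathbb N}$ is an infinite orthogonal family in $\ker\dot A_0$.

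\paragraph{Main obstacle.} The routine parts are the ODE solving and the sign bookkeeping under the square roots; the genuinely delicate step is item~3 --- making rigorous the claim that the characteristic equation~\eqref{eq:2d-nonhomog-eigenvalue} has no bounded solutions as $n\to\infty$ when $\kappa\ne1$. This needs a careful uniform-in-$n$ asymptotic analysis of the competing $\tan$ (oscillatory, $O(1)$ amplitude but with poles) and $\tanh$ (monotone, exponentially saturating) terms, controlling where the $\tan$ poles fall relative to the target value $\tfrac{\epsilon_-}{\epsilon_+}\cdot\tfrac{\tanh(\cdots)}{\sqrt{\cdots}}$; one must rule out that a root of~\eqref{eq:2d-nonhomog-eigenvalue} drifts slowly toward a finite limit along a cleverly chosen subsequence. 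I expect to handle this by showing that for $\kappa\ne1$ and $n$ large the right-hand side of~\eqref{eq:2d-nonhomog-eigenvalue} is, uniformly for $\lambda$ in any fixed compact set, bounded away from the set of values attained by the left-hand side on that compact set, using that $\tanh$ is within $e^{-cn}$ of $1$ while the left side stays $O(1/n)$ away from $\tfrac{\epsilon_-}{\epsilon_+}\cdot\tfrac1n\cdot(\cdots)$ except near its poles, and the poles themselves are separated by $O(1/n)$ --- this quantitative separation is exactly what the $o(\exp(-\tfrac{n\pi}{c}\min\{a,b\}))$ decay rate quoted in the introduction reflects.
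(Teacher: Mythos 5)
Your proposal is correct and follows essentially the same route as the paper: separation of variables in the Dirichlet transversal basis (with completeness deferred to Section~\ref{section:essential-self-adjoint}), explicit ODE solutions giving the characteristic equation~\eqref{eq:2d-nonhomog-eigenvalue}, the singular cases~\eqref{eq:2d-nonhomog-singular-plus}--\eqref{eq:2d-nonhomog-singular-minus}, the $\tan$/$\tanh$ oscillation-versus-boundedness argument for accumulation at $\pm\infty$, a limit along a bounded sequence of roots with $n\to\infty$ forcing $\kappa=1$, and substitution of $\lambda=0$ for the last item. The obstacle you flag in item~3 is milder than you fear: for $\lambda$ in a bounded set and $n$ large the left-hand side is automatically in its hyperbolic regime (no $\tan$ poles can interfere), so one simply rearranges the $\tanh$--$\tanh$ form of the equation into a ratio and passes to the limit, exactly as in your main-text sketch --- your explicit remark that a finite accumulation point must involve $n\to\infty$ (since each transversal operator has discrete spectrum) is a point the paper leaves implicit.
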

\begin{proof}
    Begin by considering ansatz for the eigenfunctions in the form of 
    \begin{equation}\label{eq:ansatz-eigenvector}
        f_{n}(x,y)=\psi_n(x)\phi_n(y)
    \end{equation}
    for $\phi(y)=\frac{2}{c}\sin\left(\frac{n\pi}{c}\right)$, a Dirichlet basis in $L^2(J_2)$. In Section~\ref{section:essential-self-adjoint} we prove that all eigenvectors can be decomposed in this manner. There, we show that these eigenvectors of $\dot A_K$ form a complete orthonormal set in $L^2(\Omega_0,g)$.
    After substituting~\eqref{eq:ansatz-eigenvector} into eigenvalue equation for $\dot A_K$, we obtain two equations
\begin{align}
    \begin{split}
        \left(\psi_{n\pm}''(x)-\psi_{n\pm}(\frac{n\pi}{c})^2\right)\phi_n(y) = \mp\frac{\lambda}{\epsilon_\pm} \psi_{n\pm}(x)\phi_n(y)
        \label{eq:2d-eigensum}
    \end{split}
\end{align}
where $\psi_{n\pm}=\restr{\psi_n}{\Omega_\pm}$. The domain of the operator further restrain possible choice of $\psi_n$ as
\begin{equation}\label{eq:separated-domA-conditions}
    \psi_{n+}(a) = 0 = \psi_{n-}(-b), \quad \psi_{n+}(0)=\psi_{n-}(0), \quad \epsilon_+\psi_{n+}'(0) = -\epsilon_-\psi_{n-}'(0).
\end{equation}
By further solving the differential equations in terms of exponentials and subjecting the solutions to boundary and interface conditions~\eqref{eq:separated-domA-conditions}, we obtain for $\lambda\not=\pm\epsilon_\pm \left(\frac{n\pi}{c}\right)^2$ characteristic equation~\eqref{eq:2d-nonhomog-eigenvalue}. For the possible singular case, solution in $\Omega_\pm$ is a linear function and an exponential and there is~\eqref{eq:2d-nonhomog-singular-plus}--\eqref{eq:2d-nonhomog-singular-minus}.
Note that the last equations can be satisfied for certain parameters. For example, choice $a=1, b=2, c\approx4.64, \epsilon_+=\epsilon_-=1, n=1$ satisfies equation~\eqref{eq:2d-nonhomog-singular-minus}.

\begin{figure}
    \centering
    \includegraphics[width=.7\linewidth]{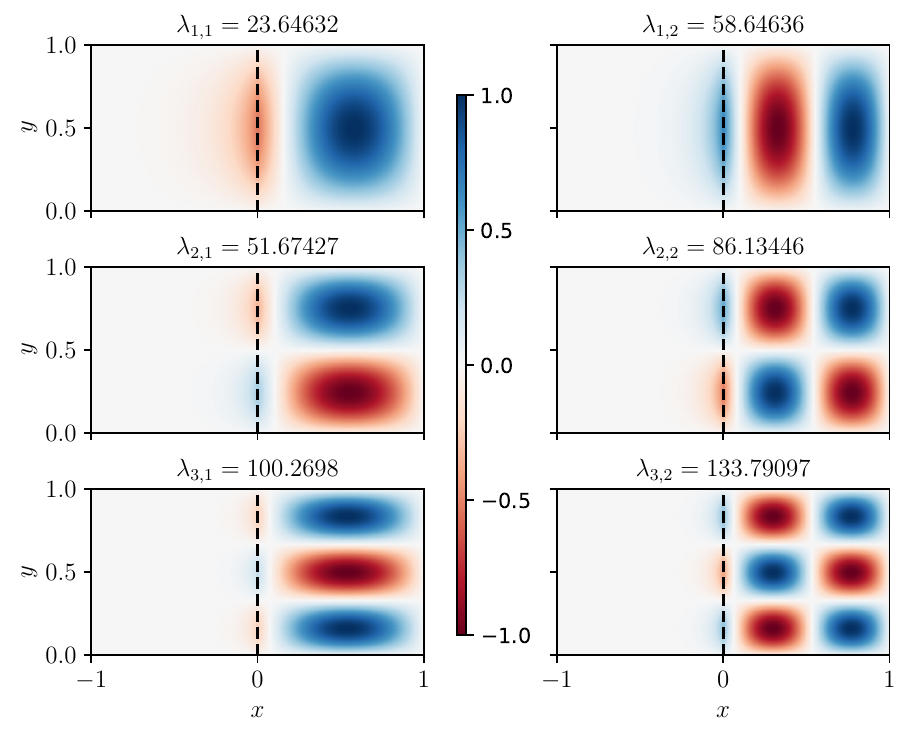}
    \caption{Eigenvectors $f_{n,m}$ in \eqref{eq:2d-nonhomog-eigenvector} plotted such that $\|f_{n,m}\|_\infty=1$, $\epsilon_+=\epsilon_-=a=b=c=1$}
    \label{fig:plt2df_a1b1p1m1}
\end{figure}
\begin{figure}
    \centering
    \includegraphics[width=.7\linewidth]{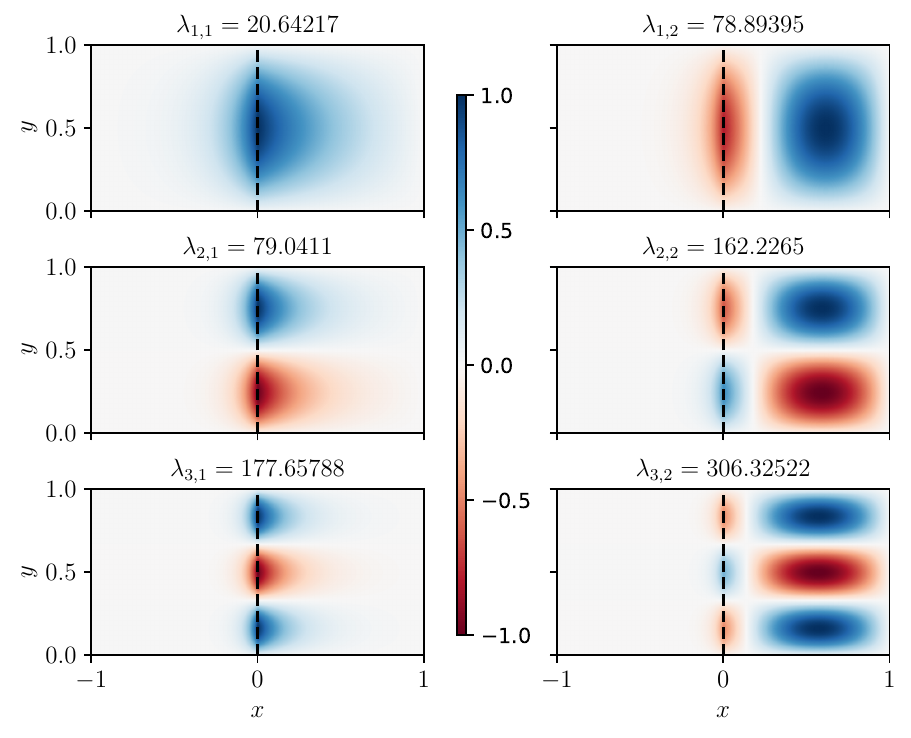}
    \caption{Eigenvectors $f_{n,m}$ in \eqref{eq:2d-nonhomog-eigenvector} plotted such that $\|f_{n,m}\|_\infty=1$, $\epsilon_+=3$, $\epsilon_-=a=b=c=1$}
    \label{fig:plt2df_a1b1p3m1}
\end{figure}

The second point of the statement is implied by the following. The function $x\mapsto\frac{\tanh x}{x}$ is bounded for real arguments. Function $g(x):= \frac{\tan x}{x}$ is oscillating and during each period of $\tan x$, the function $g$ has range $\mathbb{R}$. Thus, there are infinitely many eigenvalues which accumulate to $\pm\infty$. 

    Let $\Lambda\in\mathbb{R}$ be a finite accumulation point of roots of the characteristic equation. Then there exists some sequence of eigenvalues $\{\lambda_n\}_{n=1}^\infty\subset\sigma(\dot A_0)$ which converges to $\Lambda$, i.e. $\lim_{n\to+\infty}\lambda_n=\Lambda$. Without loss of generality, assume $b>a$. As the limit value is finite, there exists $n_0\in\mathbb{N}$ such that for all $n>n_0$, the value $\lambda_n$ lies in an interval $(-\epsilon_-(\frac{n\pi}{c})^2,\epsilon_+(\frac{n\pi}{c})^2)$. We will rearrange characteristic equation~\eqref{eq:K0-eigenequation} to form
    \begin{equation}
        \frac{\tanh(a\sqrt{(\frac{n\pi}{c})^2-\frac{\lambda_n}{\epsilon_+})}}{\tanh(b\sqrt{(\frac{n\pi}{c})^2 + \frac{\lambda_n}{\epsilon_-}})} = \frac{\epsilon_+}{\epsilon_-}\sqrt{\frac{(\frac{n\pi}{c})^2 - \frac{\lambda_n}{\epsilon_+}}{(\frac{n\pi}{c})^2 + \frac{\lambda_n}{\epsilon_-}}}
    \end{equation}
    and take limit $n \to +\infty$ ($\lambda_n\to \Lambda < +\infty$) on both sides of the equation. This reduces to the necessary condition 
    \begin{equation}
        1=\frac{\epsilon_+}{\epsilon_-}.
    \end{equation}

As can be seen from~\eqref{eq:2d-nonhomog-eigenvalue} by substituting $\lambda=0$, it satisfies the equation iff $\kappa=1$ and $a=b$ --- as is obvious from the fact that  $\frac{\tanh\left(a x\right)}{\tanh \left(b x\right)}$ is not a constant in $x$ unless $a=b$.
\end{proof}
\begin{proposition}
    Fix curvature $K=\pm1$. The eigenvalues of operator $\dot A_K$ are the solutions $\lambda$ to the characteristic equation
\begin{equation}
    \begin{vmatrix}
        \psi_1^+(a) & \psi_2^+(a) & 0 & 0 \\
        0 & 0 & \psi_1^-(-b) & \psi_2^-(-b) \\
        \psi_1^+(0) & \psi_2^+(0) & -\psi_1^-(0) & - \psi_2^-(0) \\
        \epsilon_+\psi_1^{+\prime}(0) & \epsilon_+\psi_2^{+\prime}(0) & \epsilon_-\psi_1^{-\prime}(0) & \epsilon_-\psi_2^{-\prime}(0)
    \end{vmatrix}
    =0,
    \label{eq:manifold-det}
\end{equation}
for each $n\in\mathbb{N}$ where the $\psi_\iota^+: [0,a]\to\mathbb{R}$ and $\psi_\iota^-: [-b,0]\to\mathbb{R}$ are defined as in Table~\ref{table:solution-pq} and are implicitly dependent on $n$ and $\lambda$ via $\mu$ and $\nu$. In case there are multiple solutions of~\eqref{eq:manifold-det} for a fixed $n\in\mathbb{N}$, we denote the solutions as $\lambda_{n,k}$, $k\in\mathbb{Z}$. This is well-defined notation as self-adjoint operators in separable Hilbert spaces have at most countable point spectrum.

 Finally, we have
  \begin{equation}
      \overline{\sigma_{\mathrm{point}}(\dot A_K)} = \bigcup_{n=1}^\infty \bigcup_{k\in\mathbb{Z}} \ \{\lambda_{n,k}\}
 \end{equation}
 with $\lambda_{n,k}$ being solutions of~\eqref{eq:manifold-det}, for fixed $n\in\mathbb{N}$ sorted in an increasing manner as $\lambda_{n,k}<\lambda_{n,k+1}$ for all $k \in \mathbb{Z}$. The resulting eigenfunctions are of form $\psi_{n,k}(x,y)=\phi_{n,k}(x)\sin\left(\frac{n\pi}{c}\right)$ for
\begin{equation}
    \phi_{n,k}(x):=
    \begin{cases}
        C_1^+\psi_1^+(x) + C_2^+\psi_2^+(x),&\quad x\geq 0,\\
        C_1^-\psi_1^-(x) + C_2^-\psi_2^-(x),&\quad x\leq 0,
    \end{cases}
\end{equation}
for constants $C_1^\pm$, $C_2^\pm$ determined up to the same multiplicative factor using procedure in the reference.
\end{proposition}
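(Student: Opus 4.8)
The plan is to mimic, for the curved case $K=\pm 1$, the separation-of-variables argument already carried out for $K=0$ in Proposition~\ref{prop:2d-spectrum}, replacing the elementary sine/sinh solutions by the fundamental systems $\psi_1^\pm,\psi_2^\pm$ of the transversal ODE coming from the Jacobi-twisted Laplacian. Concretely, I would insert the ansatz $\psi_{n,k}(x,y)=\phi_{n,k}(x)\sin(\tfrac{n\pi}{c}y)$ into the eigenvalue equation $\dot A_K f=\lambda f$, using the unified form $\dot A_K=\bigl(\begin{smallmatrix}\epsilon_+\\-\epsilon_-\end{smallmatrix}\bigr)\cdot\bigl(-\tfrac{1}{\cos^2(\sqrt K x_1)}\partial_2^2-\partial_1^2+\sqrt K\tan(\sqrt K x_1)\partial_1\bigr)$ given in the excerpt. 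Since $\{\sqrt{2/c}\,\sin(n\pi y/c)\}_n$ is a complete orthonormal basis of $L^2(J_2)$ and (as promised from Section~\ref{section:essential-self-adjoint}) all eigenfunctions decompose this way, it suffices to treat each mode $n$ separately. The angular derivative contributes a factor $(n\pi/c)^2$, so on $\Omega_\pm$ the function $\phi=\phi_{n,k}$ solves a second-order linear ODE in $x$ whose two linearly independent solutions are exactly the $\psi_1^\pm,\psi_2^\pm$ of Table~\ref{table:solution-pq}, with the parameters $\mu,\nu$ being the appropriate combinations of $\lambda,\epsilon_\pm,n$.

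Next I would write $\phi|_{\Omega_+}=C_1^+\psi_1^+ + C_2^+\psi_2^+$ and $\phi|_{\Omega_-}=C_1^-\psi_1^- + C_2^-\psi_2^-$ and impose the four linear constraints inherited from $\dom\dot A_K$: the two Dirichlet conditions $\phi(a)=0=\phi(-b)$, the continuity $\phi_+(0)=\phi_-(0)$, and the sign-changing flux condition $\epsilon_+\phi_+'(0)=-\epsilon_-\phi_-'(0)$. Written out in the unknowns $(C_1^+,C_2^+,C_1^-,C_2^-)$, these form a homogeneous $4\times 4$ linear system whose coefficient matrix is precisely the matrix in~\eqref{eq:manifold-det}; a nontrivial solution exists iff its determinant vanishes, which is the characteristic equation. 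The constants are then determined up to a common scalar as the (one-dimensional, generically) kernel vector of that matrix, giving the stated form of $\phi_{n,k}$. For the union formula $\overline{\sigma_{\mathrm{point}}(\dot A_K)}=\bigcup_n\bigcup_k\{\lambda_{n,k}\}$, I would argue that every $\lambda_{n,k}$ is an eigenvalue by construction, that (by the completeness of the $\phi_n$'s and the already-cited decomposition result) no other eigenvalues occur, and that countability of the index set $k$ is forced because the transversal operator $A_K^n$ has compact resolvent (Theorem~\ref{theorem:intro-essential}), hence discrete spectrum, so the $\lambda_{n,k}$ can be enumerated in increasing order; taking the closure absorbs any accumulation points into the set on the right.

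The main obstacle I expect is not the linear-algebra bookkeeping but making the ODE analysis rigorous at the level needed: one must check that $\psi_1^\pm,\psi_2^\pm$ from Table~\ref{table:solution-pq} really do span the solution space of the transversal equation for \emph{all} real $\lambda$ (including the exceptional values where the $K=0$ argument had to treat a degenerate ''singular'' case with a polynomial solution), and that the weight $f(x)=\cos(\sqrt K x)$ or $\cosh(\sqrt{|K|} x)$ causes no loss of the $H^2$-regularity that legitimizes the trace evaluations at $x=0,a,-b$ used in reading off the interface conditions. Here the restriction $a,b\in(0,\pi/2)$ for $K=+1$ is what keeps $f$ bounded away from zero, so the measure is nondegenerate and the Sobolev-space identifications from Section~\ref{section:setting} apply; I would invoke those explicitly. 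A secondary subtlety is that for special parameter values the $4\times4$ matrix may have a two-dimensional kernel (as in the $a=b$, $\kappa=1$ degeneracy for $K=0$), so the phrase ''determined up to the same multiplicative factor'' should be read as ''up to the kernel,'' and I would note that this does not affect the spectral statement. Everything else — substituting the ansatz, collecting the mode-$n$ ODE, assembling the determinant — is routine once the fundamental system is in hand.
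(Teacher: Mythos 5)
Your proposal is correct and follows essentially the same route as the paper: the paper's (one-line) proof likewise applies the boundary and interface conditions from $\dom\dot A_K$ to the separated ansatz, noting only that the transversal equation reduces to the associated Legendre equation~\eqref{eq:legendreeq}, whose fundamental system gives the entries of Table~\ref{table:solution-pq} and hence the $4\times4$ determinant condition. Your additional remarks on trace regularity, exceptional parameter values, and possible kernel degeneracy are extra care beyond what the paper records, not a departure in method.
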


\begin{proof}
    Straightforward by applying interface and boundary conditions on separated ansatz as in previous proof, leading to Legendre differential equation
\begin{equation}\label{eq:legendreeq}
    \left(1-x^{2}\right)y''-2xy'+\left(\nu (\nu +1)-{\frac {\mu^{2}}{1-x^{2}}}\right)y=0.
\end{equation}
\end{proof}
\begin{table}[H]
    \centering
    \begin{tabular}{r | l l | l l}
        \toprule
        & $\psi_1^\pm(x)$ 
        & $\psi_2^\pm(x)$ 
        & $\mu$ 
        & $\nu_\pm$
        \\
        \midrule
        $K=+1$ 
        & $P_{\nu\pm}^{\mu}(\sin x)$ 
        & $Q_{\nu\pm}^{\mu}(\sin x)$ 
        & $\frac{n\pi}{c}$ 
        & $\frac{1}{2}(\sqrt{1\pm\frac{4\lambda_{n,k}}{\epsilon_\pm}}-1)$
        \\
        $K=-1$
        & $\frac{P_{\nu\pm}^{\mu}(\tanh x)}{\sqrt{\cosh x}}$ 
        & $\frac{Q_{\nu\pm}^{\mu}(\tanh x)}{\sqrt{\cosh x}}$ 
        & $-\frac{1}{2} + \iu \frac{n\pi}{c}$ 
        & $\frac{1}{2}\sqrt{1\mp\frac{4\lambda_{n,k}}{\epsilon_\pm}}$
        \\
        \bottomrule
    \end{tabular}
    \caption{Choice of eigenfunctions on $\Omega_+$ and $\Omega_-$ for fixed $n\in\mathbb{N}$. Eigenvalue $\lambda_{n,k}$ is a solution to~\eqref{eq:manifold-det}. $P^{\mu}_\nu$ and $Q^{\mu}_\nu$ are associated linearly independent Legendre function of first and second kind, respectively.}
    \label{table:solution-pq}
\end{table}

\section{Essential self-adjointness via eigenfunctions}\label{section:essential-self-adjoint}
In this section, we show that $\dot A_K$ is an essentially self-adjoint operator regardless of value of contrast $\kappa$ and constant Gaussian curvature $K\in\mathbb{R}$ of the ambient Riemannian 2-manifold.
\subsection{Separation of variables}
As all coefficients in $\dot A_K$ and metric $g$ are independent of second variable $x_2\equiv y$, it is beneficial to decompose the operator into a direct sum of one-dimensional operators via separation of variables by utilising an orthonormal basis of $L^2(J_2)$ such that it satisfies Dirichlet condition on $J_2$.
\begin{lemma}[\cite{krejcirik2010ptsymmetric}]
    \label{lemma:separation}
    \begin{equation}
        \forall \Psi \in L^2(\Omega_0,g), \quad \Psi(x_1,x_2)=\sum_{n=0}^{+\infty}\psi_n(x_1)\phi_n(x_2)\quad \mathrm{in}\ L^2(\Omega_0,g),
    \end{equation}
    where
    \begin{equation}
        \phi_0(x_2)=\frac{1}{\sqrt{c}},\quad\phi_n(x_2) = \sqrt\frac{2}{c}\sin\left(\frac{n\pi}{c}x_2\right),\quad \psi_n(x_1)=\left(\phi_n,\Psi(x_1,\cdot)\right)_{L^2(J_2)}.
    \end{equation}
\end{lemma}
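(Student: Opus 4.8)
The statement is a completeness (or ``lifting'') assertion: starting from the transverse Fourier expansion afforded by the orthonormal basis $\{\phi_n\}$ of $L^2(J_2)$, one has to show the associated series converges to $\Psi$ in the full space $L^2(\Omega_0,g)$. The plan is to reduce this to a routine Fubini--dominated-convergence argument, the point being that the measure on $\Omega_0$ factorises. Indeed, since $\Gamma$ is a geodesic and $K$ is constant, the Jacobian $f$ solving the Jacobi equation~\eqref{eq:Jacobi-curvature} depends on $x_1$ alone (it equals $\cos(\sqrt K x_1)$, $1$, or $\cosh(\sqrt{|K|}x_1)$), so $\dd\nu_g=f(x_1)\,\dd x_1\,\dd x_2$ and $L^2(\Omega_0,g)$ is, up to an equivalent norm, the Hilbert tensor product $L^2(J_1,f\,\dd x_1)\otimes L^2(J_2,\dd x_2)$. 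No genuinely two-dimensional analysis is needed, and the same proof works at the generality $f,f^{-1}\in L^\infty$ used in the cited reference (where one simply works in plain $L^2(\Omega_0)$, which is then norm-equivalent to $L^2(\Omega_0,g)$).

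Concretely, I would proceed in three steps. First, by Tonelli's theorem $\Psi(x_1,\cdot)\in L^2(J_2)$ for a.e.\ $x_1\in J_1$, the map $x_1\mapsto\|\Psi(x_1,\cdot)\|_{L^2(J_2)}^2$ is measurable, and $\int_{J_1}f(x_1)\|\Psi(x_1,\cdot)\|_{L^2(J_2)}^2\,\dd x_1=\|\Psi\|_g^2<\infty$. Second, set $\psi_n(x_1):=(\phi_n,\Psi(x_1,\cdot))_{L^2(J_2)}$: applying Fubini to the product-measurable integrand $\overline{\phi_n(x_2)}\Psi(x_1,x_2)$ (integrable on $\Omega_0$ since $f$ is bounded above and below there) shows each $\psi_n$ is measurable in $x_1$, while Bessel's inequality in the transverse variable gives $\sum_n|\psi_n(x_1)|^2\le\|\Psi(x_1,\cdot)\|_{L^2(J_2)}^2$ a.e., hence $\sum_n\|\psi_n\|_{L^2(J_1,f\,\dd x_1)}^2\le\|\Psi\|_g^2$ and in particular $\psi_n\in L^2(J_1,f\,\dd x_1)\simeq L^2(J_1)$.

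Third, for the $L^2(\Omega_0,g)$-convergence I would write $\Psi_N:=\sum_{n\le N}\psi_n\otimes\phi_n$ and use Parseval in $L^2(J_2)$ to get
\begin{equation*}
  \|\Psi-\Psi_N\|_g^2=\int_{J_1}f(x_1)\,r_N(x_1)\,\dd x_1,\qquad r_N(x_1):=\Big\|\Psi(x_1,\cdot)-\sum_{n\le N}\psi_n(x_1)\phi_n\Big\|_{L^2(J_2)}^2=\sum_{n>N}|\psi_n(x_1)|^2 .
\end{equation*}
Completeness of $\{\phi_n\}$ in $L^2(J_2)$ gives $r_N(x_1)\to 0$ a.e., and $0\le f(x_1)r_N(x_1)\le f(x_1)\|\Psi(x_1,\cdot)\|_{L^2(J_2)}^2\in L^1(J_1)$, so the dominated convergence theorem yields $\|\Psi-\Psi_N\|_g\to 0$; this is precisely the claimed expansion with the stated coefficients. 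The only mildly delicate points ---  and thus the ``main obstacle'', such as it is --- are the measurability of the coefficient functions $x_1\mapsto\psi_n(x_1)$ and the interchange of the $N\to\infty$ limit with the $x_1$-integration; both are dispatched by the Fubini/Tonelli and dominated-convergence theorems once the product structure of $\dd\nu_g$ has been recorded, while completeness of $\{\phi_n\}$ in $L^2(J_2)$ is classical.
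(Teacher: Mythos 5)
Your proof is correct and is exactly the standard argument that the paper delegates to the citation~\cite{krejcirik2010ptsymmetric}: the text gives no in-paper proof of Lemma~\ref{lemma:separation}, relying on the factorised measure $\dd\nu_g=f(x_1)\,\dd x_1\,\dd x_2$ and completeness of the transverse basis, which is precisely what you spell out via Tonelli/Fubini, Bessel--Parseval in the $x_2$ variable, and dominated convergence in $x_1$; no different route is taken in the source. One caveat worth recording: your Bessel and Parseval steps need $\{\phi_n\}$ to be orthonormal, whereas the family as literally written in the statement (the constant $\phi_0=1/\sqrt{c}$ together with \emph{all} Dirichlet sines) is not orthogonal, since $(\phi_0,\phi_n)_{L^2(J_2)}\neq 0$ for odd $n$, and with that family the displayed expansion would even fail for $\Psi(x_1,x_2)=\sin(\pi x_2/c)$. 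This is a defect of the statement as transcribed (the cited reference uses a transverse basis adapted to its own boundary conditions), not of your argument: in the Dirichlet setting of this paper the sum should run over $n\ge 1$ with the sine basis only, which is how you effectively use it, so your proof stands once that index convention is fixed.
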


\begin{lemma}\label{lemma:AK-separation}
    The operator $\dot A_K$ can be decomposed for $\Psi\in\dom \dot A_K$, $\Psi(x,y)=\psi_n(x)\phi_n(y)$ as 
    \begin{equation}
        \dot A_K\Psi = (A_K^n\psi_n)\otimes\phi_n
    \end{equation}
    where $\phi_n$, $\psi_n$ are given in Lemma~\ref{lemma:separation}, $A_K^n\in \mathcal{L}\left(L^2(J_1, \dd \mu_K)\right)$, $n\in\mathbb{N}_0$ denotes
\begin{equation}
    A_K^n:=\begin{pmatrix}\epsilon_+\\-\epsilon_-\end{pmatrix}\cdot
    \begin{cases}
        -\pp_1^2 + \tan(x_1)\pp_1 + \frac{(n\pi)^2}{c^2\cos^2(x_1)}, & \mathrm{if}\  K=1, \\
        -\pp_1^2+(\frac{n\pi}{c})^2, & \mathrm{if}\ K=0, \\
        -\pp_1^2 - \tanh(x_1)\pp_1 + \frac{(n\pi)^2}{c^2\cosh^2(x_1)}, & \mathrm{if}\ K=-1
    \end{cases}
\end{equation}
and $A_K^n$ acts on $L^2(J_1,\dd\mu_K)$ with measure
\begin{equation}
    \dd\mu_K:=
    \begin{cases}
        \cos(x_1)\dd x_1, & \mathrm{if} \ K=+1, \\
        \dd x_1, &\mathrm{if} \ K=0, \\
        \cosh(x_1) \dd x_1, & \mathrm{if} \ K=-1
    \end{cases}
\end{equation}
and domain of $A_K^n$ is given by
\begin{align}
    \begin{split}
        \dom&\ A_K^n:=\\
    &\left\{
        \begin{array}{l|l}
         &\psi_+(a)=\psi_-(-b)=0, \\
        \psi=\begin{pmatrix}\psi_+\\\psi_-\end{pmatrix} \in H^2 \left( \left(-b,0\right),\dd\mu_K\right)\oplus H^2\left( \left(0,a\right),\dd\mu_K\right) &\psi_+(0) = \psi_-(0)\\
        &\epsilon_+ \psi_+'(0_+) = -\epsilon_- \psi_-'(0_-)
        \end{array}
    \right\}.
    \end{split}
\end{align}

\end{lemma}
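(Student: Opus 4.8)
The plan is to exploit the product structure shared by the metric~\eqref{eq:metric-g} and the differential expression: the action of $\dot A_K$ on a separated function reduces to a one-dimensional computation in the variable $x_1$, and the boundary and interface conditions then transfer mode by mode. First I would record the measure decomposition. Since $\Gamma$ is a geodesic, the Jacobi equation~\eqref{eq:Jacobi-curvature} (with vanishing geodesic curvature) forces $f$ to depend on $x_1$ only, namely $f(x_1)=\cos(\sqrt{K}\,x_1)$, $1$, or $\cosh(\sqrt{|K|}\,x_1)$ according to the sign of $K$. Hence $\dd\nu_g=f(x_1)\,\dd x_1\,\dd x_2$ factorises, yielding the unitary identification $L^2(\Omega_0,g)\simeq L^2(J_1,\dd\mu_K)\otimes L^2(J_2,\dd x_2)$ with $\dd\mu_K$ as in the statement, and the functions $\phi_n$ of Lemma~\ref{lemma:separation} form an orthonormal basis of the second factor satisfying $-\phi_n''=(n\pi/c)^2\phi_n$.

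Next I would insert the separated ansatz $\Psi=\psi_n\otimes\phi_n$ into the unified coordinate form of $\dot A_K$. Because every coefficient is independent of $x_2$ and the only $x_2$-derivative present is $\partial_2^2$, which acts on $\phi_n$ alone, the term $-\cos^{-2}(\sqrt{K}\,x_1)\,\partial_2^2(\psi_n\otimes\phi_n)$ collapses to $(n\pi/c)^2\cos^{-2}(\sqrt{K}\,x_1)\,\psi_n\otimes\phi_n$, while the $x_1$-operators act on $\psi_n$ only. Collecting terms gives $\dot A_K(\psi_n\otimes\phi_n)=(A_K^n\psi_n)\otimes\phi_n$ with $A_K^n$ exactly the operator displayed, the three cases arising from $f=\cos(\sqrt{K}\,x_1),\,1,\,\cosh(\sqrt{|K|}\,x_1)$ and from the drift $\sqrt{K}\tan(\sqrt{K}\,x_1)$ specialising to $\tan x_1$, $0$, $-\tanh x_1$ respectively.

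The heart of the matter is the domain bookkeeping: one must check that, restricted to product functions, the constraints defining $\dom\dot A_K$ in~\eqref{eq:def-dotA_K} are equivalent to those defining $\dom A_K^n$. By the Sobolev identifications of Section~\ref{section:setting} (available since $f,f^{-1}\in L^\infty$ and $f$ is constant in $x_2$) together with $\phi_n\in C^\infty(\overline{J_2})$, a product $\psi_n\otimes\phi_n$ belongs to $H^2(\Omega_\pm,g)$ iff the profile $\psi_\pm$ belongs to $H^2$ of the corresponding subinterval with weight $\dd\mu_K$. The Dirichlet condition on the horizontal sides $J_1\times\{0,c\}$ is automatic because $\phi_n(0)=\phi_n(c)=0$; on the vertical sides, testing against $\phi_n$, it reduces to $\psi_+(a)=\psi_-(-b)=0$. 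On $\mathcal{C}$ the traces of $\psi_n\otimes\phi_n$ and of $\partial_x(\psi_n\otimes\phi_n)$ are $\psi_\pm(0)\,\phi_n$ and $\psi_\pm'(0_\pm)\,\phi_n$, so continuity yields $\psi_+(0)=\psi_-(0)$ and the sign-changing derivative condition yields $\epsilon_+\psi_+'(0_+)=-\epsilon_-\psi_-'(0_-)$, once the non-vanishing $\phi_n$ is divided out. Running these equivalences backwards shows $\psi_n\in\dom A_K^n$ iff $\psi_n\otimes\phi_n\in\dom\dot A_K$, completing the identification.

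I expect the only genuine obstacle to be this domain correspondence: establishing that $H^2$-regularity on the two-dimensional pieces $\Omega_\pm$ is equivalent, for separated functions, to $H^2$-regularity of the one-dimensional profile --- which leans on the trace theorem and on the Sobolev-space identification recorded in Section~\ref{section:setting} --- and that the derivative-jump interface condition, which involves traces of first derivatives, survives separation; the latter is where one uses $\phi_n\not\equiv 0$ to pass from the scalar identity $\bigl(\epsilon_+\psi_+'(0_+)+\epsilon_-\psi_-'(0_-)\bigr)\phi_n=0$ in $L^2(J_2)$ to the pointwise condition. Everything else is the elementary computation sketched above.
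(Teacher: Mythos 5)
Your proposal is correct and follows essentially the same route as the paper, which disposes of this lemma in two lines (``the action is straightforward; the domain is the projection to $L^2(J_1,\dd\mu_K)$''): you simply carry out that straightforward computation --- the factorisation $\dd\nu_g=f(x_1)\,\dd x_1\,\dd x_2$, the collapse of $-f^{-2}\partial_2^2$ to $(n\pi/c)^2 f^{-2}$ on $\phi_n$, and the mode-by-mode transfer of the Dirichlet and interface conditions --- in explicit detail. The extra care you devote to the domain correspondence (trace of the product versus trace of the profile, dividing out the non-vanishing $\phi_n$) is exactly the content the paper leaves implicit, so no further changes are needed.
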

\begin{proof}
    The action of $A_K^n$,  $n\in\mathbb{N}_0$ is straightforward. Its domain is projection to one-dimensional $L^2(J_1,\dd\mu_K)$ space.
\end{proof}

\subsection{Unitary transform}
From now onwards, we will use notation $x_1\equiv x$,  $x_2\equiv y$. We will use the separation of variables to quantify the effect of curvature as an effective potential.
\begin{lemma}\label{lemma:unitary-K1}
    Unitary transform $U_{+1}: L^2(J_1,\dd x) \to L^2(J_1, \dd \mu_{+1})$
    \begin{equation}
        (U_{+1}\psi)(x):=\cos(x)^{-\frac{1}{2}}\psi(x)
    \end{equation}
    transforms operator $A_{+1}^n$, $n\in\mathbb{N}$, to a ``flat'' operator plus a discontinuous potential given by
    \begin{align}
        \begin{split}
            U_{+1}^{-1}A_{+1}^n U_{+1} &=A_0^0 + V_{+1}^n,\\
            (V_{+1}^n\psi)(x) &= \frac{8\left(\frac{n\pi}{c}\right)^2-3-\cos{2x}}{8\cos^2{x}}\begin{pmatrix}\epsilon_+\psi_+(x)\\-\epsilon_-\psi_-(x)\end{pmatrix},
        \end{split}
    \end{align}
    where $A_0^0$ (defined in Lemma~\ref{lemma:AK-separation}) is operator corresponding to composite (metamaterial \& material) one-dimensional string analogue in Euclidean space.
\end{lemma}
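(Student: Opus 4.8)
The plan is to check that $U_{+1}$ is unitary and then perform the conjugation by a direct computation, substituting $\psi=w\phi$ with the weight $w:=\cos^{1/2}$. Unitarity is immediate: for $\psi\in L^2(J_1,\dd x)$,
\begin{equation}
    \|U_{+1}\psi\|_{L^2(\dd\mu_{+1})}^2=\int_{J_1}\cos(x)^{-1}|\psi(x)|^2\cos(x)\,\dd x=\|\psi\|_{L^2(\dd x)}^2,
\end{equation}
and since $\cos x>0$ on $\overline{J_1}\subset(-\tfrac{\pi}{2},\tfrac{\pi}{2})$ (using $a,b\in(0,\tfrac{\pi}{2})$ in the case $K=+1$), $U_{+1}$ is a bijective isometry with $U_{+1}^{-1}\phi=w\phi$. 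Because $U_{+1}$ acts as multiplication by a scalar function, it commutes with the diagonal factor $\begin{pmatrix}\epsilon_+\\-\epsilon_-\end{pmatrix}$, so it suffices to conjugate the scalar differential expression $L_n:=-\partial_x^2+\tan(x)\partial_x+\tfrac{(n\pi/c)^2}{\cos^2 x}$ and reinstate this factor at the end.

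Next I would compute $U_{+1}^{-1}L_nU_{+1}\psi=w\,L_n(w^{-1}\psi)$ using the product and quotient rules. The first-derivative terms collect into the coefficient $2w'/w+\tan x$; since $w'/w=-\tfrac12\tan x$, this coefficient vanishes identically, which is precisely the condition that pins down the exponent $-\tfrac12$ in $U_{+1}$ and is the single place where the exact form of the transform matters. What remains is $-\partial_x^2$ plus multiplication by the symbol $w''/w-2(w'/w)^2-\tan(x)\,w'/w+\tfrac{(n\pi/c)^2}{\cos^2 x}$.

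Then I would simplify that symbol. Using $w'/w=-\tfrac12\tan x$ and $w''/w=-\tfrac12-\tfrac14\tan^2 x$, the first three terms collapse to $-\tfrac12-\tfrac14\tan^2 x$, so the potential equals $-\tfrac14+\tfrac{(n\pi/c)^2-1/4}{\cos^2 x}$; rewriting $\tan^2 x=\cos^{-2}x-1$ and $2\cos^2 x=1+\cos 2x$ and placing everything over the common denominator $8\cos^2 x$ yields exactly $\tfrac{8(n\pi/c)^2-3-\cos 2x}{8\cos^2 x}$. Reattaching the factor $\begin{pmatrix}\epsilon_+\\-\epsilon_-\end{pmatrix}$ gives $A_0^0+V_{+1}^n$ as claimed. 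It remains to see that $U_{+1}$ maps $\dom A_0^0$ onto $\dom A_{+1}^n$: multiplication by the smooth, bounded, boundedly invertible weight $w$ preserves $H^2$ on each of $(-b,0)$ and $(0,a)$; the Dirichlet conditions at $-b$ and $a$ survive because $\cos$ is nonvanishing there; and since $w(0)=1$ and $w'(0)=0$, both the continuity $\psi_+(0)=\psi_-(0)$ and the Kirchhoff condition $\epsilon_+\psi_+'(0)=-\epsilon_-\psi_-'(0)$ transfer unchanged.

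No genuine difficulty is expected: the statement is an elementary but slightly delicate calculation, and the only points requiring care are the cancellation of the transport term (which forces the $\cos^{-1/2}$ weight) and keeping track of the matrix factor $\begin{pmatrix}\epsilon_+\\-\epsilon_-\end{pmatrix}$ throughout the conjugation.
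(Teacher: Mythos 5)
Your proposal is correct and follows essentially the same route as the paper: unitary conjugation by the multiplication operator $\cos^{-1/2}$, cancellation of the first-order (transport) term, and verification that the weight preserves the boundary and interface conditions (using $w(0)=1$, $w'(0)=0$), so that $\dom A_0^0$ is mapped onto $\dom A_{+1}^n$. The only difference is that the paper delegates the algebra to an adaptation of a cited lemma of Krej\v{c}i\v{r}\'{i}k, whereas you carry out the computation explicitly, and your simplified potential $-\tfrac14+\bigl((n\pi/c)^2-\tfrac14\bigr)\cos^{-2}x$ indeed equals the stated $\bigl(8(n\pi/c)^2-3-\cos 2x\bigr)/(8\cos^2 x)$.
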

\begin{proof}
    The proof is adapted to our case based on~\cite[Lemma 4.3]{krejcirik2010ptsymmetric}. $U_{+1}$ is regular and unitary as $x\in(-b,a)\subset(-\frac{\pi}{2}, \frac{\pi}{2})$ and $(U_{+1}\psi, U_{+1}\phi)_{(J_1, \dd \mu_{+1})} = \int_{J_1}(\cos^{-\frac{1}{2}})^2 \overline{\psi(x)}\phi(x) \dd \mu_{+1} = \int_{J_1}\overline{\psi}\phi \dd x=(\psi, \phi)_{(J_1, \dd x)}$.
    It is straightforward to verify that the boundary condition are satisfied and ultimately, $\dom A_{0}^0 =\dom U_{+1}^{-1}A_{+1}^mU_{+1}$.
\end{proof}

\begin{lemma}\label{lemma:unitary-K-1}
    Unitary transform $U_{-1}: L^2(J_1,\dd x) \to L^2(J_1, \dd \mu_{-1})$
    \begin{equation}
        (U_{-1}\psi)(x):=\cosh(x)^{-\frac{1}{2}}\psi(x)
    \end{equation}
    transforms $A_{-1}^n$ to
    \begin{align}
        \begin{split}
            U_{-1}^{-1}A_{-1}^n U_{-1} &=A_{0}^0 + V_{-1}^n,\\
            (V_{-1}^n\psi)(x) &= \frac{8\left(\frac{n\pi}{c}\right)^2+3+\cosh{2x}}{8\cosh^2{x}}\begin{pmatrix}\epsilon_+\psi_+(x)\\-\epsilon_-\psi_-(x)\end{pmatrix}.
        \end{split}
    \end{align}
\end{lemma}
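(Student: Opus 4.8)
The plan is to follow the same template as the proof of Lemma~\ref{lemma:unitary-K1}, replacing trigonometric functions by their hyperbolic counterparts throughout. First I would verify that $U_{-1}$ is a well-defined unitary operator: since $\cosh x > 0$ for all real $x$, the multiplier $\cosh(x)^{-1/2}$ is smooth, strictly positive and bounded (together with its inverse) on the compact interval $J_1 = (-b,a)$, so $U_{-1}$ is a bounded, boundedly invertible bijection. Its unitarity is the one-line computation
\begin{equation}
    (U_{-1}\psi, U_{-1}\varphi)_{(J_1,\dd\mu_{-1})} = \int_{J_1} \cosh(x)^{-1}\,\overline{\psi(x)}\varphi(x)\,\cosh(x)\dd x = (\psi,\varphi)_{(J_1,\dd x)}.
\end{equation}

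Next I would compute the conjugated operator $U_{-1}^{-1}A_{-1}^n U_{-1}$ on each component $\psi_\pm$ (the $\pm$-structure factors through the multiplication, so it suffices to track the scalar second-order differential expression $-\partial_1^2 - \tanh(x)\partial_1 + \frac{(n\pi)^2}{c^2\cosh^2 x}$ and then reinstate the $\binom{\epsilon_+}{-\epsilon_-}$ prefactor). Writing $w(x) := \cosh(x)^{1/2}$ so that $(U_{-1}\psi)(x) = w(x)^{-1}\psi(x)$ and $U_{-1}^{-1}$ is multiplication by $w$, one expands
\begin{equation}
    U_{-1}^{-1}\left(-\partial_1^2 - \tanh x\,\partial_1 + \tfrac{(n\pi)^2}{c^2\cosh^2 x}\right)U_{-1}\psi = -\psi'' + \left(\text{first-order terms}\right)\psi' + \left(\text{potential}\right)\psi.
\end{equation}
The key point is that the weight $w = \cosh^{1/2}$ is chosen precisely so that the first-order term cancels: the contribution $-w^{-1}(w^{-1})'' - \tanh x\, w^{-1}(w^{-1})'$ reorganises, after using $w^{-1}(w^{-1})' = -\tfrac12\tanh x$, so that the $\psi'$ coefficient becomes $-2w\,(w^{-1})' - \tanh x = \tanh x - \tanh x = 0$, leaving a pure $-\partial_1^2$ plus a multiplication potential. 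Collecting the zeroth-order terms — the original $\frac{(n\pi)^2}{c^2\cosh^2 x}$ plus the extra $-w(w^{-1})'' - \tanh x\, w(w^{-1})'$ coming from differentiating the weight — and simplifying with $\cosh 2x = 2\cosh^2 x - 1$ yields exactly
\begin{equation}
    V_{-1}^n(x) = \frac{8\left(\frac{n\pi}{c}\right)^2 + 3 + \cosh 2x}{8\cosh^2 x},
\end{equation}
now tensored with $\binom{\epsilon_+}{-\epsilon_-}$ acting on $(\psi_+,\psi_-)$. This identifies the transformed operator as $A_0^0 + V_{-1}^n$.

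Finally I would check that $U_{-1}$ maps $\dom A_{-1}^n$ onto $\dom A_0^0$: since $w$ and $w^{-1}$ are smooth and bounded on $J_1$, multiplication by $w^{-1}$ preserves $H^2$ on each of $(-b,0)$ and $(0,a)$; the Dirichlet conditions $\psi_\pm$ at $\mp b, a$ are preserved because $w$ never vanishes; continuity $\psi_+(0) = \psi_-(0)$ is preserved since $w$ is continuous across $0$ with $w(0) = 1$; and the Kirchhoff-type interface condition $\epsilon_+\psi_+'(0_+) = -\epsilon_-\psi_-'(0_-)$ is preserved because $w(0) = 1$ and $w'(0) = \tfrac12\tanh(0)\cosh(0)^{1/2} = 0$, so $(w^{-1}\psi)'(0_\pm) = \psi'(0_\pm)$. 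The main obstacle here is none too serious — it is simply the bookkeeping in the second paragraph, making sure every term produced by differentiating the weight twice is accounted for and that the trigonometric identity collapses the potential to the stated closed form; conceptually this is identical to the $K=+1$ case of Lemma~\ref{lemma:unitary-K1}, with $\cos \mapsto \cosh$ and the attendant sign flips ($\cosh'' = \cosh$ rather than $\cos'' = -\cos$) explaining why the numerator reads $+3+\cosh 2x$ instead of $-3-\cos 2x$.
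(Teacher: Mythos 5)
Your proposal is correct and follows essentially the same route as the paper, which simply declares the $K=-1$ case analogous to Lemma~\ref{lemma:unitary-K1} (verify unitarity of the multiplication operator, conjugate the differential expression so the first-order term cancels, collapse the zeroth-order terms via $\cosh 2x = 2\cosh^2 x - 1$, and check that the weight $w=\cosh^{1/2}$, having $w(0)=1$, $w'(0)=0$ and being smooth and bounded with bounded inverse on $J_1$, preserves the boundary, continuity and interface conditions, so the domains match). Only cosmetic slips: in the cancellation step you mean $w(w^{-1})' = -\tfrac12\tanh x$ (not $w^{-1}(w^{-1})'$), and it is $U_{-1}^{-1}$ that carries $\dom A_{-1}^n$ onto $\dom A_0^0$; neither affects the argument.
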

\begin{proof}
    Analogous to Lemma~\ref{lemma:unitary-K1}.
\end{proof}

\begin{remark}\label{remark:general-unitary}
    Note that the transformation from curved to \emph{Euclidean} case plus potential can be made in general. For linear operator $H$ in $L^2(\Omega_0, f \dd x \dd y)$ given by $H = -g^{-\frac{1}{2}}\pp_i g^{\frac{1}{2}} g^{ij} \pp_j$ and metric tensor $(g_{ij}):=\mathrm{diag}(1, f^2)$, the sought transformation is
\begin{equation}
    (\hat H\psi)(x):= (g^{\frac{1}{4}}H g^{-\frac{1}{4}}\psi)(x) = -(\pp_i g^{ij} \pp_j \psi)(x) + V(x)\psi(x)
\end{equation}
in $L^2(\Omega)$, where $V$ is fully determined by $g$. The construction in applied to those Riemannian 2-manifolds such that, in normal coordinates, $g(x,y)=\mathrm{diag}(1,f^2(x))$, the unitary transformation reads
  \begin{align}
    \begin{split}
      U&: L^2(J_1,\dd x)\to L^2(J_1,\dd \mu_K)\\
      (U\psi)(x)&:=f^{-1/2}(x)\psi(x)
    \end{split}
  \end{align}
  and then the potential is given as
  \begin{equation}
    V_K^n(x)=\frac{f''(x)}{2 f(x)}-\frac{f'(x)^2}{4 f(x)^{2}}+\frac{\left(\frac{n\pi}{c}\right)^2}{f(x)^{2}}.
  \end{equation}
  It should be noted that the Jacobi equation connecting the metric and curvature is now more complicated.
\end{remark}

\subsection{Essential self-adjointness}
\begin{lemma}\label{lemma:A00-self-adjoint}
    Fix $n\in\mathbb{N}_0$. The operator $A_0^n$ in zero curvature is self-adjoint. Let $S_n:=\{f_{n,m}\}_{m\in\mathbb{Z}}$ be eigenvectors corresponding to $A_0^n$. Then $S_n$ forms an orthonormal basis of $L^2(J_1)$.
\end{lemma}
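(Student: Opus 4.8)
The plan is to treat $A_0^n$ for what it really is --- a regular second-order ordinary differential operator on the bounded interval $J_1=(-b,a)$ carrying a transmission (interface) condition at the interior point $x=0$ --- and to argue entirely within one-dimensional ODE theory, never invoking quadratic forms. The latter is essential: $A_0^n$ is \emph{not} semi-bounded, but self-adjointness of a regular one-dimensional operator does not need semi-boundedness, so von Neumann's deficiency-index criterion is the natural tool. Throughout, write $\tau$ for the formal differential expression underlying $A_0^n$, i.e. $\tau\psi := \bigl(\epsilon_+(-\psi_+''+(\tfrac{n\pi}{c})^2\psi_+),\ -\epsilon_-(-\psi_-''+(\tfrac{n\pi}{c})^2\psi_-)\bigr)$; since $\epsilon$ is real, $\tau$ is a formally symmetric expression on each of $(0,a)$ and $(-b,0)$.

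First I would record two routine facts. Symmetry of $A_0^n$ follows from the same integration by parts used in the proof that $\dot A_K$ is symmetric, carried out now on $(0,a)$ and $(-b,0)$ separately, the interface term $\bar u(0)\bigl(\epsilon_+\psi_+'(0_+)+\epsilon_-\psi_-'(0_-)\bigr)$ vanishing by the flux condition in $\dom A_0^n$. Closedness is the standard fact that on a bounded interval the maximal domain $\{\psi:\tau\psi\in L^2\}$ is contained in $H^2$ of each piece, with the a priori estimate $\|\psi_\pm''\|_{L^2}\le C\bigl(\|A_0^n\psi\|+\|\psi\|\bigr)$ showing that the graph norm controls the $H^2$-norm; the Dirichlet and interface conditions are closed constraints. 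Thus it suffices to prove that both deficiency subspaces $\ker\bigl((A_0^n)^*\mp i\bigr)$ are trivial.

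The heart of the proof is this deficiency computation. An element $v$ of such a subspace solves the eigenvalue equation $\tau v = z v$ with $z=\pm i$ distributionally on each of $(0,a)$, $(-b,0)$ --- hence $v_\pm\in H^2$ and is a combination of two exponentials with non-real rates on each piece --- and, in addition, the Green boundary form $(v,A_0^n\psi)-(\tau v,\psi)$ vanishes for every $\psi\in\dom A_0^n$. Expanding that boundary form into its contributions at $x=a,0,-b$ and using that the quantities $\psi_+'(a)$, $\psi_-'(-b)$, $\psi(0)$, $\psi_+'(0)$ range freely over $\psi\in\dom A_0^n$ (with $\psi_-'(0)$ then determined by the flux condition), the vanishing of the four corresponding coefficients yields \emph{exactly} $v_+(a)=0$, $v_-(-b)=0$, $v_+(0)=v_-(0)$ and $\epsilon_+v_+'(0)=-\epsilon_-v_-'(0)$ --- that is, $v\in\dom A_0^n$. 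But then $A_0^n v=\tau v= z v$ with $z$ non-real, which is impossible for the symmetric operator $A_0^n$ unless $v=0$. Hence both deficiency indices vanish and $A_0^n$ is self-adjoint. (Equivalently one could construct the transmission Green's function for $\tau-z$, $z\notin\mathbb{R}$, explicitly and verify directly that $A_0^n-z$ is a bijection onto $L^2(J_1)$.) I expect the only genuinely delicate point to be precisely this boundary-form bookkeeping: carrying the sign-changing coefficient $\epsilon$ correctly through Green's identity and recognizing that the resulting adjoint interface conditions reproduce those of $\dom A_0^n$; everything else is routine.

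It remains to obtain the orthonormal basis. Since $A_0^n$ is now self-adjoint, $i\in\rho(A_0^n)$, and $(A_0^n-i)^{-1}$ maps $L^2(J_1)$ continuously into $\dom A_0^n$, which by the a priori estimate above embeds continuously into $H^2((-b,0))\oplus H^2((0,a))$; the latter embeds compactly into $L^2((-b,0))\oplus L^2((0,a))\simeq L^2(J_1)$ by Rellich's theorem on a bounded interval. Hence $A_0^n$ has compact resolvent, so by the spectral theorem for self-adjoint operators with compact resolvent its spectrum is purely discrete and it admits a complete orthonormal system of eigenvectors. These eigenvectors are precisely the functions collected in $S_n=\{f_{n,m}\}_{m\in\mathbb{Z}}$ exhibited in Proposition~\ref{prop:2d-spectrum}; therefore $S_n$ is an orthonormal basis of $L^2(J_1)$, which completes the proof.
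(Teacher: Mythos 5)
Your argument is correct, but it takes a genuinely different route from the paper. The paper disposes of self-adjointness by citation and perturbation: it quotes the quantum-graph result of~\cite{hussein2014sign} for self-adjointness and empty essential spectrum of $A_0^0$, observes that $A_0^n-A_0^0$ is a bounded real (piecewise-constant, sign-changing) potential, so self-adjointness and emptiness of the essential spectrum persist by standard perturbation theory~\cite{kato2013perturbation}, and then invokes the spectral theorem to get the orthonormal basis of eigenvectors. You instead prove everything from scratch as a regular ODE transmission problem: symmetry by Green's identity on the two half-intervals, closedness via the elliptic a priori estimate, the boundary-form bookkeeping at $x=a,0,-b$ showing that the adjoint's interface/boundary conditions reproduce those of $\dom A_0^n$ (so both deficiency indices vanish), and compactness of the resolvent from piecewise $H^2$-regularity plus Rellich, which yields the discrete spectrum and the eigenbasis. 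Your computation of the four vanishing coefficients is exactly right (the flux condition $\epsilon_-\psi_-'(0)=-\epsilon_+\psi_+'(0)$ converts the $\psi_\pm'(0)$-terms into $\epsilon_+\psi_+'(0)\bigl(\overline{v_+(0)}-\overline{v_-(0)}\bigr)$, giving continuity of $v$), and the sign-indefinite $\epsilon$ causes no trouble precisely because semi-boundedness is never used. What each approach buys: the paper's proof is two lines but outsources the essential content to~\cite{hussein2014sign}; yours is self-contained, makes the compact resolvent of $A_K^n$ (asserted in Theorem~\ref{theorem:intro-essential}) explicit, and transfers verbatim to the curved cases after the unitary transforms of Lemmata~\ref{lemma:unitary-K1}--\ref{lemma:unitary-K-1}, since the added potentials are bounded. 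Two cosmetic points: when identifying $S_n$ with the functions of Proposition~\ref{prop:2d-spectrum} you should take the transversal factors $\psi_{n,m}$ (normalized), not the two-dimensional products $f_{n,m}$; and to know that this labelled family exhausts the eigenbasis it is worth remarking that each eigenvalue of $A_0^n$ is simple, since an eigenfunction is determined up to a scalar by the ODE together with the Dirichlet condition at one endpoint.
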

\begin{proof}
    In~\cite{hussein2014sign}, it is shown that $A_0^0$ is self-adjoint and $\sess(A_0^0)=\emptyset$. As the difference between $A_0^n$ and $A_0^0$ is a constant real potential, hence symmetric and $A_0^0$-bounded, $A_0^n=A_0^0 + (n\pi/c)^2$, the operator $A_0^n$ is self-adjoint and its essential spectrum is also empty~\cite{kato2013perturbation}. Hence, spectral theorem for self-adjoint operators together with empty essential spectrum implies~\cite{davies1996spectral} the second statement.
\end{proof}
\begin{lemma}\label{lemma:AKm-self-adjoint}
    Fix $K\in\mathbb{R}$. Operator $A^n_K$ is self-adjoint for all $n\in \mathbb{N}_0$. Eigenvectors of $A_K^n$ form a complete orthonormal set of $L^2(J_1,\dd\mu_K)$.
\end{lemma}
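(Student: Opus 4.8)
The plan is to bootstrap from the flat one-dimensional operator $A_0^0$ of Lemma~\ref{lemma:A00-self-adjoint} using the curvature rescaling of Lemma~\ref{lemma:curvature-scaling} together with the unitary transforms of Lemmas~\ref{lemma:unitary-K1} and~\ref{lemma:unitary-K-1}, and then to close the argument with Kato--Rellich perturbation theory. First I would observe that a homothety is a unitary equivalence up to the positive scalar factor $|K|$, which affects neither self-adjointness nor the basis property; since by separation of variables (Lemma~\ref{lemma:AK-separation}) the rescaling of Lemma~\ref{lemma:curvature-scaling} descends to each transversal fibre, it suffices to treat $K\in\{-1,0,1\}$. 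The case $K=0$ is then precisely the content of Lemma~\ref{lemma:A00-self-adjoint}.

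For $K=\pm1$ I would use that self-adjointness is preserved under the unitary maps $U_{\pm1}$, so by Lemmas~\ref{lemma:unitary-K1}--\ref{lemma:unitary-K-1} it is enough to prove that $A_0^0+V_{\pm1}^n$ is self-adjoint on $\dom A_0^0$. The potential $V_{\pm1}^n$ is multiplication by a real scalar function acting with opposite signs ($\epsilon_+$ on $\Omega_+$, $-\epsilon_-$ on $\Omega_-$); such a multiplication by a real bounded function — irrespective of the sign pattern — is a bounded self-adjoint operator on $L^2(J_1)$. Boundedness holds because on the compact interval $J_1\subset(-\pi/2,\pi/2)$ the weight $\cos x$ is bounded away from $0$ (case $K=+1$), while $\cosh x\ge1$ (case $K=-1$), so the explicit prefactors in Lemmas~\ref{lemma:unitary-K1}--\ref{lemma:unitary-K-1} are in $L^\infty(J_1)$. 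Hence $V_{\pm1}^n$ is symmetric and $A_0^0$-bounded with relative bound $0$, and the Kato--Rellich theorem yields that $A_0^0+V_{\pm1}^n$ is self-adjoint on $\dom A_0^0$. Transferring back via $U_{\pm1}^{-1}$ (and via the homothety for general $K\ne0$) shows $A_K^n$ is self-adjoint. For $n=0$ and $K=\pm1$ the same formulas still define a bounded multiplication operator, so the argument applies verbatim.

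For the completeness statement I would invoke that $A_0^0$ has empty essential spectrum, $\sess(A_0^0)=\emptyset$, by Lemma~\ref{lemma:A00-self-adjoint} and~\cite{hussein2014sign}; by Weyl's theorem on the stability of the essential spectrum under relatively compact — in particular bounded — perturbations, $\sess(A_0^0+V_{\pm1}^n)=\emptyset$ as well, so this operator has compact resolvent. The spectral theorem for self-adjoint operators with compact resolvent then produces an orthonormal basis of $L^2(J_1)$ consisting of its eigenvectors; applying $U_{\pm1}$ (a unitary onto $L^2(J_1,\dd\mu_K)$) and, for general $K\ne0$, the homothety, carries this basis to an orthonormal basis of $L^2(J_1,\dd\mu_K)$ made of eigenvectors of $A_K^n$, the eigenvalues being rescaled by $|K|$ as in Lemma~\ref{lemma:curvature-scaling}.

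The perturbation-theoretic bookkeeping is routine; the one point that deserves a word of care is that $V_{\pm1}^n$ inherits the indefinite sign pattern of $A_K^n$, so no positivity- or form-comparison argument is available — but this is harmless, since Kato--Rellich and Weyl's theorem need only boundedness (equivalently, relative bound $0$), which holds regardless of sign. A secondary technical point, already dealt with inside the proofs of Lemmas~\ref{lemma:unitary-K1}--\ref{lemma:unitary-K-1}, is that the transforms $U_{\pm1}$ genuinely preserve the boundary and interface conditions defining $\dom A_K^n$, so that $\dom\bigl(U_{\pm1}^{-1}A_{\pm1}^nU_{\pm1}\bigr)=\dom A_0^0$ exactly.
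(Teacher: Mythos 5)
Your proposal is correct and follows essentially the paper's own route: reduce to $K\in\{-1,0,1\}$ via the homothety of Lemma~\ref{lemma:curvature-scaling}, settle $K=0$ by Lemma~\ref{lemma:A00-self-adjoint}, and handle $K=\pm1$ through the unitaries $U_{\pm1}$ of Lemmata~\ref{lemma:unitary-K1}--\ref{lemma:unitary-K-1} together with the bounded real potentials $V_{\pm1}^n$, Kato--Rellich, and the spectral theorem; the paper merely states this more tersely. One small caution: bounded perturbations are not relatively compact in general, so when you invoke Weyl's theorem you should say explicitly that $A_0^0$ has compact resolvent (equivalently $\sess(A_0^0)=\emptyset$), which is what makes the bounded $V_{\pm1}^n$ relatively compact and preserves emptiness of the essential spectrum.
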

\begin{proof}
    Case of $K=0$ is the previous lemma, the other case can be reduced to $K=\pm1$. Due to unitary transformations in Lemmata~\ref{lemma:unitary-K1} and \ref{lemma:unitary-K-1} and self-adjointness of $A_0^0$ in Lemma~\ref{lemma:A00-self-adjoint}, it is straightforward to show the statement. Unitary transformations preserve self-adjointness and the potentials $V_{\pm 1}^n$ are real and bounded.
\end{proof}
\begin{remark}
    It can easily be seen that the same argument applies for setup with more general metric in Fermi coordinates given by $g(x,y)=\mathrm{diag}(1, f^2(x))$ with $f$,  $f^{-1}$,  $f'$,  $f'' \in L^\infty(J_1)$, by Remark~\ref{remark:general-unitary}.
\end{remark}

\begin{theorem}[\cite{davies1996spectral}]\label{thm:davies-ess-self-adjoint}
Let $H$ be a symmetric operator on a Hilbert space $\mathcal{H}$ with domain $L$, and let $\{f_n\}_{n=0}^\infty$ be a complete orthonormal set in $\mathcal{H}$. If each $f_n$ lies in $L$ and there exist $\lambda_n\in\mathbb{R}$ such that $H f_n = \lambda_n f_n$ for every $n$, then $H$ is essentially self-adjoint. Moreover, the spectrum of H is the closure in $\mathbb{R}$ of the set of all $\lambda_n$.
\end{theorem}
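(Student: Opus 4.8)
The plan is to combine the standard deficiency-index criterion for essential self-adjointness with an explicit construction of the resolvent of the closure.

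\emph{Essential self-adjointness.} Recall that a symmetric operator $H$ is essentially self-adjoint if and only if $\ker(H^*-\zeta)=\{0\}$ for every $\zeta\in\mathbb{C}\setminus\mathbb{R}$ (equivalently for $\zeta=\pm\iu$). So I would take $g\in\ker(H^*-\zeta)$, i.e.\ $H^*g=\zeta g$, and test against each basis vector: since $f_n\in L=\dom H$ and $\lambda_n\in\mathbb{R}$,
\begin{equation}
    \overline{\zeta}\,(g,f_n)=(H^*g,f_n)=(g,Hf_n)=\lambda_n(g,f_n),
\end{equation}
so $(\lambda_n-\overline{\zeta})(g,f_n)=0$; as $\overline{\zeta}\notin\mathbb{R}$ we have $\lambda_n\neq\overline{\zeta}$, whence $(g,f_n)=0$ for all $n$, and completeness of $\{f_n\}$ forces $g=0$. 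Thus both deficiency indices vanish and $H$ is essentially self-adjoint; write $\overline H$ for the closure, the unique self-adjoint extension.

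\emph{Spectrum.} Each $f_n$ is an eigenvector of $\overline H$ with real eigenvalue $\lambda_n$, so $\lambda_n\in\sigma(\overline H)$, and closedness of the spectrum gives $\overline{\{\lambda_n\}}\subseteq\sigma(\overline H)$. For the converse I would fix $\mu\in\mathbb{C}$ with $d:=\dist(\mu,\{\lambda_n\})>0$ --- this covers every non-real $\mu$ and every real $\mu$ outside $\overline{\{\lambda_n\}}$ --- and define $R_\mu\in\mathcal{L}(\mathcal{H})$ by $R_\mu f_n:=(\lambda_n-\mu)^{-1}f_n$, extended by linearity and continuity; the uniform bound $|\lambda_n-\mu|^{-1}\le d^{-1}$ makes $R_\mu$ bounded with $\|R_\mu\|\le d^{-1}$. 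It remains to check that $R_\mu=(\overline H-\mu)^{-1}$. On the finite span $D_0:=\operatorname{span}\{f_n\}$ one computes $(\overline H-\mu)R_\mu=R_\mu(\overline H-\mu)=\mathrm{id}$ directly, and then passes to all of $\mathcal H$ using boundedness of $R_\mu$ and closedness of $\overline H-\mu$; injectivity of $\overline H-\mu$ follows by the same pairing computation as above (if $(\overline H-\mu)u=0$ then $(\lambda_n-\mu)(u,f_n)=0$ for all $n$). Hence $\mu\in\rho(\overline H)$, giving $\sigma(\overline H)\subseteq\overline{\{\lambda_n\}}$ and therefore equality.

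\emph{Main obstacle.} The delicate point is not the deficiency computation (routine) but the transition from $D_0$ to the full closure: one must know that $D_0=\operatorname{span}\{f_n\}$ is a core for $\overline H$, so that the diagonal candidate $R_\mu$ inverts $\overline H$ itself and not merely the restriction $H|_{D_0}$. This is obtained by running the deficiency-index argument on $H|_{D_0}$, concluding that $\overline{H|_{D_0}}$ is self-adjoint, and invoking maximality of self-adjoint operators: $\overline{H|_{D_0}}\subseteq\overline H$ with the smaller one self-adjoint and the larger one symmetric forces the two to coincide. Everything else is bookkeeping with bounded operators and closed graphs.
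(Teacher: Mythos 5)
Your proof is correct. Note that the paper does not prove this statement at all --- it is quoted from Davies \cite{davies1996spectral} --- so the only comparison available is with the textbook treatment, which proceeds essentially by conjugating with the unitary map $f_n\mapsto e_n$ onto $\ell^2$ and identifying $\overline{H}$ with the maximal diagonal multiplication operator $c_n\mapsto\lambda_n c_n$, whose self-adjointness and spectrum are read off directly. You instead verify the von Neumann criterion $\ker(H^*\mp\iu)=\{0\}$ by pairing against the eigenbasis and then exhibit the resolvent of $\overline{H}$ explicitly as the diagonal operator $R_\mu$ for every $\mu$ with $\dist(\mu,\{\lambda_n\})>0$; both routes are legitimate, yours being more self-contained at the operator level while the $\ell^2$-model gives the spectral representation for free. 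One remark: the core argument you single out as the ``main obstacle'' is sound (the closure of $H|_{D_0}$ is self-adjoint and contained in the symmetric operator $\overline{H}$, hence equals it by maximality of self-adjoint operators), but it is dispensable: once $(\overline{H}-\mu)R_\mu=\mathrm{id}$ holds on all of $\mathcal{H}$ (by density of $D_0$, boundedness of $R_\mu$ and closedness of $\overline{H}-\mu$) and $\overline{H}-\mu$ is injective (your pairing computation, using symmetry of $\overline{H}$ and $\lambda_n\neq\mu$), the operator $\overline{H}-\mu$ is a bijection of its domain onto $\mathcal{H}$ with bounded inverse $R_\mu$, so $\mu\in\rho(\overline{H})$ without any discussion of cores.
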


\begin{theorem}\label{theorem:dAK-essential}
    Operator $\dot A_{K}$ given in~\eqref{eq:def-dotA_K} satisfies the following:
    \begin{enumerate}
        \item its eigenfunctions form a complete orthonormal set in $L^2(\Omega_0, g)$,
        \item $\overline{\sigma_\mathrm{point}(\dot A_K)}=\cup_{n\in\mathbb{N}}\ \sigma_{\mathrm{point}}\left(A_K^n\right)$,
        \item it is essentially self-adjoint, ie. there exists unique self-adjoint $A_K:=\overline{\dot A_K}$ for arbitrary constant Gaussian curvature $K\in\mathbb{R}$ and contrast $\kappa>0$,
        \item $\sigma(A_K)=\overline{\sigma_\mathrm{point}(\dot A_K)}$.
        \item operator $A_K$ can be expressed as a direct sum of transversal operators
            \begin{equation}
                A_K = \bigoplus_{n=1}^\infty \left(\mathbb{I}\otimes A_K^n\right) P_n,
            \end{equation}
            where $P_n: L^2(\Omega_0,g)\to L^2(\Omega_0,g)$ are rank-one operators; for $\Psi\in L^2(\Omega_0,g)$
            \begin{equation}
                (P_n\Psi)(x,y):=\left(\phi_m, \Psi(x,\cdot)\right)_{L^2(J_2)}\phi_n(y),\quad \phi_n(y):=\sqrt{\frac{2}{c}}\sin\left(\frac{n\pi}{c}y\right)
            \end{equation}
            and transversal (self-adjoint) operator $A_K^n$ was already given in Lemma~\ref{lemma:AK-separation}.
    \end{enumerate}
\end{theorem}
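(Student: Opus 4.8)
The plan is to reduce the whole statement to three facts already in hand: that each transversal operator $A_K^n$ is self-adjoint and has a complete orthonormal set of eigenvectors $\{\psi_{n,m}\}_{m\in\mathbb Z}$ in $L^2(J_1,\dd\mu_K)$ (Lemma~\ref{lemma:AKm-self-adjoint}), that every $L^2(\Omega_0,g)$-function separates in the $y$-variable over the Dirichlet sine system $\{\phi_n\}_{n\ge1}$, which is complete in $L^2(J_2)$ (Lemma~\ref{lemma:separation}), and the abstract eigenfunction criterion of Theorem~\ref{thm:davies-ess-self-adjoint}; together with the fact that $\dot A_K$ is symmetric (Section~\ref{section:basic-properties}), these essentially force the conclusion.

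First I would set $f_{n,m}(x,y):=\psi_{n,m}(x)\,\phi_n(y)$ for $n\in\mathbb N$, $m\in\mathbb Z$, and show $\{f_{n,m}\}$ is a complete orthonormal set of $L^2(\Omega_0,g)\simeq L^2(J_1,\dd\mu_K)\otimes L^2(J_2)$. Orthonormality is inherited from $\{\psi_{n,m}\}_m$ and $\{\phi_n\}_n$; completeness is the standard tensor/Fubini argument: if $\Psi\perp f_{n,m}$ for all $n,m$, then writing $\Psi(x,y)=\sum_n\psi_n(x)\phi_n(y)$ via Lemma~\ref{lemma:separation} and using that $\dd\nu_g=f\,\dd x\,\dd y$ factorises the integral over $\Omega_0$ into $\int_{J_2}\!\big(\int_{J_1}\cdots\dd\mu_K\big)\dd y$, one gets $\psi_n\perp\psi_{n,m}$ in $L^2(J_1,\dd\mu_K)$ for every $m$, hence $\psi_n=0$ and $\Psi=0$. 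Next I would check $f_{n,m}\in\dom\dot A_K$: the Dirichlet condition on $\partial\Omega_0$ splits into $\psi_{n,m}(a)=\psi_{n,m}(-b)=0$ and $\phi_n(0)=\phi_n(c)=0$; $H^2(\Omega_\pm,g)$-membership holds because $f_{n,m}$ is a product of an $H^2$ function of $x$ with a smooth function of $y$ and the Sobolev norms on $(\Omega_0,g)$ are equivalent to the flat ones; and cancelling the common factor $\phi_n$ on $\mathcal C$ turns the continuity and the sign-changing derivative condition into exactly the interface conditions defining $\dom A_K^n$. By Lemma~\ref{lemma:AK-separation}, $\dot A_K f_{n,m}=(A_K^n\psi_{n,m})\otimes\phi_n=\lambda_{n,m}f_{n,m}$, so $\dot A_K$ has a complete orthonormal set of eigenvectors inside its domain.

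Theorem~\ref{thm:davies-ess-self-adjoint} then gives item~3 (essential self-adjointness, $A_K:=\overline{\dot A_K}$) and $\sigma(A_K)=\overline{\{\lambda_{n,m}\}}$. For the point spectrum: each $\lambda_{n,m}$ is an eigenvalue of $\dot A_K$ and hence of $A_K$; conversely, if $A_Ku=\mu u$ with $\mu\notin\{\lambda_{n,m}\}$, self-adjointness forces $u\perp f_{n,m}$ for all $n,m$, contradicting completeness, so $\sigma_{\mathrm{point}}(\dot A_K)=\sigma_{\mathrm{point}}(A_K)=\{\lambda_{n,m}\}=\bigcup_n\sigma_{\mathrm{point}}(A_K^n)$, whence items~2 and~4 follow upon closing up. For the direct-sum formula, let $(P_n\Psi)(x,y):=(\phi_n,\Psi(x,\cdot))_{L^2(J_2)}\phi_n(y)$ be the orthogonal projection onto $\mathcal H_n:=L^2(J_1,\dd\mu_K)\otimes\mathbb C\phi_n$, so $L^2(\Omega_0,g)=\bigoplus_{n\ge1}\mathcal H_n$, and put $B:=\bigoplus_{n\ge1}(\mathbb I\otimes A_K^n)P_n$ on its natural domain. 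As an orthogonal direct sum of self-adjoint operators $B$ is self-adjoint, $Bf_{n,m}=\lambda_{n,m}f_{n,m}$, and the restriction of $B$ to $\operatorname{span}\{f_{n,m}\}$ — which coincides with the restriction of $\dot A_K$ to that span — is, by Theorem~\ref{thm:davies-ess-self-adjoint} again, essentially self-adjoint with closure $B$. Hence $B=\overline{\dot A_K|_{\operatorname{span}\{f_{n,m}\}}}\subseteq\overline{\dot A_K}=A_K$, and since a self-adjoint operator admits no proper self-adjoint extension, $B=A_K$.

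Most of this is bookkeeping; the two places needing genuine care are the completeness of the product system $\{f_{n,m}\}$, which must be argued with the weighted measure $\dd\nu_g=f\,\dd x\,\dd y$ and rests squarely on the one-dimensional completeness of Lemma~\ref{lemma:AKm-self-adjoint}, and the verification that each $f_{n,m}$ lands in $\dom\dot A_K$ rather than in some other self-adjoint realisation — this is exactly what ties the transversal domain $\dom A_K^n$ back to $\dom\dot A_K$. I expect the most delicate step to be the identification $B=A_K$: it is cleanest to route it through essential self-adjointness of the restriction to $\operatorname{span}\{f_{n,m}\}$ rather than trying to decompose a general element of $\dom\dot A_K$ into its $y$-modes and reason mode by mode.
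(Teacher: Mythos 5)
Your proposal is correct, and for items 1--4 it is essentially the paper's own argument: build the separated eigenfunctions $f_{n,m}=\psi_{n,m}\otimes\phi_n$ from the transversal eigenbases of Lemma~\ref{lemma:AKm-self-adjoint}, check completeness of the product system and membership in $\dom\dot A_K$, and invoke the eigenfunction criterion of Theorem~\ref{thm:davies-ess-self-adjoint}; your explicit verification that each $f_{n,m}$ satisfies the boundary and interface conditions of $\dom\dot A_K$ is a point the paper leaves implicit, and your orthogonality argument for why no eigenvalues outside $\{\lambda_{n,m}\}$ can occur matches the paper's remark that completeness exhausts the eigenfunctions. Where you genuinely diverge is item 5. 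The paper identifies $A_K$ with the direct sum by taking a general $\Psi\in\dom A_K$, expanding it over the $y$-modes via Lemma~\ref{lemma:separation}, passing to the limit $A_K\Psi_N\to A_K\Psi$ and computing $\left(\phi_k,A_K\Psi_N\right)_{L^2(J_2)}=A_K^k\psi_k$, which yields the fibrewise action and (by projection) the fibre domains directly. You instead define $B:=\bigoplus_n(\mathbb{I}\otimes A_K^n)P_n$ abstractly, note that $B$ is self-adjoint as an orthogonal direct sum of self-adjoint operators, observe that $B$ and $\dot A_K$ agree on $\operatorname{span}\{f_{n,m}\}$, apply Theorem~\ref{thm:davies-ess-self-adjoint} to that restriction, and conclude $B=A_K$ from the fact that a self-adjoint operator admits no proper self-adjoint extension. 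Your route is cleaner and avoids the mode-by-mode limit bookkeeping; the paper's route, in exchange, produces the explicit formula $A_K\Psi=\sum_n(A_K^n\psi_n)\otimes\phi_n$ and the characterisation of the fibre domains, which is useful information downstream. Both are sound.
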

\begin{proof}
    Denote $\{\psi_{n,m}\}_{m\in\mathbb{Z}}$ the eigenfunctions of $A_K^n$ for some $n\in\mathbb{N}$, forming a complete orthonormal set of $L^2(J_1,\dd\mu_K)$ due to Lemma~\ref{lemma:AKm-self-adjoint}. Set $\{\psi_{n,m}\otimes\phi_n\}_{n,m\in\mathbb{Z}}$ is then a complete orthonormal set in $L^2(\Omega_0,g)$ (\cite[V, Example 1.10]{kato2013perturbation}) as $\{\phi_n\}_{n\in\mathbb{N}}$ is a basis in $L^2(J_2)$ (in notation of Lemma~\ref{lemma:separation}). As 
    \begin{equation}
        \dot A_K \left(\psi_{n,m}\otimes\phi_n\right) = \left(A_K^n\psi_{n,m}\right)\otimes\phi_n,
    \end{equation}
    function $\psi_{n,m}\otimes\phi_n$ is an eigenfunction of $\dot A_K$ for each $n\in\mathbb{N}$, $m\in\mathbb{Z}$. Hence, Theorem~\ref{thm:davies-ess-self-adjoint} applies and $\dot A_K$ is essentially self-adjoint. Also, as $\{\psi_{n,m}\otimes\phi_n\}_{n\in\mathbb{N},m\in\mathbb{Z}}$ is complete, these are precisely all eigenfunctions of $\dot A_K$.
    
    The last point follows from proof of Theorem~\ref{thm:davies-ess-self-adjoint} and the following. Let $\Psi\in\dom A_K\subset L^2(\Omega_0,g)$. By Lemma~\ref{lemma:separation}, it is the limit of $\{\Psi_N\}_{N\in\mathbb{N}}$, $\Psi_N:=\sum_{n=1}^N \psi_n\otimes\phi_n$ with notation from the Lemma. As eigenvectors of $A_K$ form an ortonormal basis of  $L^2(\Omega,g)$, proof of Theorem~\ref{thm:davies-ess-self-adjoint} implies that $\lim_{N\to\infty}A_K \Psi_N=A \Psi$. Also, for $k\in\mathbb{N}$,
    \begin{align}
        \begin{split}
            \left(\phi_k,A_K\Psi_N\right)_{L^2(J_2)} = \sum_{n=1}^N\left(\phi_k,A_K^n\psi_n\otimes\phi_n\right)_{L^2(J_2)} = A_K^k\psi_k
        \end{split}
    \end{align}
    and hence, $A_K\Psi=\sum_{n=1}^\infty \left(A_K^n\psi_n\right)\otimes\psi_n$. The domain of $A_K^n$ is obtained by projection and characterization of Sobolev spaces in one dimension.
\end{proof}

\section{Form approach for non-critical contrast}\label{section:form-approach}
In Section~\ref{section:essential-self-adjoint}, we introduced a selfajoint operator $A_K$. In the following section, we will show that an alternative realisation $\mathcal{A}_K$ of the self-adjoint operator corresponding to $\dot A_K$ is possible for contrast $\kappa\not=1$ --- although we prove that the different techniques used to introduce the self-adjoint realisations result in the same operator $A_K=\mathcal{A}_K$ for $\kappa\not=1$. Furthermore, we show that the operator has compact resolvent for $\kappa\not=1$ and hence empty essential spectrum.

Let us recall the representation theorems on which this section is based on. Recall that continuity of a sequilinear form $a: \mathcal{V}\times\mathcal{V}\to\mathbb{C}$ in a Hilbert space $\mathcal{V}$ means that for some $C>0$:
\begin{equation}
    |a(u,v)|\leq C||u||_\mathcal{V}||v||_\mathcal{V},\quad \forall u,v\in\mathcal{V}.
\end{equation}

\begin{theorem}[{\cite[Thm. 2.1]{almog2015spectrum}}]\label{thm:almog1}
    Let $\mathcal{V}$ denote a Hilbert space. Let $a$ be a continuous sesquilinear form on $\mathcal{V}\times\mathcal{V}$. If  $a$ satisfies, for some $\Phi_1,\Phi_2\in\mathcal{L}(\mathcal{V})$,
    \begin{align}\label{eq:almog-twocond}
        \begin{split}
            |a(u,u)|+|a(u,\Phi_1u)|\geq\alpha||u||_{\mathcal{V}}^2,&\quad \forall u\in\mathcal{V},\\
            |a(u,u)|+|a(\Phi_2u,u)|\geq\alpha||u||_\mathcal{V}^2,&\quad\forall u\in\mathcal{V},
        \end{split}
    \end{align}
    then $\mathcal{A}\in\mathcal{L}(\mathcal{V})$ defined via
\begin{equation}
    a(u,v)=\left(u,\mathcal{A}v\right)_\mathcal{V}
\end{equation}
is a continuous isomorphism from $\mathcal{V}$ onto  $\mathcal{V}$. Moreover,  $\mathcal{A}^{-1}$ is continuous.
\end{theorem}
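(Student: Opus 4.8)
The plan is to reduce everything to the Riesz representation theorem together with the elementary fact that a bounded operator which is bounded below and whose adjoint is bounded below is a topological isomorphism. First I would construct $\mathcal{A}$: continuity of $a$ means that for each fixed $v\in\mathcal{V}$ the functional $u\mapsto a(u,v)$ is bounded and conjugate-linear, so by Riesz there is a unique $\mathcal{A}v\in\mathcal{V}$ with $a(u,v)=(u,\mathcal{A}v)_{\mathcal{V}}$ for all $u$; linearity of $a$ in $v$ makes $\mathcal{A}$ linear, and testing against $u=\mathcal{A}v$ in the continuity bound gives $||\mathcal{A}v||_{\mathcal{V}}^{2}=|a(\mathcal{A}v,v)|\le C||\mathcal{A}v||_{\mathcal{V}}||v||_{\mathcal{V}}$, whence $\mathcal{A}\in\mathcal{L}(\mathcal{V})$. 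I would also record its Hilbert-space adjoint $\mathcal{A}^{*}\in\mathcal{L}(\mathcal{V})$, defined by $(\mathcal{A}^{*}u,w)_{\mathcal{V}}=(u,\mathcal{A}w)_{\mathcal{V}}$.

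The crucial step is to notice that the two lines of \eqref{eq:almog-twocond} separately control $\mathcal{A}^{*}$ and $\mathcal{A}$. Writing $a(u,u)=(\mathcal{A}^{*}u,u)_{\mathcal{V}}$ and $a(u,\Phi_1 u)=(\mathcal{A}^{*}u,\Phi_1 u)_{\mathcal{V}}$ and applying Cauchy--Schwarz, the first inequality in \eqref{eq:almog-twocond} becomes $\alpha||u||_{\mathcal{V}}^{2}\le(1+||\Phi_1||)\,||\mathcal{A}^{*}u||_{\mathcal{V}}||u||_{\mathcal{V}}$, so $\mathcal{A}^{*}$ is bounded below. Symmetrically, writing $a(u,u)=(u,\mathcal{A}u)_{\mathcal{V}}$ and $a(\Phi_2 u,u)=(\Phi_2 u,\mathcal{A}u)_{\mathcal{V}}$, the second inequality gives $\alpha||u||_{\mathcal{V}}^{2}\le(1+||\Phi_2||)\,||u||_{\mathcal{V}}||\mathcal{A}u||_{\mathcal{V}}$, so $\mathcal{A}$ is bounded below, with $||\mathcal{A}u||_{\mathcal{V}}\ge\tfrac{\alpha}{1+||\Phi_2||}||u||_{\mathcal{V}}$.

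From here I would assemble the standard conclusion: $\mathcal{A}$ bounded below is injective with closed range; $\mathcal{A}^{*}$ bounded below is injective, so $\overline{\mathrm{ran}\,\mathcal{A}}=(\ker\mathcal{A}^{*})^{\perp}=\mathcal{V}$, and density together with closedness forces $\mathrm{ran}\,\mathcal{A}=\mathcal{V}$. Hence $\mathcal{A}$ is a continuous bijection of $\mathcal{V}$ onto $\mathcal{V}$, and the lower bound yields $||\mathcal{A}^{-1}||\le\tfrac{1+||\Phi_2||}{\alpha}$, so $\mathcal{A}^{-1}$ is continuous and $\mathcal{A}$ is the asserted continuous isomorphism.

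I do not expect a genuine analytic obstacle here --- the whole argument is soft. The one point that needs care, and which is really the content of this generalisation (it contains Lax--Milgram, $\Phi_i=0$, and $\mathbb{T}$-coercivity, $\Phi_i$ an isomorphism, as special cases), is to estimate the diagonal term $a(u,u)$ against $\mathcal{A}^{*}u$ in the first inequality but against $\mathcal{A}u$ in the second; this is exactly what decouples the two hypotheses into one statement about the range of $\mathcal{A}$ and one about its injectivity and lower bound. One should also keep track of the sesquilinearity convention --- conjugate-linear in the first argument so as to match $(\cdot,\cdot)_{\mathcal{V}}$ --- so that $\mathcal{A}$ and $\mathcal{A}^{*}$ appear on the correct sides of the pairings.
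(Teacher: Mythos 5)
Your proof is correct: constructing $\mathcal{A}$ via Riesz, observing that the first condition in~\eqref{eq:almog-twocond} bounds $\mathcal{A}^{*}$ from below (via $a(u,u)=(\mathcal{A}^{*}u,u)_{\mathcal{V}}$, $a(u,\Phi_1u)=(\mathcal{A}^{*}u,\Phi_1u)_{\mathcal{V}}$) while the second bounds $\mathcal{A}$ from below, and then combining closed range with dense range $(\ker\mathcal{A}^{*})^{\perp}=\mathcal{V}$ is exactly the standard argument for this generalized Lax--Milgram statement. The paper itself does not prove this theorem but quotes it from the reference, and your argument is essentially the proof given there, so there is nothing to flag beyond the convention point you already handled (conjugate-linearity in the first slot, inner product linear in the second).
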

Now, consider two Hilbert spaces $\mathcal{V}$ and $\mathcal{H}$ such that
\begin{align}\label{eq:almog-HV}
    \begin{split}
        \mathcal{V}\subset \mathcal{H}&,\quad \mathcal{V}\ \mathrm{is\ dense\ in}\ \mathcal{H},\\
        \forall u\in\mathcal{V}&:\quad||u||_\mathcal{H}\leq C||u||_\mathcal{V}
    \end{split}
\end{align}
for some $C>0$.
\begin{theorem}[{\cite[Thm. 2.2]{almog2015spectrum}}]\label{thm:almog2}
    Let $a$ be a continuous sesquilinear form satisfying~\eqref{eq:almog-twocond}. Let $\mathcal{H}\supset\mathcal{V}$ be a Hilbert space and suppose that~\eqref{eq:almog-HV} holds for the Hilbert spaces $\mathcal{V}$, $\mathcal{H}$. Further assume that $\Phi_1$,  $\Phi_2$ extend to a continuous linear map also in $\mathcal{H}$. Define operator $S:\dom S\to\mathcal{H}$ using
    \begin{align}
        \begin{split}
            \dom S&:=\left\{v\in\mathcal{V}: u\mapsto a(u,v) \ \mathrm{is\ continuous\ on}\ \mathcal{V}\ \mathrm{in\ the\ norm\ of\ } \mathcal{H}\right\},\\
            a(u,v)&=:\left(u,Sv\right)_\mathcal{H},\quad \forall v\in\dom S,\ \forall u\in\mathcal{V}.
        \end{split}
    \end{align}
    Then,
    \begin{enumerate}
        \item $\dom S$ is dense in both $\mathcal{V}$ and $\mathcal{H}$,
        \item $S$ is closed,
        \item $S$ is bijective from $\dom S$ onto $\mathcal{H}$ and $S^{-1}\in\mathcal{L}(\mathcal{H})$.
        \item Let $b$ denote the conjugate sesquilinear form of $a$ given by
             \begin{equation*}
                 (u,v)\mapsto b(u,v):=\overline{a(v,u)}.
            \end{equation*}
          and denote $\tilde S$ the operator associated to $b$ by the same construction --- then
          \begin{equation*}
              S^*=(\tilde S)^* \ \mathrm{and}\ (\tilde S)^*=S.
          \end{equation*}
    \end{enumerate}
\end{theorem}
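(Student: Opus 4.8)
The plan is to reduce the entire statement to Theorem~\ref{thm:almog1} by writing the operator $S$ explicitly as a composition of the form operator $\mathcal{A}$ with a fixed operator intertwining the two inner products. First I would introduce $B\in\mathcal{L}(\mathcal{H},\mathcal{V})$ characterised by $(u,Bw)_{\mathcal{V}}=(u,w)_{\mathcal{H}}$ for all $u\in\mathcal{V}$ and $w\in\mathcal{H}$; it is well defined and bounded because of the embedding estimate in~\eqref{eq:almog-HV}, it is injective since $\mathcal{V}$ is dense in $\mathcal{H}$, and its adjoint $B^{*}$ is precisely the embedding $\mathcal{V}\hookrightarrow\mathcal{H}$, which being injective forces $B$ to have dense range in $\mathcal{V}$. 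Then I would invoke Theorem~\ref{thm:almog1}, which under the standing hypotheses provides the isomorphism $\mathcal{A}\in\mathcal{L}(\mathcal{V})$ with $\mathcal{A}^{-1}\in\mathcal{L}(\mathcal{V})$, defined by $a(u,v)=(u,\mathcal{A}v)_{\mathcal{V}}$.

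The key step is the identification
\begin{equation*}
    \dom S=\mathcal{A}^{-1}\!\left(\operatorname{range}B\right),\qquad S=B^{-1}\mathcal{A}\ \text{ on }\ \dom S,\qquad S^{-1}=\mathcal{A}^{-1}B .
\end{equation*}
Indeed, for fixed $v\in\mathcal{V}$ the functional $u\mapsto a(u,v)=(u,\mathcal{A}v)_{\mathcal{V}}$ is continuous for the $\mathcal{H}$-norm iff, by density of $\mathcal{V}$ in $\mathcal{H}$ and Riesz representation, there is $w\in\mathcal{H}$ with $(u,\mathcal{A}v)_{\mathcal{V}}=(u,w)_{\mathcal{H}}=(u,Bw)_{\mathcal{V}}$ for all $u\in\mathcal{V}$, that is, iff $\mathcal{A}v=Bw\in\operatorname{range}B$, and then injectivity of $B$ gives $w=Sv$ uniquely. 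Granting this, claim~3 is immediate: $S^{-1}=\mathcal{A}^{-1}B$ is everywhere defined and bounded on $\mathcal{H}$ (it factors $\mathcal{H}\xrightarrow{B}\mathcal{V}\xrightarrow{\mathcal{A}^{-1}}\mathcal{V}\hookrightarrow\mathcal{H}$), and a short check shows it is a two-sided inverse of $S$; claim~2 then follows since the inverse of an everywhere-defined bounded operator is closed; and claim~1 follows because $\dom S=\operatorname{range}(S^{-1})$ is the image under the homeomorphism $\mathcal{A}^{-1}$ of $\operatorname{range}B$, hence dense in $\mathcal{V}$ and a fortiori in $\mathcal{H}$.

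For claim~4 I would first note that the conjugate form $b(u,v)=\overline{a(v,u)}$ satisfies the same hypotheses as $a$, because $|b(u,u)|=|a(u,u)|$ and $|b(u,\Phi u)|=|a(\Phi u,u)|$, so the two inequalities in~\eqref{eq:almog-twocond} merely trade places with $\Phi_1$ and $\Phi_2$ interchanged; hence Theorem~\ref{thm:almog1} applies to $b$ and yields an isomorphism of $\mathcal{V}$ that one identifies with the $\mathcal{V}$-adjoint $\mathcal{A}^{*}$, and the same construction gives $\tilde S^{-1}=(\mathcal{A}^{*})^{-1}B$ (the operator $B$ is unchanged, depending only on the pair $\mathcal{V}\subset\mathcal{H}$). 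It then remains to compute, using that $B^{*}$ is the embedding $\mathcal{V}\hookrightarrow\mathcal{H}$ and that $(\mathcal{A}^{-1})^{*}=(\mathcal{A}^{*})^{-1}$, the identity $(S^{-1})^{*}=\tilde S^{-1}$ as bounded operators on $\mathcal{H}$; taking inverses then gives the adjoint relations of claim~4 between $S$ and $\tilde S$, the remaining one following from the symmetric role of $a$ and $b$. The main obstacle is not any hard estimate --- all analytic content sits inside Theorem~\ref{thm:almog1} --- but bookkeeping: keeping the $\mathcal{V}$-inner product, the $\mathcal{H}$-inner product and the embedding straight, tracking the antilinearity conventions when passing between the forms $a$, $b$ and the operators $\mathcal{A}$, $\mathcal{A}^{*}$, $S$, $\tilde S$, and verifying carefully the equivalence ``$u\mapsto a(u,v)$ is $\mathcal{H}$-continuous'' $\iff$ ``$\mathcal{A}v\in\operatorname{range}B$'' on which everything hinges.
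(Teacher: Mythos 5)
Your proposal is correct, and there is no in-paper argument to compare it with: the paper quotes this statement (together with Theorem~\ref{thm:almog1}) from Almog and Helffer~\cite{almog2015spectrum} without proof. Taking Theorem~\ref{thm:almog1} as given, your reduction is sound. The operator $B\in\mathcal{L}(\mathcal{H},\mathcal{V})$ with $(u,Bw)_\mathcal{V}=(u,w)_\mathcal{H}$ is well defined by the Riesz representation and the embedding estimate~\eqref{eq:almog-HV}; $B^*$ is the inclusion $\mathcal{V}\hookrightarrow\mathcal{H}$, so $B$ is injective with range dense in $\mathcal{V}$. The pivotal equivalence ``$u\mapsto a(u,v)$ is $\mathcal{H}$-continuous iff $\mathcal{A}v\in\operatorname{ran}B$'' is exactly the density-plus-Riesz argument you indicate, and it gives $\dom S=\mathcal{A}^{-1}(\operatorname{ran}B)$ and $S^{-1}=\mathcal{A}^{-1}B\in\mathcal{L}(\mathcal{H})$, from which items 1--3 follow as you say. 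For item 4, writing $S^{-1}=B^*\mathcal{A}^{-1}B$ on $\mathcal{H}$ yields $(S^{-1})^*=B^*(\mathcal{A}^*)^{-1}B=\tilde S^{-1}$, and passing to $S^*=\tilde S$ invokes the standard identity $(T^*)^{-1}=(T^{-1})^*$ for a densely defined $T$ bijective onto $\mathcal{H}$ with bounded inverse, together with $\tilde S^*=S^{**}=S$ by closedness; these steps are routine but worth stating explicitly. Two remarks. First, your argument never uses the hypothesis that $\Phi_1,\Phi_2$ extend to bounded operators on $\mathcal{H}$: density of $\dom S$ in $\mathcal{V}$ already follows from $\operatorname{ran}B$ being dense and $\mathcal{A}^{-1}$ being a homeomorphism of $\mathcal{V}$, so you in fact prove a slightly stronger statement; this is not a gap. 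Second, item 4 as printed, $S^*=(\tilde S)^*$ together with $(\tilde S)^*=S$, would force $S^*=S$ and is evidently a typo for $S^*=\tilde S$ and $(\tilde S)^*=S$, which is precisely what your computation delivers.
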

Application of this representation theorem have been used in context of Schr\"odinger operators with complex potentials, for example in~\cite{krejcirik2017non},~\cite{arnal2023resolvent} and~\cite{semoradova2022diverging}.

\subsection{Space and operator definition}
The notation of Hilbert spaces on manifold is the same as in Section~\ref{section:setting}. We will use the following notation for function restrictions
\begin{align}
    \begin{split}
        u_\pm &:= \restr{u}{\Omega\pm}\quad \mathrm{for}\ u\in L^2(\Omega_0,g),\\
        ||u_\pm|| &:= ||u_\pm||_{L^2(\Omega_{\pm}, \restr{g}{\Omega_{\pm}})}.
    \end{split}
\end{align}
Scalar product $(\cdot, \cdot)$ will be linear in second argument and that of $L^2(\Omega_0,g)$, unless specified otherwise.
Start by defining Sobolev spaces of functions vanishing on the boundary of $\Omega$ restricted to $\Omega_\pm$ as
\begin{equation}
    H^1_{0,\mathcal{C}}(\Omega_\pm,g):=\left\{ \restr{f}{\Omega_\pm}: f\in H^1_0(\Omega_0,g) \right\}.
\end{equation}

Further, define an even cut-off function $\xi: \Omega_0\to\mathbb{R}$, $\xi\in C_0^\infty(\Omega_0)$ for $x_0, x_1\in\mathbb{R}$, $x_0<x_1<a$, $x_0<x_1<b$ such that $\xi(-x,y)=\xi(x,y)$ for all $(x,y)\in\Omega_0$ and, in addition, satisfying
\begin{equation}
    \xi(x,y)=
    \begin{cases}
        1, & x\in(-x_0,x_0),\\
        0, & x\in(-b,a)\setminus(-x_1,x_1).
    \end{cases}
\end{equation}
In interval $(-x_1,-x_0)\cup(x_0,x_1)$ it is defined such that $\xi\in C_0^\infty(\Omega_0)$ and $||\xi||_{L^2(\Omega_\pm,g)}=1$. Define mirroring operator 
\begin{align}
    \begin{split}
        P_\pm: H^1_{0,\mathcal{C}}(\Omega_\pm)&\to H^1_{0,\mathcal{C}}(\Omega_\mp),\\
        (P_\pm u)(x,y)=u(-x,y)\ &\mathrm{for}\ x\in (-x_1,x_1),\ y\in(0,c).
    \end{split}
\end{align}

In further analysis, we use it with a multiplication factor in form of a cut-off function $\xi$ identically zero outside of $(-x_1,x_1)$. So formally, we rather use the operator $\xi P_\pm:H^1_0(\Omega_\pm)\to H^1_0(\Omega_\mp)$ which is properly defined on functions over the whole of $\Omega_0$. 

Define transforms $T_\iota: H^1_0(\Omega_0,g)\to H^1_0(\Omega_0,g)$ for $u\in H^1_0(\Omega_0,g)$ using
\begin{align}
    \label{eq:T-transform}
    \begin{split}
        T_1 u:=
        \begin{cases}
            u_+ &\mathrm{in}\ \Omega_+\\
            -u_- + 2 R_+u_+ &\mathrm{in}\ \Omega_-
        \end{cases}
    \end{split},
    \begin{split}
        T_2 u:=
        \begin{cases}
            -u_+ + 2 R_-u_- &\mathrm{in}\ \Omega_- \\
            u_- &\mathrm{in}\ \Omega_-
        \end{cases}
    \end{split}
\end{align}
with $R_\pm: H^1_{0,\mathcal{C}}(\Omega_\pm)\to H^1_{0,\mathcal{C}}(\Omega_\mp)$, $R_\pm=\xi P_\pm$. Operators $T_\iota$ are bounded as operators acting in $H^1_0(\Omega_0,g)$ as
\begin{align}
    \begin{split}
         (T_\iota u)_+|_\mathcal{C}=(T_\iota u)_-|_\mathcal{C}, \quad \mathrm{for}\ u=(u_+,u_-)\in H^1_{0,\mathcal{C}}(\Omega_+)\oplus H^1_{0,\mathcal{C}}(\Omega_-)
    \end{split}
\end{align}
(in sense of traces) due to $(R_\pm u_\pm)|_\mathcal{C}=u_\pm|_\mathcal{C}$ for the same $u$.

For application of generalized Lax-Milgram lemma~\cite[Theorem 2.2]{almog2015spectrum}, let us introduce the sesquilinear form associated to operator $\dot A_K$ defined in~\eqref{eq:def-dotA_K} given as 
\begin{align}
    \begin{split}
        \dot a&: L^2(\Omega_0,g) \times \dom \dot A_K \to L^2(\Omega_0,g),\\
        \dot a(u,v)&:=(u,\dot A_K v).
    \end{split}
\end{align}
By invoking standard density arguments of test functions $C_0^\infty(\Omega_0)$ in  $L^2(\Omega_0,g)$ and $\dom\dot A_K$ and definition of weak derivatives, its domain can be augmented and it is given, for $u$, $v\in H_0^1(\Omega_0,g)$, by
\begin{align}\label{eq:form-a}
    \begin{split}
        a&: H_0^1(\Omega_0,g) \times H_0^1(\Omega_0,g) \to L^2(\Omega_0,g),\\
        a(u,v)&:=(\nabla u,\epsilon \nabla v)=\epsilon_+ \int_{\Omega_+} \overline{\nabla u}\nabla v - \epsilon_- \int_{\Omega_-}\overline{\nabla u}\nabla v
    \end{split}
\end{align}
and the derivatives are understood in the weak sense.

\subsection{Results}
\begin{theorem}\label{theorem:form-A-self-adjoint}
    Let $(\mathcal{M},g)$ be a Riemannian manifold with constant Gaussian curvature $K\in\mathbb{R}$ and $\Omega_0$, $\mathcal{C}$ as introduced in Section~\ref{section:setting}. Then, for contrast $\kappa=\frac{\epsilon_+}{\epsilon_-}\not = 1$, there is a unique self-adjoint operator $\mathcal{A}_K: \dom  \mathcal{A}_K \subset L^2(\Omega_0,g)\to L^2(\Omega_0,g)$ associated to the sesquilinear form $a$ given in~\eqref{eq:form-a} by
    \begin{equation}
        a(u,v)=:(u,\mathcal{A}_K v)_g, \quad u\in H_0^1(\Omega_0,g),\ v\in\dom \mathcal{A}_K\subset H_0^1(\Omega_0,g)
    \end{equation}
    with domain 
    \begin{equation}
        \dom  \mathcal{A}_K:= \left\{v\in H_0^1(\Omega_0,g): \Delta v_\pm \in L^2(\Omega_\pm,g),\ \restr{\left(\epsilon_+\partial_x v_+ +\epsilon_-\partial_x v_-\right)}{\mathcal{C}} =0\right\}
    \end{equation}
    where the interface condition is understood in the weak sense of 
    \begin{align}
        \begin{split}
            \Delta v_\pm\in &L^2(\Omega_\pm,g),\ \restr{\left(\epsilon_+\partial_x v_+ +\epsilon_-\partial_x v_-\right)}{\mathcal{C}} =0 \\
                &\vcentcolon\iff \forall u\in H_0^1(\Omega_0,g):\ \int_{\Omega_0} \overline{\nabla u}\epsilon\nabla v = - \int_{\Omega_0} \overline{u} \nabla\cdot\left(\epsilon\nabla v\right)
        \end{split}
    \end{align}
    and $\mathcal{A}_K$ has compact resolvent and $0\not\in\sigma\left(\mathcal{A}_K\right)$.
\end{theorem}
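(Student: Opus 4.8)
The plan is to apply the generalized Lax-Milgram theorems of Almog and Helffer (Theorems~\ref{thm:almog1} and~\ref{thm:almog2}) to the sesquilinear form $a$ in~\eqref{eq:form-a} with the pair of Hilbert spaces $\mathcal{V}=H^1_0(\Omega_0,g)$ and $\mathcal{H}=L^2(\Omega_0,g)$. Continuity of $a$ on $\mathcal{V}\times\mathcal{V}$ is immediate from Cauchy--Schwarz and $f,f^{-1}\in L^\infty$. The real work is verifying the two weak-coercivity estimates in~\eqref{eq:almog-twocond} with $\Phi_1=T_1$ and $\Phi_2=T_2$ the mirroring-plus-cutoff transforms defined in~\eqref{eq:T-transform}; note these are bounded both on $\mathcal{V}$ and (since they are built from multiplication by $\xi$ and the reflection $P_\pm$, which are $L^2$-bounded on the overlap region) on $\mathcal{H}$, so the hypothesis of Theorem~\ref{thm:almog2} that $\Phi_\iota$ extend continuously to $\mathcal{H}$ is met. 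Once~\eqref{eq:almog-twocond} holds, Theorem~\ref{thm:almog1} gives that the bounded operator $\mathcal{V}\to\mathcal{V}$ induced by $a$ is an isomorphism, and Theorem~\ref{thm:almog2} produces the closed operator $\mathcal{A}_K$ on $L^2(\Omega_0,g)$ with $S^{-1}\in\mathcal{L}(\mathcal{H})$; self-adjointness follows from part~(4) of Theorem~\ref{thm:almog2} since the form $a$ is symmetric ($\epsilon$ real, so $b=a$), whence $\mathcal{A}_K^*=(\tilde S)^*=S=\mathcal{A}_K$.

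For the coercivity estimates I would argue as follows. Consider the first inequality; plugging $\Phi_1 u=T_1 u$, which equals $u_+$ on $\Omega_+$ and $-u_-+2R_+u_+$ on $\Omega_-$, into $a(u,\cdot)$ and combining with $a(u,u)$, the terms on $\Omega_\pm$ are designed so that the \emph{indefinite} cancellation is removed: on $\Omega_+$ one gets $\epsilon_+\|\nabla u_+\|^2$ from both pieces, and on $\Omega_-$ the $-\epsilon_-\int\overline{\nabla u_-}\nabla u_-$ term in $a(u,u)$ is cancelled by the $+\epsilon_-\int\overline{\nabla u_-}\nabla u_-$ coming from the $-u_-$ part of $T_1u$, leaving $\epsilon_+\|\nabla u_+\|^2+c\epsilon_+\|\nabla u_+\|^2$ plus an error term $2\epsilon_-\int_{\Omega_-}\overline{\nabla u_-}\,\nabla(R_+u_+)$ supported in the cutoff region $(-x_1,0)$. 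That error is controlled by $\|u\|_{\mathcal V}\|u_+\|_{H^1(\text{near }\mathcal C)}$; using that $R_+=\xi P_+$ involves only $\nabla u_+$ on $(0,x_1)$ and absorbing via Young's inequality, one bounds $|a(u,u)|+|a(u,T_1u)|$ from below by $\alpha(\|\nabla u_+\|^2+\|\nabla u_-\|^2)\gtrsim\alpha\|u\|_{\mathcal V}^2$, where the Poincaré inequality (valid because of the Dirichlet condition on $\partial\Omega_0$) upgrades the gradient norm to the full $H^1_0$ norm. The second inequality is the mirror image with $T_2$ and $R_-$. The non-criticality $\kappa\neq1$ enters precisely here: the reflection trick transplants $\epsilon_+$-weighted energy onto $\Omega_-$ to kill the $-\epsilon_-$ term, and the leftover coefficient after cancellation is proportional to $\kappa-1$ (or $1-\kappa^{-1}$), so the estimate degenerates exactly when $\kappa=1$.

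The identification of $\dom\mathcal{A}_K$ with the stated set is then a matter of unwinding the definition $\dom S=\{v\in\mathcal V: u\mapsto a(u,v)\text{ is }\mathcal H\text{-continuous}\}$: $a(u,v)=(\nabla u,\epsilon\nabla v)$ being continuous in the $L^2$-norm of $u$ is, by the Riesz representation theorem and the definition of weak derivatives, equivalent to $\epsilon\nabla v$ having a weak divergence in $L^2$, i.e. $\Delta v_\pm\in L^2(\Omega_\pm,g)$ together with the natural transmission condition $\restr{(\epsilon_+\partial_x v_++\epsilon_-\partial_x v_-)}{\mathcal C}=0$ read off from the boundary term in the integration-by-parts identity --- which is exactly the weak formulation written in the statement. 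Finally $0\notin\sigma(\mathcal A_K)$ is immediate because $S^{-1}\in\mathcal{L}(\mathcal H)$ from Theorem~\ref{thm:almog2}(3), and compactness of the resolvent follows since $\dom\mathcal A_K\subset H^1_0(\Omega_0,g)\hookrightarrow L^2(\Omega_0,g)$ compactly by Rellich--Kondrachov (the embedding constant being controlled by $f,f^{-1}\in L^\infty$), so $\mathcal A_K^{-1}=S^{-1}$ factors through a compact embedding. The main obstacle is the careful bookkeeping in the coercivity estimate: choosing the cutoff parameters $x_0<x_1$ and the constant in Young's inequality so that the reflected cross-term is strictly dominated, while keeping track of the $\kappa$-dependence to see the estimate is sharp at criticality.
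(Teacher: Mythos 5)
Your skeleton (Almog--Helffer representation theorems with the mirrored-and-cut-off test maps $T_1$, $T_2$) is the same as the paper's, but there is a genuine gap at the crucial coercivity step: you apply Theorems~\ref{thm:almog1}--\ref{thm:almog2} \emph{directly to the form $a$} and claim that $|a(u,u)|+|a(u,T_1u)|\geq\alpha\|u\|^2_{\mathcal V}$ follows from Young's inequality. It does not, for the full range $\kappa\neq1$. When you expand the cross term $2\epsilon_-\int_{\Omega_-}\overline{\nabla u_-}\,\nabla(\xi Pu_+)$, the piece $\xi\nabla Pu_+$ is indeed absorbed into the gradient terms, but the piece $(Pu_+)\nabla\xi$ produces, after Young, a \emph{negative zero-order} remainder $-C\,\|u_+\|_{L^2}^2$ with $C\sim\|\nabla\xi\|_\infty^2/\eta$ (cf.\ the last line of~\eqref{eq:coercivity-aform}). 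This remainder cannot be absorbed by the gradient terms uniformly in $\kappa$: the positive margin on $\|\nabla u_+\|^2$ after the cancellation is at most of order $\epsilon_-(\kappa-1)$ (since $\delta<1$ is forced by the $\|\nabla u_-\|^2$ term), while $C$ and the Poincar\'e constant are fixed by the geometry and the cut-off, so the absorption via Poincar\'e works only for $\kappa$ sufficiently far from $1$, not for all $\kappa>1$. Dropping the cut-off to avoid the $\nabla\xi$ term does not rescue the argument either: as the paper's ``cut-off motivation'' note explains, pure mirroring only yields coercivity for restricted contrast ranges unless $a=b$.

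The paper's proof resolves exactly this obstruction by an extra step you are missing: it complexifies the form, $b_t(u,v):=a(u,v)+\iu t(u,v)$ with $t>0$ a \emph{free large parameter}, so that $|b_t(u,u)|\geq t\|u\|^2$ supplies an arbitrarily large positive zero-order term; choosing $\beta<\tfrac13$, then $\delta,\eta$, then $t$ large, all coefficients in $|b_t(u,u)|+\beta|b_t(u,T_1u)|$ become strictly positive for every $\kappa>1$ (and with $T_2$ for $\kappa<1$). The representation theorem is then applied to $b_t$, giving a closed, boundedly invertible $\mathcal B_t$ with compact resolvent and the stated domain, and finally $\mathcal A_K:=\mathcal B_t-\iu tI$ is shown to be self-adjoint via the conjugate form $b_t^*(u,v)=a(u,v)-\iu t(u,v)$ (your appeal to part~(4) of Theorem~\ref{thm:almog2} with $b=a$ would only be available if the coercivity for $a$ itself had been established, which is precisely what fails). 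Your domain identification, the Rellich--Kondrachov compactness argument, and the observation that non-criticality enters through the sign of $\kappa-1$ are all fine; the missing complexification (or some substitute controlling the $\|u_+\|_{L^2}^2$ deficit uniformly in $\kappa$) is what must be added for the proof to go through, and it also changes how invertibility and self-adjointness are extracted at the end.
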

\begin{proof}
Denote the corresponding measure corresponding to the metric $g$ as $\dd \nu_g=f\dd x \dd y$ where $f=\sqrt{|\det g|}$ and assume that $g$ is diagonal. Remember
\begin{equation}
    |\nabla \phi|^2_g:=g^{ij} \overline{\nabla_i \phi} \nabla_j \phi.
\end{equation}
We integrate with respect to $\dd\nu_g$ in all the integrals. 
In order to obtain a self-adjoint operator associated to $a(u,v)$, estimate
\begin{align}
    \label{eq:coercivity-aform}
    \begin{split}
        \left|a(u,T_1 u)\right|&=\left|\epsilon_+ \int_{\Omega_+} \overline{\nabla u_+}\nabla u_+ - \epsilon_-\int_{\Omega_-} \overline{\nabla u_-}\nabla\left(-u_- +2 \xi P u_+\right)\right|\\
       & = \left|\epsilon_+||\nabla u_+||^2 + \epsilon_- ||\nabla u_-||^2 - 2\epsilon_-\int_{\Omega_-} \overline{\nabla u_-} \left(\xi \nabla Pu_+ + Pu_+\nabla\xi \right)\right| \\
       &\geq \epsilon_-\left[ ||\nabla u_+||^2 \left(\frac{\epsilon_+}{\epsilon_-} - \frac{1}{\delta}\right) + ||\nabla u_-||^2 \left(1-\delta-\eta\right) - ||u_+||^2 \frac{||\nabla\xi||_{L^\infty(\Omega_+)}}{\eta} \right]
    \end{split}
\end{align}
where we used Young inequality for $\delta$, $\eta>0$ and reversed triangle inequality $|x-y|\geq|x|-|y|$. Second equation was estimated using Cauchy-Schwarz inequality, integral substitution and properties of $\xi$:
\begin{align}
    \begin{split}
        \label{eq:estimate-mirroring-norm}
        \left|\int_{\Omega_-} \overline{\nabla u_-}\left(Pu_+\nabla\xi\right)\right| &\leq ||\nabla u_-||\, ||(Pu_+ \nabla\xi )_-|| \leq ||\nabla u_-||\, ||u_+||\, ||\nabla\xi||^2_{L^\infty(\Omega_+, g)},\\
        \left|\int_{\Omega_-} \overline{\nabla u_-}\left(\xi\nabla Pu_+\right)\right| &\leq ||\nabla u_-||\, ||(\xi\nabla Pu_+)_-|| \leq ||\nabla u_-||\, ||\nabla u_+||.
    \end{split}
\end{align}
In the last estimate~\eqref{eq:estimate-mirroring-norm}, we have used, for $ u=( u_+, u_-)\in L^2(\Omega_+,g)\oplus L^2(\Omega_-,g)$,
 \begin{align}
    \begin{split}
        ||(\chi \nabla& P  u_\pm)_\mp||^2_g = \int_{\Omega_\mp} \abs{\chi \nabla P  u_\pm}^2 \dd \nu_g=\int_{\Omega_\mp} \abs{\chi}^2 \left(g^{ij} \nabla_i P  u_\pm \nabla_j P u_\pm\right) f \dd x \dd y\\
        &=\int_{\Omega_\pm} \abs{P\chi}^2 \left(\left(Pg^{ij}\right) \nabla_i  u_\pm \nabla_j  u_\pm\right) Pf \dd x \dd y = \int_{\Omega_\pm} \abs{\chi}^2 \left(g^{ij} \nabla_i  u_\pm \nabla_j  u_\pm\right) f \dd x \dd y\\
        &=||(\chi \nabla u)_\pm||^2_g\leq ||\nabla u_\pm||^2_g.
    \end{split}
\end{align}

In order to compensate for the last negative term without derivatives of $u$ in~\eqref{eq:coercivity-aform}, define a complexified form $b_t:H^1_0(\Omega_0,g)\times H^1_0(\Omega_0,g)\to L^2(\Omega_0,g)$ for $t\in\mathbb{R}$, $t>0$ as
\begin{equation}
    b_t(u,v):=a(u,v)+\iu t (u,v).
\end{equation}
for $u,v\in H^1_0(\Omega_0,g)$. This sesquilinear form satisfies 
\begin{align}
    \begin{split}
        |b_t(u,u)|&\geq t ||u||^2,\\
        |b_t(u,v)|&\geq |a(u,v)| - t||u||\, ||v||\\
    \end{split}
\end{align}
and is bounded in $H^1_0(\Omega_0,g)$ norm with constant $C_t>0$ due to Poincaré inequality according to
 \begin{align}
    \begin{split}
        |b_t(u,v)|&\leq |a(u,v)| + t ||u||\, ||v||\\
                  &\leq \max\{\epsilon_+,\epsilon_-\}||\nabla u||\, ||\nabla v|| + t||u||\, ||v||\\
                  &\leq C_t ||u||_{H^1_0(\Omega_0,g)} ||v||_{H^1_0(\Omega_0,g)}.
    \end{split}
\end{align}

Combining estimates on $|a(u,T_1 u)|$ with boundedness of $T_1$
\begin{align}
    \begin{split}
        ||u||\, ||T_1 u||&= ||u|| \sqrt{||u_+||^2 + \int_{\Omega_-}\left|-u_-+2\xi Pu_+\right|^2}\\
        &\leq ||u||\, \sqrt{||u_+||^2 + 2(||u_-||^2 + 4 ||u_+||^2)} \leq ||u||\, 3||u|| = 3||u||^2,
    \end{split}
\end{align}
we obtain for $\beta\in\mathbb{R}$, $\beta>0$
 \begin{align}
     \begin{split}
         |b_t(u,u)|+|b_t(u,\beta T_1 u)| &= |b_t(u,u)|+\beta |b_t(u,T_1 u)|\\
                 &\geq ||u_+||^2\left(t(1-3\beta) - C_\eta\beta\right) + t||u_-||^2\left(1-3\beta\right)\\
                 &\quad + \beta\epsilon_-||\nabla u_+||^2\left(\frac{\epsilon_+}{\epsilon_-}-\frac{1}{\delta}\right) + \beta\epsilon_-||\nabla u_-||^2 \left(1-\delta-\eta\right)\\
                 &\geq \alpha ||u||_{H_0^1(\Omega_0,g)}
     \end{split}
 \end{align}
where $C_\eta=\epsilon_- ||\nabla\xi||^2_{L^\infty(\Omega_+)}\eta^{-1}$. For a choice of $0<\beta<\frac{1}{3}$, $0<\delta<\frac{\epsilon_-}{\epsilon_+}<1$, there exists $\eta>0$ and $t>t_0$ sufficiently large such that each term on the right-hand side is strictly positive. From here, it is trivial to provide lower bound in terms of $H_0^1(\Omega_0,g)$ norm such that $\alpha>0$ is strictly positive for $\frac{\epsilon_+}{\epsilon_-}>1$. By the same computation with $b(u,T_2u)$, we obtain similar strictly positive bounds for $\frac{\epsilon_+}{\epsilon_-}<1$.

For $\frac{\epsilon_+}{\epsilon_-}>1$, set $\mathcal{V}=H_0^1(\Omega_0,g)$, $\mathcal{H}=L^2(\Omega_0,g)$ and $\Phi_1=\beta T_1$ in Theorem~\ref{thm:almog2}. Form $a$ is symmetric and thus the second inequality in~\eqref{eq:almog-twocond} is also satisfied for $\Phi_2=\beta T_1$. This implies that operator $\mathcal{B}_t$ associated to form $b_t$ defined on $S\subset L^2(\Omega_0,g)$ is a closed isomorphism from a dense subset of $H^1_0(\Omega_0,g)$ to $L^2(\Omega_0,g)$. At the same time, $\mathcal{B}_t$ has a bounded inverse and due to compact embedding\footnote{This result is known as Rellich-Kondrakov theorem~\cite{hebey2000nonlinear}.} of $H^1_0(\Omega_0,g)$ to $L^2(\Omega_0,g)$, its resolvent $R(\lambda,\mathcal{B}_t)$ is compact for $\lambda=0$ and by the first resolvent identity also for all $\lambda$ in the resolvent set. Hence, the essential spectrum is empty~\cite[Theorem 6.29]{kato2013perturbation}. By setting $\Phi_{1,2}=\beta T_2$, we obtain the same results also for $\frac{\epsilon_+}{\epsilon_-}<1$.

To determine domain of the operator --- by Riesz theorem, stating that every continuous linear functional $u\mapsto \varphi(u)$ on $L^2(\Omega_0,g)$ is represented by some $\eta\in L^2(\Omega_0,g)$ such that $\varphi(u)=(\eta,u)$, we have
 \begin{equation}
     \dom \mathcal{B}_t=\left\{ v\in H_0^1(\Omega_0,g): \exists\eta\in L^2(\Omega_0,g)\,\forall u\in H_0^1(\Omega_0,g), b_t(u,v)=(u,\eta)\right\}.
\end{equation}

Based on definition of weak derivatives, it follows that for $u\in C_0^\infty(\Omega_0)$ and $v$ from 
\begin{align}
    \begin{split}\label{eq:dom-bt-rigorous}
        \dom \mathcal{B}_t &= \left\{v\in H_0^1(\Omega_0,g): \nabla\cdot(\epsilon\nabla v)\in L^2(\Omega_0,g)\right\}\\
    \end{split}
\end{align}
we obtain
\begin{align}\label{eq:bt-eta-def-weak-der}
    \begin{split}
        b_t(u,v)&=a(u,v)+\iu t (u,v)=\int_{\Omega_0}\overline{\nabla u}\epsilon\nabla v  + \iu t (u,v)\\
            &=-\int_{\Omega_0} \overline{u}\nabla\cdot\left(\epsilon \nabla v\right) + \iu t (u,v)=(u,\eta)
    \end{split}
\end{align}
where we have used the definition of weak derivative of $\epsilon \nabla v$ and piece-wise constant $\epsilon$. From density of $C_0^\infty(\Omega_0)$ in $H_0^1(\Omega_0,g)$, $b_t(u,v)=(u,\eta)$ holds also for $u\in H_0^1(\Omega_0,g)$. 

Introduce weak-sense notation 
\begin{equation}
    \dom \mathcal{B}_t=\left\{v\in H_0^1(\Omega_0,g): \Delta v_\pm \in L^2(\Omega_\pm),\ \restr{\left(\epsilon_+\partial_x v_+ +\epsilon_-\partial_x v_-\right)}{\mathcal{C}} =0\right\},
\end{equation}
bearing exactly the same meaning as in~\eqref{eq:dom-bt-rigorous} but with emphasis given on the fact that the interface condition is present in some weak-sense.  To illustrate, consider $v=(v_+,v_-)\in H_0^1(\Omega_0)$, $v_\pm \in H^2(\Omega_\pm)$. Then we could apply the divergence theorem on $\Omega_\pm$ and obtain for all $u\in H_0^1(\Omega)$
\begin{equation}
    (\nabla u,\epsilon\nabla v)=-(u,\ddiv \left(\epsilon\nabla v\right)) + \int_{\mathcal{C}} \overline{u} \left(\epsilon_+ \partial_x v_+ + \epsilon_- \partial_x v_-\right) \dd y.
\end{equation}
At the same time we have, from the definition of weak derivative, equation~\eqref{eq:bt-eta-def-weak-der} and in conclusion, $\epsilon_+ \partial_x v_+ + \epsilon_- \partial_x v_- = 0$ on $\mathcal{C}$ in the sense of traces. Regularized normal traces in rigorous setting were introduced by Behrndt and Krejčiřík in~\cite{behrndt2014indefinite} also for functions $f$ such that $f\in H_0^1(\Omega_0) \ \land \ \ddiv\left(\epsilon\nabla f\right)\in L^2(\Omega_0)$ and provide details on the matter.

To extract information about operator without a complex shift, define for $t\in\mathbb{R}$, $t>0$,
 \begin{equation}\label{eq:form-operator}
     \mathcal{A}_K:=\mathcal{B}_t-\iu t I.
 \end{equation}
From~\eqref{eq:form-operator} we have
\begin{equation}
    \mathcal{A}_K^*:=\mathcal{B}_t^*+\iu t I
\end{equation}
and also $\dom\mathcal{A}_K^*=\dom \mathcal{B}_t^* = \dom \mathcal{B}_t=\dom\mathcal{A}_K$ due to operator $\mathcal{B}_t^*$ being associated to $b_t^*(u,v)=\overline{b_t(v,u)}=a(u,v)-\iu t(u,v)$ from symmetricity of form $a$. At the same time, for $u\in H_0^1(\Omega_0,g)$, $v\in \dom \mathcal{A}_K^*$,
\begin{align}
    \begin{split}
        (u, \mathcal{A}_K^* v)&=(u, \left(\mathcal{B}_t^*+\iu t\right)v)=b_t^*(u,v)+\iu t(u,v) = \overline{a(v,u)}-\iu t(u,v)+\iu t(u,v)\\
    &=a(u,v)=(u,\mathcal{A}_K v)
    \end{split}
\end{align}
and hence, $\mathcal{A}_K^*=\mathcal{A}_K$ is self-adjoint and independent of $t$.
\end{proof}
\begin{remark}\label{remark:generalization-forms}
    Result of Theorem~\ref{theorem:form-A-self-adjoint} can be extended to Riemannian manifolds $(\mathcal{N},\dot g)$, $\dot g_{ij}(x, y) = g_{ij}(-x,y)$, or in terms of Gaussian curvature $K$: $K(x,y)=K(-x,y)$, in some rectangular neighbourbood of $\Gamma$, as illustrated in Figure~\ref{fig:form-generalization}.
\end{remark}

Finally, we would like to show that the operator $\mathcal{A}_K$ coincides with $A_K$ for non-critical contrast $\kappa\not = 1$.
\begin{proposition}\label{prop:form-operator-same}
    The operator $\mathcal{A}_K$ defined in Theorem~\ref{theorem:form-A-self-adjoint} via forms coincides, for $\kappa\not=1$, with the self-adjoint operator $A_K\supset \dot A_K$ defined in Section~\ref{section:essential-self-adjoint}.
\end{proposition}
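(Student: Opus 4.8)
The plan is to reduce the statement to the essential self-adjointness already established. Both $A_K$ and $\mathcal{A}_K$ are extensions of the symmetric operator $\dot A_K$: indeed $A_K=\overline{\dot A_K}$ by Theorem~\ref{theorem:dAK-essential}, and I will verify below that $\dot A_K\subset\mathcal{A}_K$ as well. Since $\dot A_K$ is essentially self-adjoint, it admits exactly one self-adjoint extension, namely its closure $A_K$; as $\mathcal{A}_K$ is self-adjoint by Theorem~\ref{theorem:form-A-self-adjoint}, this immediately forces $\mathcal{A}_K=A_K$. So the whole proof comes down to checking the inclusion $\dot A_K\subset\mathcal{A}_K$, that is, $\dom\dot A_K\subset\dom\mathcal{A}_K$ together with $\mathcal{A}_K v=\dot A_K v$ for every $v\in\dom\dot A_K$.

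For this inclusion, fix $v=(v_+,v_-)\in\dom\dot A_K$. The conditions defining $\dom\dot A_K$ give at once that $v\in\dom\mathcal{A}_K$: from $v_\pm\in H^2(\Omega_\pm,g)$ with $v_\pm=0$ on $\partial\Omega_0$ and $v_+=v_-$ on $\mathcal{C}$ we obtain $v\in H^1_0(\Omega_0,g)$; from $v_\pm\in H^2(\Omega_\pm,g)$ we obtain $\Delta v_\pm\in L^2(\Omega_\pm,g)$; and the transmission condition $\epsilon_+\partial_x v_+ +\epsilon_-\partial_x v_-=0$ on $\mathcal{C}$ holds for $v$ in the strong sense of traces, hence a fortiori in the weak sense used to describe $\dom\mathcal{A}_K$. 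To identify the action, let $u\in H^1_0(\Omega_0,g)$ be arbitrary and apply the divergence theorem separately on $\Omega_+$ and $\Omega_-$ --- this is exactly the computation carried out in the proof that $\dot A_K$ is symmetric (Section~\ref{section:basic-properties}) --- to obtain
\begin{equation*}
    a(u,v)=-\bigl(u,\ddiv_g(\epsilon\nabla v)\bigr)_g+\int_{\mathcal{C}}\overline{u}\,\bigl(\epsilon_+\partial_x v_+ +\epsilon_-\partial_x v_-\bigr)\,\dd\tilde\nu_g .
\end{equation*}
The interface integral vanishes by the transmission condition, while $-\ddiv_g(\epsilon\nabla v)=\dot A_K v$ by the definition of $\dot A_K$; hence $a(u,v)=(u,\dot A_K v)_g$ for every $u\in H^1_0(\Omega_0,g)$. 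Comparing this with the defining relation $a(u,v)=(u,\mathcal{A}_K v)_g$ and using density of $H^1_0(\Omega_0,g)$ in $L^2(\Omega_0,g)$, we conclude $\mathcal{A}_K v=\dot A_K v$. Thus $\dot A_K\subset\mathcal{A}_K$.

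Finally, since $\mathcal{A}_K$ is self-adjoint it is in particular closed, so $\dot A_K\subset\mathcal{A}_K$ entails $A_K=\overline{\dot A_K}\subset\mathcal{A}_K$; and since $A_K$ is self-adjoint, hence maximally symmetric, the symmetric extension $\mathcal{A}_K$ cannot be proper, which gives $\mathcal{A}_K=A_K$. I anticipate no real obstacle in this argument: the only point needing a little care is that the interface integral in the displayed identity is well defined, and this is ensured by the $H^2(\Omega_\pm,g)$-regularity of $v$ and the $H^1_0(\Omega_0,g)$-regularity of $u$, exactly as in the symmetry computation. The non-criticality assumption $\kappa\neq 1$ plays no direct role in the present argument; it is needed only to guarantee, via Theorem~\ref{theorem:form-A-self-adjoint}, that $\mathcal{A}_K$ exists as a self-adjoint operator in the first place.
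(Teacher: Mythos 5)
Your proposal is correct and follows essentially the same route as the paper: establish the inclusion $\dot A_K\subset\mathcal{A}_K$ (domain membership plus identification of the action via integration by parts and density), then conclude $A_K=\mathcal{A}_K$ by abstract operator theory. The only cosmetic difference is that the paper finishes by the adjoint-inclusion chain $\dot A_K\subset\mathcal{A}_K\Rightarrow\mathcal{A}_K^*\subset\dot A_K^*$ and $A_K=\dot A_K^{**}\subset\mathcal{A}_K^{**}=\mathcal{A}_K\subset A_K$, whereas you invoke the equivalent facts that an essentially self-adjoint operator has a unique self-adjoint extension and that a self-adjoint operator is maximally symmetric.
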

\begin{proof}
    First, we will prove that $\dot A_K\subset \mathcal{A}_K$. As 
    \begin{equation}
            \dom \dot A_K= \left\{
            \begin{array}{l|l}
             &\psi_\pm\mid_{\pp \Omega_0}=0, \\
            \psi=\begin{pmatrix}\psi_+\\\psi_-\end{pmatrix} \in H^2(\Omega_+,g)\oplus H^2(\Omega_-,g) &\psi_+(0,\cdot) = \psi_-(0,\cdot)\\
            &\epsilon_+ \pp_1\psi_+(0,\cdot) = -\epsilon_- \pp_1\psi'_-(0,\cdot)
            \end{array}
        \right\},
    \end{equation}
    we have that $\dom \dot A_K\subset\dom \mathcal{A}_K$. Based on the action of the operators, we have
    \begin{equation}
        (u,\dot A_K v)=a(u,v)=(u,\mathcal{A}_K v),\quad \forall u\in C_0^\infty(\Omega_0),\ v\in\dom\dot A_K
    \end{equation}
    and by density of test functions then $\dot A_K\subset \mathcal{A}_K$.
    Whenever there are two densely-defined symmetric operators $L_1$, $L_2$ on a Hilbert space, then the following holds for their adjoints $L_1^*$,  $L_2^*$,
    \begin{equation}
        L_1\subset L_2\implies L_2^*\subset L_1^*.
    \end{equation}
    Hence, $A\subset \mathcal{A}_K$ as $\mathcal{A}_K^*\subset \dot A_K^*$ and
    \begin{equation}
        A_K=\overline{\dot A_K}=\dot A_K^{**}\subset \mathcal{A}_K^{**}=\mathcal{A}_K.
    \end{equation}
    And finally, we obtain also the second extension
    \begin{equation}
        \mathcal{A}_K=\mathcal{A}_K^*\subset A^*=A_K.
    \end{equation}
\end{proof}
\begin{note}[Cut-off motivation]
    We can see that if we had used no cut-off in~\eqref{eq:T-transform}, i.e. $\xi\equiv 1$ constant, the last term in~\eqref{eq:coercivity-aform} with $||u_+||$ would be zero and $\mathbb{T}$-coercivity of form $a$ would be achieved rather quickly. Although first, we have to properly define mirroring operator $P$ globally via zero extension. It turns out that with this simpler choice, we cannot recover proper definition of $\mathcal{A}_K$ via forms for the whole range of contrasts $\kappa\in\mathbb{R}$, $\kappa\not =1$.

Denote $a_*:=\min\{a,b\}$. For $T_1$ and $T_2$, use $R_\pm=P_\pm$, $P_+: H^1_{0,\mathcal{C}}(\Omega_+)\to H^1_{0,\mathcal{C}}(\Omega_-)$ and $P_-: H^1_{0,\mathcal{C}}(\Omega_-)\to H^1_{0,\mathcal{C}}(\Omega_+)$, respectively, defined by
    \begin{equation}\label{eq:mirroring}
        (P_\pm u_\pm)(x,y):=
        \begin{cases}
            u_\pm(-x,y) & \mathrm{for}\ x\in(-a_*,a_*), \\
            0 &\mathrm{for}\ x\not\in(-a_*,a_*).
        \end{cases}
    \end{equation}
    In order for $P_\pm$ to be also an operator $H^1_{0,\mathcal{C}}(\Omega_\pm)\to H^1_{0,\mathcal{C}}(\Omega_\mp)$, we require that $b\geq a$, or $a\geq b$, respectively (plus and minus). This can be show to lead to the following situation:
    \begin{align}
        \begin{split}
            b \geq a \implies \mathrm{form}\ a(\cdot,T_1 \cdot)\ \mathrm{is\ coercive\ for}\ \kappa > 1,\\
            a \geq b \implies \mathrm{form}\ a(\cdot,T_2 \cdot)\ \mathrm{is\ coercive\ for}\ \kappa < 1.
        \end{split}
    \end{align}
    As can be seen, it is impossible to recover the condition for both $\kappa>1$ and $\kappa<1$ unless $a=b$.

    Another approach one can employ is scaling $\Omega_\pm$ such that, under this transformation, they have same dimensions. For simplicity, assume $b\geq a$. By using $T_1$ as above, this gives us coercivity for $\kappa >1$. By defining $T_2$ using different choice of operator $R_-=P$ as 
    \begin{align}
        \begin{split}
            (Pu_-)(x,y)=u_-\left(-\frac{b}{a}x,y\right),
        \end{split}
    \end{align}
    we arrive at $|a(u,u)|+|a(u,T_2u)|\geq \alpha ||u||_{H^1(\Omega_0,g)}$ for $\alpha>0$, all $u\in H_0^1(\Omega_0,g)$ and for $\kappa < \frac{a}{b}$. Overall, combining reflections in $T_2$ and reflections with rescaling in $T_1$, we obtain self-adjoins representations with compact resolvent via form $a(u,v)$ for $\kappa\in(0,\frac{a}{b})\cup(1,\infty)$. Thus, we see the neccessity of including the cut-off.
\end{note}

\section{Critical contrast}\label{section:critical-contrast}
\subsection{Singular sequences for zero}
In this section, we will show that for the critical contrast $\frac{\epsilon_+}{\epsilon_-}=:\kappa=1$, the operator $A_K$ defined on a manifold with arbitrary constant curvature always contains zero in the essential spectrum.

\begin{theorem}\label{theorem:singular-seq}
For critical contrast $\kappa$, there is zero in the essential spectum for arbitrary Gaussian curvature $K\in\mathbb{R}$, i.e. $\kappa=1 \implies$ $0\in\sigma_\mathrm{ess}(A_K)$.
\end{theorem}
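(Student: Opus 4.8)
The plan is to exhibit an explicit singular (Weyl) sequence for $A_K$ at the point $0$. By the separation of variables in Lemma~\ref{lemma:AK-separation} and the direct-sum formula in Theorem~\ref{theorem:dAK-essential}, whenever $\psi_n\in\dom A_K^n$ we have $\psi_n\otimes\phi_n\in\dom A_K$ and $A_K(\psi_n\otimes\phi_n)=(A_K^n\psi_n)\otimes\phi_n$; since $\{\phi_n\}_{n\in\mathbb N}$ is orthonormal in $L^2(J_2)$, the vectors $u_n:=\|\psi_n\|^{-1}\,\psi_n\otimes\phi_n$ form an orthonormal set, hence are weakly null, with $\|A_Ku_n\|=\|A_K^n\psi_n\|/\|\psi_n\|$. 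So the theorem reduces, via the Weyl criterion for the essential spectrum of the self-adjoint $A_K$, to the following: for $\kappa=1$ one can find $\psi_n\in\dom A_K^n$ with $\|A_K^n\psi_n\|/\|\psi_n\|\to0$ as $n\to\infty$.

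For the construction, write $\mu:=n\pi/c$ and assume without loss of generality $a\le b$ (the case $a\ge b$ is the mirror image under $x\mapsto -x$, which interchanges $\Omega_\pm$ and, at critical contrast, turns $A_K$ into $-A_K$ with $a$ and $b$ swapped). Let $w_n$ solve $A_K^nw_n=0$ on $(0,a)$ — equivalently $(fw_n')'=\mu^2f^{-1}w_n$ — with $w_n(a)=0$; this is unique up to a scalar, and $w_n(0)\neq0$ because the Dirichlet realisation of $-f^{-1}(f\,\cdot\,')'+\mu^2f^{-2}$ on $(0,a)$ is strictly positive, so normalise $w_n(0)=1$. For constant Gaussian curvature with a geodesic interface the metric coefficient is even, $f(-x)=f(x)$ (it is $1$, $\cos x$ or $\cosh x$), hence $x\mapsto w_n(-x)$ solves the same equation on $(-a,0)$. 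I then set
\begin{equation}
    \psi_n(x):=
    \begin{cases}
        w_n(x), & 0\le x\le a,\\
        w_n(-x), & -a\le x\le 0,\\
        p_n(x), & -a-s_n\le x\le -a,\\
        0, & -b\le x\le -a-s_n,
    \end{cases}
\end{equation}
where $s_n:=1/\mu$ and $p_n$ is the Hermite cubic on $[-a-s_n,-a]$ with $p_n(-a)=0$, $p_n'(-a)=-w_n'(a)$, $p_n(-a-s_n)=p_n'(-a-s_n)=0$. For $n$ large the window $[-a-s_n,-a]$ sits inside $(-b,0)$; the matching conditions put $\psi_n\in H^2(\Omega_+,g)\oplus H^2(\Omega_-,g)$ with zero boundary trace, and at the interface $\psi_n(0^+)=\psi_n(0^-)=1$ while $\psi_n'(0^-)=-w_n'(0)=-\psi_n'(0^+)$, so the flux condition $\epsilon_+\psi_n'(0^+)=-\epsilon_-\psi_n'(0^-)$ holds \emph{precisely because} $\epsilon_+=\epsilon_-$ — this is the only place criticality enters. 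Hence $\psi_n\in\dom A_K^n$. By construction $A_K^n\psi_n=0$ on $(0,a)$ and on $(-b,-a-s_n)$, and also on $(-a,0)$: applying $A_K^n$ to $w_n(-\,\cdot\,)$ and using that $f$ is even (so $f'$ is odd) returns the equation $(fw_n')'=\mu^2f^{-1}w_n$. Thus $A_K^n\psi_n$ is supported in the single window $[-a-s_n,-a]$ of length $1/\mu$.

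What remains is a size estimate, whose crux is that $w_n$ is exponentially small near the far endpoint $x=a$. To get this I would pass, as in Remark~\ref{remark:general-unitary}, to $\hat w_n:=f^{1/2}w_n$, which satisfies $\hat w_n''=V\hat w_n$ on $(0,a)$, $\hat w_n(0)=1$, $\hat w_n(a)=0$, $\hat w_n>0$ on $[0,a)$, and $\|\hat w_n\|_{L^2((0,a),\dd x)}=\|w_n\|_{L^2((0,a),\dd\mu_K)}$; here $V=\mu^2f^{-2}+O(1)$, so $c_0\mu^2\le V\le C_1\mu^2$ for all large $\mu$, with $0<c_0\le C_1$ depending only on $\inf f$, $\sup f$ and $\|f'/f\|_\infty$, $\|f''/f\|_\infty$. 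A maximum-principle comparison of $\hat w_n$ with the explicit solutions $\sinh(\sqrt{c_0}\,\mu(a-x))/\sinh(\sqrt{c_0}\,\mu a)$ and $\sinh(\sqrt{C_1}\,\mu(a-x))/\sinh(\sqrt{C_1}\,\mu a)$, which bracket $\hat w_n$ from above and below, then gives $\|w_n\|_{L^2((0,a),\dd\mu_K)}\asymp\mu^{-1/2}$ together with $|w_n(x)|+\mu^{-1}|w_n'(x)|\lesssim\exp(-\sqrt{c_0}\,\mu a)$ for $x\in(a-2/\mu,a)$. Inserting the bound $|w_n'(a)|\lesssim\mu\exp(-\sqrt{c_0}\,\mu a)$ into the standard Hermite-cubic estimates on the window yields $\|A_K^n\psi_n\|_{L^2(J_1,\dd\mu_K)}\lesssim\mu^{3/2}\exp(-\sqrt{c_0}\,\mu a)$, while the reflected part alone gives $\|\psi_n\|_{L^2(J_1,\dd\mu_K)}\asymp\mu^{-1/2}$, whence
\begin{equation}
    \frac{\|A_K^n\psi_n\|}{\|\psi_n\|}\ \lesssim\ \mu^{2}\exp\!\big(-\sqrt{c_0}\,\mu a\big)\ \longrightarrow\ 0\qquad(n\to\infty),
\end{equation}
which completes the reduction. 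If $a=b$ no cut-off is needed and $\psi_n\in\dom A_K^n$ already satisfies $A_K^n\psi_n=0$ for every $n$, so $0$ is an eigenvalue of infinite multiplicity, a fortiori in $\sigma_\mathrm{ess}(A_K)$.

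The only genuinely nontrivial ingredient is the exponential decay of $w_n$ at the far boundary in the curved cases $K=\pm1$; I would obtain it from the Schr\"odinger-form reduction plus the one-line maximum-principle comparison above, rather than from asymptotics of the Legendre functions in Table~\ref{table:solution-pq}. Everything else — the $H^2$-matching across the pieces, the Hermite-cubic bounds, and the bookkeeping of the $1/\mu$-window — is routine. Since the argument uses of the ambient geometry only that $f$ is even in $x$ with $f,f^{-1},f',f''\in L^\infty$, it simultaneously covers the non-constant-curvature situation of Remark~\ref{remark:generalization-forms}.
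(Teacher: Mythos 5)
Your proposal is correct and follows essentially the same route as the paper: a Weyl singular sequence of separated functions $\psi_n\otimes\phi_n$ built from (approximate) zero modes of the transversal operators $A_K^n$, localized at the interface, where the even reflection in $x$ (possible since $f(-x)=f(x)$ for a geodesic $\Gamma$) satisfies the transmission condition exactly because $\epsilon_+=\epsilon_-$, and the exponential smallness of the zero mode away from $\mathcal{C}$ makes the modification needed for the boundary condition negligible. The difference is only in execution: the paper works with the explicit decaying solutions $f_n^{(K)}(x)=\exp\bigl(-\tfrac{n\pi}{c\sqrt{K}}\arctanh\sin(\sqrt{K}x)\bigr)$ and a smooth cut-off placed inside $(0,a)$, estimating everything by hand, whereas you take the exact Dirichlet zero mode on $(0,a)$, patch with a Hermite cubic just beyond $-a$, and get the decay from a Schr\"odinger-form reduction plus a maximum-principle comparison --- a variant that avoids the explicit formulas and, as you note, covers even non-constant metrics, which the paper only sketches as a possibility in the remark following its proof.
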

\begin{proof}
    We can easily see, that the function $(x,y)\mapsto\sin\left(\frac{n\pi}{c}y\right)$ satisfies Dirichlet boundary conditions on boundaries perpendicular to the $y$ axis. For this reason, we will now try to find suitable singular sequences in the form $(x,y)\mapsto g(x)\sin\left(\frac{n\pi}{c}y\right)$, basically making a separation of variables.

    To construct our singular sequences, we will utilize equation
    \begin{equation}
        A_K\psi(x,y)=0.
    \end{equation}

    Consider ansatz $\psi(x,y)=f(x)\sin\left(\frac{n\pi}{c}y\right)$. Then the solution for $f$ of
    
    \begin{equation}\label{eq:singular-pseudomode}
        A_K\psi(x,y)=\left(-f''(x) + \sqrt{K}\tan(\sqrt{K}x)f'(x) + \frac{(\frac{n\pi}{c})^2}{\cos^2(\sqrt{K}x)}f(x)\right)\sin\left(\frac{n\pi}{c}y\right) = 0
    \end{equation}
    is in a general form of
    \begin{equation}
        f(x) = C_1 \cosh\left( \frac{n\pi}{c \sqrt{K}}\arctanh{\sin\left(\sqrt{K}x\right)}\right) + C_2 \sinh\left( \frac{n\pi}{c \sqrt{K}}\arctanh{\sin\left(\sqrt{K}x\right)}\right),
    \end{equation}
    for all $K\in \mathbb{R}\setminus\{0\}$ where $C_1$, $C_2 \in \mathbb{C}$ are constants. We are not going to construct eigenvectors of $A_K$ corresponding to eigenvalue $\lambda=0$ as it is not an eigenvalue in general (only for $a=b$). Instead, we will construct approximations. From these solutions, by a choice $C_1=1$ and $C_2=-1$, we define functions $f_n^{(K)}: (0, a)\to \mathbb{R}$ using
    \begin{equation}\label{eq:fnK}
        f_n^{(K)}(x):=\exp\left(-\frac{n\pi}{c \sqrt{K}}\arctanh{\sin(\sqrt{K}x)}\right)
    \end{equation}

    for each $n\in \mathbb{N}$ and fixed $K\in \mathbb{R}\setminus\{0\}$. For case $K=0$ it is defined via $f_n^{(0)}(x):=\exp\left(-\frac{n\pi}{c}x\right)$. But this case is already present in a limit sense $\lim_{K\to0}f_n^{(K)}(x)=f_n^{(0)}(x)$, as can be seen from Taylor expansion in $K$. When the curvature $K$ is obvious from context, we will \emph{denote the function only by} $f_n$. These functions with slight modifications will be our candidates for a singular sequence for zero. Mainly, we need to modify them to satisfy the interface and Dirichlet boundary conditions.

    \begin{figure}
        \centering
        \includegraphics[width=1\linewidth]{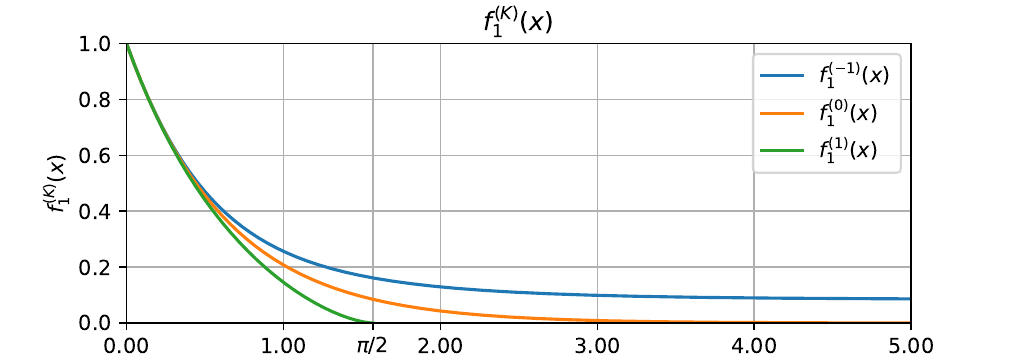}
        \caption{Functions $f_n^{(K)}$ for $n=1$ given in~\eqref{eq:fnK}. We have  $f_n^{(K)}(x)=(f_1^{(K)}(x))^{\frac{n\pi}{c}}$ For positive curvature, we have $a,b\in(0,\frac{\pi}{2})$ so we plot only those valid values. For negative curvature, the function does have a non-zero limit as $x\to\infty$.}
    \end{figure}

    To that end, we will assume $a\leq b$ (for $a=b$, $\lambda=0$ is an infinitely degenerate eigenvalue) and fix two parameters $a_1$, $a_2 \in \mathbb{R}$, $0<a_1<a_2<a$, $a_2<b$ and define a smooth cut-off function $\chi: (0,a) \to \mathbb{R}$ with properties
    \begin{equation}
        \chi(x):=
        \begin{cases}
            1, & x\in(0, a_1), \\
            0, & (a_2, a)
        \end{cases}
    \end{equation}
    and with values elsewhere given such that $\chi\in C^\infty((0,a))$. Define sequence $\{\vphi_n\}_{n\in \mathbb{N}}\subset\dom A_K$ for operator $A_K$ as
    \begin{equation}
        \vphi_n(x,y):= 
        \begin{cases}
            f_n(|x|)\chi(|x|)\sqrt{\frac{2}{c}}\sin\left(n\frac{\pi}{c}y\right), & x\in(-a,a),\\
            0, & x\in(-b,-a).\\
        \end{cases}
    \end{equation}
    Indeed, it clearly satisfies Dirichlet boundary conditions and also the interface conditions.
%
    We note that $\lambda=0$ is not an eigenvalue for $a\not =b$, it does not belong to discrete spectrum $\sigma\setminus\sigma_\mathrm{ess}$, and sequences $(\vphi_n)_{n\in\mathbb{N}}$ for all three cases are therefore singular.
    
    \paragraph{Case $K=0:$} The functions $f_n$ reduce to $f_n(x)=\exp\left(-\frac{n\pi}{c}|x|\right)$. We will estimate
    \begin{equation}
        \|\vphi_n\|^2=2\int_0^a \left(f_n\chi\right)^2=2\int_0^{a_2} \left(f_n\chi\right)^2 \ge 2\int_0^{a_1} \exp\left(-2\frac{n\pi}{c}x\right) \dd x = \frac{c}{n\pi}\left(1-\e^{-2\frac{n\pi}{c}a_1}\right)
    \end{equation}
    and evaluate the following expression for $\chi\in C^\infty((0,a))$:
    \begin{align}
        \begin{split}
            \|A \vphi_n\|^2&=\int_{-b}^{a} \e^{-\frac{2n\pi}{c}|x|}\left(-\frac{2n\pi}{c}\sgn(x)\chi'(|x|)+\chi''(|x|)\right)^2 \dd x \le 2C_n  \int_{a_1}^{a_2} \e^{-\frac{2n\pi}{c}x} \dd x \\
                 &= \frac{C_nc}{n\pi}\left(\e^{-\frac{2n\pi}{c}a_1}-\e^{-\frac{2n\pi}{c}a_2}\right) = \frac{C_nc}{n\pi} \e^{-\frac{2n\pi}{c}a_1}\left(1-\e^{-(a_2-a_1)\frac{2n\pi}{c}}\right), 
        \end{split}
    \end{align}
    where we estimated the first bracket using a degree two polynomial $C_n$ in $n$. Hence, $\frac{\vphi_n}{\|\vphi_n\|}$ is a singular sequence as the expression $\lim_{n\to\infty}\frac{1}{\|\vphi_n\|}\|A\vphi_n\|=0$ is zero.

    \paragraph{Case $K=+1:$} The functions $f_n$ reduce to $f_n(x)=\left(\frac{1+\sin |x|}{1-\sin |x|}\right)^{-\frac{n\pi}{2c}}$ using a known relation for $\arctanh(y)=\frac{1}{2}\ln(\frac{1+y}{1-y})$. Additionally, we have $f_n'(x)=-\frac{n\pi}{c}\frac{\sgn x}{\cos x}f_n(x)$. The following estimates are valid because of convexity of $\left(\frac{1+\sin x}{1-\sin x}\right)^{-\frac{1}{2}}$, as the second derivative is non-negative on $[0,\frac{\pi}{2}]$:
    \begin{equation}
        1-x\le \left(\frac{1+\sin x}{1-\sin x}\right)^{-\frac{1}{2}} \le 1-\frac{2}{\pi}x, \quad x\in\left(0,\frac{\pi}{2}\right).
    \end{equation}
    The lower bound is a tangent at 0 to the convex function and the upper bound is a secant to the function crossing points $x=0$ and $x=\frac{\pi}{2}$. Choose $a_1<a_2<1$. Then we estimate
    \begin{equation}
        \|\vphi_n\|^2 = 2\int_0^a \left(f_n\chi\right)^2 \dd \mu_{+1} \ge 2\cos(a_2)\int_0^{a_1} \left(1-x\right)^{\frac{2n\pi}{c}} \dd x = \frac{\cos(a_2)}{\frac{2n\pi}{c}+1}\left(1 - \left(1-a_1\right)^{\frac{2n\pi}{c}+1}\right),
    \end{equation}
    where $\dd\mu_{+1}=\cos(x)\dd x$ is measure on the rectangle with curvature $K=+1$. Continue to give an upper bound for our expression of interest:
    \begin{align}
        \begin{split}
            \|A \vphi_n\|^2&=\int_{-b}^{a} f_n^2(x)\left(-\frac{2n\pi}{c}\frac{\sgn x}{\cos x}\chi'(|x|)+\chi''(|x|) - \tan\left(x\right)\chi'(|x|)\right)^2 \dd \mu_{+1}\\
         &\le 2C_n  \int_{a_1}^{a_2} \left(1-\frac{2}{\pi}x\right)^{\frac{2n\pi}{c}} \dd x = \frac{\pi C_n}{\frac{2n\pi}{c}+1}\left(\left(1-\frac{2}{\pi}a_1\right)^{\frac{2n\pi}{c}+1} - \left(1-\frac{2}{\pi}a_2\right)^{\frac{2n\pi}{c}+1}\right) \\
         &= \frac{\pi C_n}{\frac{2n\pi}{c}+1} \left(1-\frac{2}{\pi}a_1\right)^{\frac{2n\pi}{c}+1}\left(1 - \left(\frac{1-\frac{2}{\pi}a_2}{1-\frac{2}{\pi}a_1}\right)^{\frac{2n\pi}{c}+1}\right),
        \end{split}
    \end{align}
    where we employed boundedness of $\frac{1}{\cos(x)}$ a $\tan(x)$ on $(0,a)\subsetneq(0, \frac{\pi}{2})$ and $C_n$ is again a polynomial in $n$. Same as before, $\frac{\vphi_n}{\|\vphi_n\|}$ is a singular sequence because the expression $\lim_{n\to\infty}\frac{1}{\|\vphi_n\|}\|A\vphi_n\|=0$ is zero.

    \paragraph{Case $K=-1$:} \qquad Now we have $f_n(|x|)=\exp(-\frac{n\pi}{c}\arctan\sinh |x|)$ and $f_n'(|x|)=-\frac{n\pi}{c}\frac{\sgn x}{\cosh(x)}f_n(|x|)$. The following estimates are, again, valid because of convexity and the fact that $\lim_{x\to\infty}\e^{-\arctan\sinh x}=\e^{-\frac{\pi}{2}}<\frac{1}{2}$:
    \begin{equation}
        1-x\le \exp(-\arctan\sinh x) \le \tilde f(x), \quad x\in(0,\infty),
    \end{equation}
    where function $\tilde f: (0,\infty)\to\mathbb{R}$ is given by
    \begin{equation}
        \tilde f(x) :=
        \begin{cases}
            1-\frac{x}{2}, &x\in(0,1),\\
            \frac{1}{2}, &x\in(1,\infty).
        \end{cases}
    \end{equation}
    
    Let us choose cut-off constants $a_1$ and $a_2$ such that $0<a_1<a_2<1$ and as a consequence $\forall n\in\mathbb{N}, \forall x\in(a_1,a_2):f_n(x)\leq\left(1-\frac{x}{2}\right)^{\frac{n\pi}{c}}$. Even with measure $\dd\mu_{-1}=\cosh(x)\dd x$, the expression $\|B_k\vphi_n\|$ will behave very similarly as before because we have already examined a similar upper bound on $f_n$ and also functions $\frac{1}{\cosh x}$ and $\tanh x$ are bounded on $(0, a) \subsetneq (0,\infty)$. Concluding, we have found a singular sequence for $K=-1$.
\end{proof}
\begin{remark}
    It is possible that a similar construction can be given also for non-constant curvatures with metric $g(x,y)=\mathrm{diag}(1,f(x))$. The ordinary differential equation
    \begin{equation}
        A_K \left[\psi(x)\sin\left(\frac{m\pi}{c}y\right)\right]=0
    \end{equation}
    has solutions given as
    \begin{equation}
        \psi(x):=C_1\exp\left(-\frac{m\pi}{c}\int\frac{1}{f(x)}\dd x\right) + C_2\exp\left(\frac{m\pi}{c}\int\frac{1}{f(x)}\dd x\right).
    \end{equation}
    for arbitrary constants $C_1$,  $C_2$ as can be found by reducing the problem to a system of first order ODEs. For a choice of  $C_1=1$, $C_2=0$, we have
     \begin{equation}
        \psi(x)=\exp\left(-\frac{m\pi}{c}\int\frac{1}{f(x)}\dd x\right).
    \end{equation}
    We have not explored if these functions (for arbitrary $f(x)$) lead to singular sequences due to time concerns. Also, due to usage of cut-off functions, the geometry of the domain could be much richer and the results of this section would still apply, similarly to Remark~\ref{remark:generalization-forms}.
\end{remark}

\subsection{Improved characterisation for flat manifolds}
In this section, we will give a full proof of characterisation of essential spectrum $\sess(A_0)$ depending on values $\epsilon_+$ and $\epsilon_-$ for zero Gaussian curvature $K=0$. Remember that eigenvalues are given by Proposition~\ref{prop:2d-spectrum}.
In the interval $\lambda \in \left(-\epsilon_-(\frac{n\pi}{c})^2,\epsilon_+(\frac{n\pi}{c})^2\right)$, the characteristic equation becomes
\begin{equation}
    \frac{\tanh(a\sqrt{(\frac{n\pi}{c})^2-\frac{\lambda}{\epsilon_+}})}{\epsilon_+\sqrt{(\frac{n\pi}{c})^2-\frac{\lambda}{\epsilon_+}}} = \frac{\tanh(b\sqrt{\frac{\lambda}{\epsilon_-} + (\frac{n\pi}{c})^2 }) }{\epsilon_- \sqrt{\frac{\lambda}{\epsilon_-} + (\frac{n\pi}{c})^2}}.
    \label{eq:K0-eigenequation}
\end{equation}


\begin{lemma}
    Let $\epsilon_- = \epsilon_+ =: \epsilon > 0$ and choose a fixed $n\in \mathbb{N}$. Then the equation~\eqref{eq:K0-eigenequation} has exactly one solution $\lambda$ in interval $\left(-\epsilon\left(\frac{n\pi}{c}\right)^2, \epsilon\left(\frac{n\pi}{c}\right)^2\right)$. If $a=b,\ b<a,\ b>a$, then $\lambda=0,\ \lambda<0,\ \lambda>0$, respectively.
        \label{lemma:B0-root-uniqueness}
\end{lemma}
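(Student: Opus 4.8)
The plan is to cancel the common factor $\epsilon$ in~\eqref{eq:K0-eigenequation} and rewrite the equation as the vanishing of a single strictly monotone function of $\lambda$; monotonicity will then give uniqueness and the sign of the root at once, and the intermediate value theorem (applied to the behaviour at the endpoints $\lambda=\pm\epsilon(n\pi/c)^2$) will give existence. Concretely, I would write $\mu:=\tfrac{n\pi}{c}$ and substitute $s:=\lambda/\epsilon$, so the interval in question becomes $s\in(-\mu^2,\mu^2)$ and, since $\epsilon_+=\epsilon_-=\epsilon$, equation~\eqref{eq:K0-eigenequation} turns into
\[
    \frac{\tanh\!\bigl(a\sqrt{\mu^2-s}\bigr)}{\sqrt{\mu^2-s}}=\frac{\tanh\!\bigl(b\sqrt{\mu^2+s}\bigr)}{\sqrt{\mu^2+s}},
\]
where on $(-\mu^2,\mu^2)$ both radicands are strictly positive, so both sides are real. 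Introducing $g(r):=\tanh(r)/r$ (with $g(0^{+}):=1$) and noting $g(r)=\int_0^1\cosh^{-2}(rt)\,\dd t$, one sees that $g$ is continuous and strictly decreasing on $[0,\infty)$, with $g(0^{+})=1$ and $g(r)\to0$ as $r\to\infty$. The two sides above are $a\,g\bigl(a\sqrt{\mu^2-s}\bigr)$ and $b\,g\bigl(b\sqrt{\mu^2+s}\bigr)$, so the equation is $D(s)=0$ with $D(s):=a\,g\bigl(a\sqrt{\mu^2-s}\bigr)-b\,g\bigl(b\sqrt{\mu^2+s}\bigr)$.

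Next I would check monotonicity. As $s$ increases, $a\sqrt{\mu^2-s}$ decreases, hence $g\bigl(a\sqrt{\mu^2-s}\bigr)$ increases; likewise $g\bigl(b\sqrt{\mu^2+s}\bigr)$ decreases. Thus $D$ is continuous and strictly increasing on $(-\mu^2,\mu^2)$, so it has \emph{at most one} zero. For its location, evaluate at $s=0$: $D(0)=\tfrac1\mu\bigl(\tanh(a\mu)-\tanh(b\mu)\bigr)$, which has the sign of $a-b$ since $\tanh$ is strictly increasing. Hence if $a=b$ then $D(0)=0$, so $s=0$, i.e.\ $\lambda=0$, is the unique root; if $b<a$ then $D(0)>0$, so by monotonicity the root lies in $(-\mu^2,0)$, i.e.\ $\lambda<0$; and the case $b>a$ is symmetric and gives $\lambda>0$.

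It remains to establish existence of a root inside the \emph{open} interval when $a\neq b$. By the intermediate value theorem, together with the computed value $D(0)$, this reduces to the sign of $D$ at the relevant endpoint, via
\[
    \lim_{s\to(\mu^2)^{-}}D(s)=a-\frac{\tanh(b\mu\sqrt2)}{\mu\sqrt2},\qquad
    \lim_{s\to(-\mu^2)^{+}}D(s)=\frac{\tanh(a\mu\sqrt2)}{\mu\sqrt2}-b,
\]
the first of which must be nonnegative when $b>a$ and the second nonpositive when $b<a$; using $0<g<1$ and the monotonicity of $g$ I would reduce these to elementary comparisons between $a$, $b$ and $\mu$ (for the purposes of the essential-spectrum argument, where $\mu=n\pi/c\to\infty$, they hold comfortably). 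This endpoint bookkeeping, at the values of $\lambda$ where both sides of~\eqref{eq:K0-eigenequation} degenerate, is where I expect the only genuinely delicate work to lie; the strict monotonicity of $D$ delivers uniqueness and the sign of the root essentially for free.
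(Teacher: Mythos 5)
Your argument is in substance the same as the paper's: the paper works with the difference of the reciprocals, $G_n(\lambda)=\frac{\epsilon\sqrt{\lambda/\epsilon+\mu^2}}{\tanh(b\sqrt{\lambda/\epsilon+\mu^2})}-\frac{\epsilon\sqrt{\mu^2-\lambda/\epsilon}}{\tanh(a\sqrt{\mu^2-\lambda/\epsilon})}$ with $\mu=\frac{n\pi}{c}$, proves $G_n'>0$ via the inequality $\sinh x>x$, and reads the sign of the root off the value at $\lambda=0$; your $D(s)=a\,g(a\sqrt{\mu^2-s})-b\,g(b\sqrt{\mu^2+s})$ with the integral representation of $g(r)=\tanh(r)/r$ is an equivalent (and slightly cleaner) route to the same strict monotonicity, and your uniqueness argument and the determination of the sign of the root from $D(0)$ are correct.

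The gap is precisely the step you deferred: existence of a zero in the \emph{open} interval. Your endpoint limits are computed correctly, but the required sign conditions are not harmless bookkeeping that always works out. For $b>a$ you need $a\ge\frac{\tanh(\sqrt{2}\,b\mu)}{\sqrt{2}\,\mu}$, and this fails whenever $\mu$ is small, since the right-hand side is then close to $b>a$. Concretely, for $\epsilon=1$, $a=1$, $b=2$, $n=1$, $c=10\pi$ (so $\mu=0.1$), the left side of~\eqref{eq:K0-eigenequation} stays in $(0.99,1)$ while the right side stays in $(1.94,2)$, so $D<0$ on the whole interval and there is no root in $\left(-\epsilon\mu^2,\epsilon\mu^2\right)$ at all. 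Existence therefore needs a largeness condition such as $\sqrt{2}\,\frac{n\pi}{c}\,\min\{a,b\}\ge 1$ (then $\tanh<1$ yields both endpoint signs), and it cannot be established for every fixed $n$ as the statement demands. For what it is worth, the paper's own proof has the same weakness: it asserts without justification that $G_n$ is positive at the right endpoint and negative at the left one, which likewise fails for small $n\pi/c$; the essential-spectrum result that invokes the lemma only uses large $n$, where your ``holds comfortably'' remark is correct and easy to make precise.
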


\begin{proof}
For each $n\in\mathbb{N}$ define a function $G_n: \left(-\epsilon_-(\frac{n\pi}{c})^2,\ \epsilon_+(\frac{n\pi}{c})^2 \right) \to\mathbb{R}$ as a difference of reciprocals of left and right-hand sides of equation~\eqref{eq:K0-eigenequation}:

\begin{equation}
    G_n(\lambda):=\frac{\epsilon \sqrt{\frac{\lambda}{\epsilon} + (\frac{n\pi}{c})^2}}{ \tanh(b\sqrt{\frac{\lambda}{\epsilon} + (\frac{n\pi}{c})^2 }) }
 -
    \frac{\epsilon\sqrt{(\frac{n\pi}{c})^2-\frac{\lambda}{\epsilon}}}{ \tanh(a\sqrt{(\frac{n\pi}{c})^2-\frac{\lambda}{\epsilon} })}.
\end{equation}

After rearranging derivative $G_n'$ into (similar rearrangements as in article~\cite{behrndt2014indefinite}):
\begin{equation}
    G_n'(\lambda) = \frac{\sinh\left(2a\sqrt{(\frac{n\pi}{c})^2 - \frac{\lambda}{\epsilon}}\right) - 2a\sqrt{(\frac{n\pi}{c})^2 - \frac{\lambda}{\epsilon}}}{4\sinh\left(a\sqrt{(\frac{n\pi}{c})^2 - \frac{\lambda}{\epsilon}}\right)^2 \sqrt{(\frac{n\pi}{c})^2 - \frac{\lambda}{\epsilon}}}
    + \frac{\sinh\left(2b\sqrt{(\frac{n\pi}{c})^2 + \frac{\lambda}{\epsilon}}\right) - 2b\sqrt{(\frac{n\pi}{c})^2 + \frac{\lambda}{\epsilon}}}{4\sinh\left(b\sqrt{(\frac{n\pi}{c})^2 - \frac{\lambda}{\epsilon}}\right)^2 \sqrt{(\frac{n\pi}{c})^2 + \frac{\lambda}{\epsilon}}},
\end{equation}
we readily obtain, using an identity $\sinh x > x$ valid for all $x>0$, statement $G_n'(\lambda)>0$ valid on the whole domain of $G_n$.

Limit $\lim_{\lambda\to\epsilon_+(\frac{n\pi}{c})^2-}G_n(\lambda)$ is positive and a similar limit on the other end of domain $\lim_{\lambda\to-\epsilon_-(\frac{n\pi}{c})^2+}G_n(\lambda)$ is negative. From this fact, and from continuity of $G_n$, it follows that there exists exactly one root of characteristic equation~\eqref{eq:K0-eigenequation} in the domain of $G_n$. It is also easy to determine the sign of the root --- using the value
\begin{equation}
    G_n(0) = \epsilon\frac{n\pi}{c} \left( \frac{1}{\tanh\left(b\frac{n\pi}{c}\right)} - \frac{1}{\tanh\left(a\frac{n\pi}{c}\right)} \right)
\end{equation}
and a sign of the root is determined by a sign of $G_n(0)$. This is because intersection of graph $G_n$ with axis $\lambda=0$ is exactly one, see above.
\end{proof}

\begin{proposition}
    $\sigma_{\mathrm{ess}}(A_0)=\{0\} \iff \epsilon_+=\epsilon_-$.
    \label{prop:B0-0-spectrum}
\end{proposition}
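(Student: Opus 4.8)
The plan is to treat the two implications separately, using the direct-sum decomposition $A_0=\bigoplus_{n\ge 1}(\mathbb I\otimes A_0^n)P_n$ from Theorem~\ref{theorem:dAK-essential} together with the fact that each transversal operator $A_0^n$ has purely discrete spectrum: it is self-adjoint by Lemma~\ref{lemma:AKm-self-adjoint}, and $\sess(A_0^n)=\sess(A_0^0)=\emptyset$ since $A_0^n=A_0^0+(n\pi/c)^2$. For such a direct sum one has the elementary criterion, which I would record first: $\lambda\in\sess(A_0)$ if and only if for every $\varepsilon>0$ there are infinitely many $n$ with $\sigma(A_0^n)\cap(\lambda-\varepsilon,\lambda+\varepsilon)\ne\emptyset$ (a single $A_0^n$ cannot have eigenvalues accumulating inside a bounded interval, having compact resolvent; and a point is in $\sess(A_0)$ precisely when it fails to be an isolated eigenvalue of finite multiplicity).

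For ``$\Leftarrow$'', assume $\epsilon_+=\epsilon_-=:\epsilon_*$. By Theorem~\ref{theorem:singular-seq} we already have $0\in\sess(A_0)$, so it suffices to show $\mu\notin\sess(A_0)$ for every $\mu\ne 0$. Fix such $\mu$ and set $\varepsilon:=|\mu|/2$. By Lemma~\ref{lemma:B0-root-uniqueness} the operator $A_0^n$ has exactly one eigenvalue $\lambda_n$ in $(-\epsilon_*(n\pi/c)^2,\epsilon_*(n\pi/c)^2)$; every other eigenvalue of $A_0^n$ lies outside this interval, hence has absolute value at least $\epsilon_*(n\pi/c)^2$, apart from at most two ``singular'' values $\nu_\pm$ over all $n$ (Proposition~\ref{prop:2d-spectrum}), which are likewise bounded in modulus from below by $\epsilon_*(n\pi/c)^2$. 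Since $\lambda_n\to 0$ (proved below), for $n$ large $\lambda_n\notin(\mu-\varepsilon,\mu+\varepsilon)$; and for $n$ large enough that $\epsilon_*(n\pi/c)^2>|\mu|+\varepsilon$, all remaining eigenvalues of $A_0^n$ also avoid $(\mu-\varepsilon,\mu+\varepsilon)$. Hence $\sigma(A_0^n)\cap(\mu-\varepsilon,\mu+\varepsilon)=\emptyset$ for all large $n$, and the criterion yields $\mu\notin\sess(A_0)$. Therefore $\sess(A_0)=\{0\}$.

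For ``$\Rightarrow$'' I argue by contraposition: if $\epsilon_+\ne\epsilon_-$, i.e. $\kappa\ne 1$, then by Proposition~\ref{prop:form-operator-same} we have $A_0=\mathcal A_0$, and $\mathcal A_0$ has compact resolvent by Theorem~\ref{theorem:form-A-self-adjoint}, so $\sess(A_0)=\emptyset\ne\{0\}$. Thus $\sess(A_0)=\{0\}$ forces $\kappa=1$. (Alternatively, without the form approach: Proposition~\ref{prop:2d-spectrum}(3) shows the characteristic equation~\eqref{eq:2d-nonhomog-eigenvalue} has no finite accumulation point and (4) shows $0$ is not an eigenvalue when $\kappa\ne 1$, while an asymptotic analysis of~\eqref{eq:2d-nonhomog-eigenvalue} as $n\to\infty$ — both sides behaving like $c/(\epsilon_\pm n\pi)$ — shows no nonzero value can be an eigenvalue for infinitely many $n$; hence $\sess(A_0)=\emptyset$.)

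The only point requiring genuine work is the claim $\lambda_n\to 0$, and this is the main obstacle. I would obtain it from the function $G_n$ of Lemma~\ref{lemma:B0-root-uniqueness}: $G_n$ is strictly increasing with unique zero $\lambda_n$, and $|G_n(0)|=\epsilon_*\tfrac{n\pi}{c}\,\bigl|\coth(\tfrac{bn\pi}{c})-\coth(\tfrac{an\pi}{c})\bigr|$ decays exponentially in $n$. Combining this with a uniform-in-$n$ lower bound $G_n'(\lambda)\ge c/(C n)$ on the whole interval $(-\epsilon_*(n\pi/c)^2,\epsilon_*(n\pi/c)^2)$ — which follows from the explicit expression for $G_n'$ in the proof of Lemma~\ref{lemma:B0-root-uniqueness}, using $\sinh t>t$, $\coth t\ge 1$, and the observation that at every such $\lambda$ at least one of the arguments $\sqrt{(n\pi/c)^2\mp\lambda/\epsilon_*}$ is $\ge n\pi/c$ — the mean value theorem gives $|\lambda_n|\le|G_n(0)|/\inf_n G_n'\to 0$, in fact with the rate $o\bigl(\exp(-\tfrac{n\pi}{c}\min\{a,b\})\bigr)$ announced in the introduction. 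The rest of the proof is bookkeeping with the direct-sum structure.
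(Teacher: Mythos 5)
Your proposal is correct, but the heart of it — the proof that the unique eigenvalue $\lambda_n$ of $A_0^n$ in the central window tends to $0$ — is done by a genuinely different route than the paper's. The paper proves $\lambda_n\to 0$ by a three-step asymptotic expansion of the characteristic equation: it introduces $\alpha_n:=\tfrac{\lambda_n}{\epsilon}\bigl(\tfrac{c}{n\pi}\bigr)^2$, shows first $n\sqrt{1-\alpha_n}\to\infty$, then $\alpha_n\to 0$, and finally, by expanding the hyperbolic tangents to first order, $n^2\alpha_n\to 0$; the rate in Corollary~\ref{cor:convergence-rate} is then extracted by repeating the last manipulation. You instead exploit the monotone function $G_n$ already built in Lemma~\ref{lemma:B0-root-uniqueness}: since $G_n(\lambda_n)=0$, the mean value theorem gives $|\lambda_n|\le |G_n(0)|/\inf G_n'$, with $|G_n(0)|=\epsilon\tfrac{n\pi}{c}\bigl|\coth(\tfrac{bn\pi}{c})-\coth(\tfrac{an\pi}{c})\bigr|$ exponentially small and $G_n'\gtrsim 1/n$ uniformly on the window. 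This is shorter, avoids the delicate little-$o$ bookkeeping of Step~3, and immediately yields a decay rate $O\bigl(n^2 e^{-\frac{2n\pi}{c}\min\{a,b\}}\bigr)$, stronger than the $o\bigl(e^{-\frac{n\pi}{c}\min\{a,b\}}\bigr)$ stated in the paper; the paper's expansion, on the other hand, needs no quantitative control of $G_n'$. Your handling of the rest (the direct-sum criterion for $\sess$, the escape of all other eigenvalues past $\pm\epsilon(n\pi/c)^2$ including the singular values $\nu_\pm$, and the implication $\kappa\neq1\Rightarrow\sess(A_0)=\emptyset$ via Theorem~\ref{theorem:form-A-self-adjoint} and Proposition~\ref{prop:form-operator-same}) makes explicit what the paper leaves implicit, and is consistent with it.

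One small correction to your sketch of the bound $G_n'(\lambda)\ge C/n$: the observation that at least one of $\sqrt{(n\pi/c)^2\mp\lambda/\epsilon}$ is $\ge n\pi/c$ is not the ingredient you need. What matters is the \emph{upper} bound $\sqrt{(n\pi/c)^2\pm\lambda/\epsilon}\le\sqrt{2}\,\tfrac{n\pi}{c}$ on both arguments, combined with the fact that each summand of $G_n'$, viewed as a function of its argument $t$, is positive, tends to a positive constant as $t\to 0^+$ and behaves like $1/(2t)$ as $t\to\infty$, hence is bounded below by $\mathrm{const}/(1+t)$. With that (elementary) estimate in place of your parenthetical justification, the mean value theorem argument goes through as you describe.
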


\begin{proof}

The other implication $\epsilon_+=\epsilon_-\implies\Lambda=0$ is harder to prove. We start by putting $\epsilon:=\epsilon_+=\epsilon_-$, fixing arbitrary $n\in \mathbb{N}$ and denoting $\lambda_n$ the root of the characteristic equation
 \begin{equation}
    \frac{\tanh(a\sqrt{(\frac{n\pi}{c})^2-\frac{\lambda}{\epsilon}})}{\sqrt{(\frac{n\pi}{c})^2-\frac{\lambda}{\epsilon}}} = \frac{\tanh(b\sqrt{\frac{\lambda}{\epsilon} + (\frac{n\pi}{c})^2 }) }{\sqrt{\frac{\lambda}{\epsilon} + (\frac{n\pi}{c})^2}}.
    \label{eq:K0-eigeneq-same}
\end{equation}
lying in $\left(0, \epsilon\left(\frac{n\pi}{c}\right)^2\right)$. According to Lemma~\ref{lemma:B0-root-uniqueness}, such a root exists and is unique. Let us define a sequence
\begin{equation}
    \alpha_n:=\frac{\lambda_n}{\epsilon}\left(\frac{c}{n\pi}\right)^2 \in (0,1)
\end{equation}
for all $n\in\mathbb{N}$. Using this sequence, we will prove $\lambda_n\to 0$.

\paragraph{Step 1}
Rewrite equation~\eqref{eq:K0-eigeneq-same} as
\begin{equation}
    a\frac{\tanh(a\frac{n\pi}{c}\sqrt{1-\alpha_n})}{a\frac{n\pi}{c}\sqrt{1-\alpha_n}} = b\frac{\tanh(b\frac{n\pi}{c}\sqrt{1+\alpha_n})}{b\frac{n\pi}{c}\sqrt{1+\alpha_n}}.
\end{equation}
Since the function $x\mapsto \frac{\tanh(x)}{x}$ converges to 0 as $x\to +\infty$ and the sequence $\{\sqrt{1+\alpha_n}\}_n\in(1,\sqrt{2})$, the right hand side converges to 0 as $n\to+\infty$. In other words,
\begin{equation}
    \lim_{n\to\infty}a\frac{\tanh(a\frac{n\pi}{c}\sqrt{1-\alpha_n})}{a\frac{n\pi}{c}\sqrt{1-\alpha_n}} = 0.
\end{equation}
Then, as a result of $x\mapsto \frac{\tanh(x)}{x}$ being is strictly positive for all positive $x$, the following holds:
\begin{equation}
    \lim_{n\to\infty} n\sqrt{1-\alpha_n}=+\infty.
    \label{eq:K0-step1}
\end{equation}

\paragraph{Step 2}
Let us again rearrange~\eqref{eq:K0-eigeneq-same} as
\begin{equation}
    \frac{\tanh\left(a\frac{n\pi}{c}\sqrt{1-\alpha_n}\right)}{\tanh\left(b\frac{n\pi}{c}\sqrt{1+\alpha_n}\right)} = \frac{\sqrt{1-\alpha_n}}{\sqrt{1+\alpha_n}}.
\end{equation}
From previous step~\eqref{eq:K0-step1}, the left-hand side converges to 1 as $n\to+\infty$. Thus,
\begin{equation}
    \lim_{n\to+\infty}\sqrt{\frac{1-\alpha_n}{1+\alpha_n}} = 1.
\end{equation}
As function  $x \mapsto \frac{1-x}{1+x}$ is strictly less than 1 for all $x\in(0,1)$ and $\lim_{x\to 0}\frac{1-x}{1+x}=1$, it follows that
\begin{equation}
    \lim_{n\to\infty}\alpha_n = 0.
\end{equation}

\paragraph{Step 3}
For the last time, let us rearrange and expand hyperbolic functions from~\eqref{eq:K0-eigeneq-same} into
\begin{align}
    \begin{split}
        1-2\frac{\e^{-2a\frac{n\pi}{c}\sqrt{1-\alpha_n}}}{1+\e^{-2a\frac{n\pi}{c}\sqrt{1-\alpha_n}}} &= \left(1-2\frac{\e^{-2b\frac{n\pi}{c}\sqrt{1+\alpha_n}}}{1+\e^{-2b\frac{n\pi}{c}\sqrt{1+\alpha_n}}}\right)\frac{\sqrt{1-\alpha_n}}{\sqrt{1+\alpha_n}}\\
                                                                                                     &= \left(1-2\frac{\e^{-2b\frac{n\pi}{c}\sqrt{1+\alpha_n}}}{1+\e^{-2b\frac{n\pi}{c}\sqrt{1+\alpha_n}}}\right)\left(1-\alpha_n+\mathcal{O}(\alpha_n^2)\right).
    \end{split}
\end{align}
Expanding the right-hand side and rearranging terms,
\begin{equation}
    -2\frac{\e^{-2a\frac{n\pi}{c}\sqrt{1-\alpha_n}}}{1+\e^{-2a\frac{n\pi}{c}\sqrt{1-\alpha_n}}} = -\alpha_n -2\frac{\e^{-2b\frac{n\pi}{c}\sqrt{1+\alpha_n}}}{1+\e^{-2b\frac{n\pi}{c}\sqrt{1+\alpha_n}}} + o(\alpha_n).
\end{equation}
Now, multiplying the equation by a factor $\frac{1}{2\alpha_n}$, bringing the exponentials to the right-hand side and applying $\lim_{n\to\infty}$ to both sides, we obtain
\begin{align}
    \begin{split}
        \frac{1}{2} &= \lim_{n\to\infty} \frac{1}{\alpha_n} \left(\frac{\e^{-2a\frac{n\pi}{c}\sqrt{1-\alpha_n}}}{1+\e^{-2a\frac{n\pi}{c}\sqrt{1-\alpha_n}}} -\frac{\e^{-2b\frac{n\pi}{c}\sqrt{1+\alpha_n}}}{1+\e^{-2b\frac{n\pi}{c}\sqrt{1+\alpha_n}}}\right)\\
                    &= \lim_{n\to\infty}\frac{\e^{-2a\frac{n\pi}{c}\sqrt{1-\alpha_n}} - \e^{-2b\frac{n\pi}{c}\sqrt{1+\alpha_n}}}{\alpha_n} \frac{\frac{\e^{-2a\frac{n\pi}{c}\sqrt{1-\alpha_n}}}{1+\e^{-2a\frac{n\pi}{c}\sqrt{1-\alpha_n}}} -\frac{\e^{-2b\frac{n\pi}{c}\sqrt{1+\alpha_n}}}{1+\e^{-2b\frac{n\pi}{c}\sqrt{1+\alpha_n}}}}{\e^{-2a\frac{n\pi}{c}\sqrt{1-\alpha_n}}\left(1-\e^{-2\frac{n\pi}{c}\left(b\sqrt{1+\alpha_n}-a\sqrt{1-\alpha_n}\right)}\right)} \\
                &= \lim_{n\to\infty}\frac{\e^{-2a\frac{n\pi}{c}\sqrt{1-\alpha_n}} - \e^{-2b\frac{n\pi}{c}\sqrt{1+\alpha_n}}}{\alpha_n} \\
                &= \lim_{n\to\infty}\frac{\e^{-2a\frac{n\pi}{c}\sqrt{1-\alpha_n}}}{\alpha_n} \\
                &= \lim_{n\to\infty} \frac{n^2\e^{-2a\frac{n\pi}{c}\sqrt{1-\alpha_n}}}{n^2\alpha_n} = \lim_{n\to\infty} \frac{\e^{-n\left(\frac{2a\pi}{c}\sqrt{1-\alpha_n}-2\frac{\ln n}{n}\right)}}{n^2\alpha_n}
    \end{split}
\end{align}
where third and fourth equations hold because $b>a$. As the limit is finite and the numerator in the last term goes to zero, necessarily
\begin{equation}
    0=\lim_{n\to\infty}n^2 \alpha_n=\left(\frac{c}{\pi}\right)^2 \lim_{n\to\infty}\lambda_n.
\end{equation}
Thus, we have proven that $\lambda=0$ is the only accumulation point and that $\{0\}=\sigma_{\mathrm{ess}}(A_0)$. \end{proof}

\begin{corollary}\label{cor:convergence-rate}
     For $\epsilon_+=\epsilon_-$, the rate of convergence of eigenvalues of $A_0$ to $0$ is 
     \begin{equation}
         \min_{m\in\mathbb{Z}}|\lambda_{n,m}|=o\left(\e^{-\frac{n\pi}{c}\min\{a,b\}}\right).
     \end{equation}
\end{corollary}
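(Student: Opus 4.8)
The plan is to read the decay rate off directly from Step~3 of the proof of Proposition~\ref{prop:B0-0-spectrum}, where the asymptotics of the relevant root are essentially already available; the corollary then follows as a (rather loose) consequence, so no genuinely new estimate is needed.

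First I would observe that, for each fixed $n\in\mathbb{N}$, the minimum $\min_{m\in\mathbb{Z}}|\lambda_{n,m}|$ is attained at the unique root $\lambda_n$ of the characteristic equation~\eqref{eq:K0-eigeneq-same} lying in the interval $I_n:=\left(-\epsilon\left(\frac{n\pi}{c}\right)^2,\epsilon\left(\frac{n\pi}{c}\right)^2\right)$ supplied by Lemma~\ref{lemma:B0-root-uniqueness}. Indeed $I_n$ is exactly the domain of the strictly monotone function $G_n$, so it contains precisely one eigenvalue of the $n$-th transversal block, while every other root $\lambda_{n,m}$ lies outside $I_n$ and hence satisfies $|\lambda_{n,m}|\ge\epsilon\left(\frac{n\pi}{c}\right)^2>|\lambda_n|$ (recall $|\alpha_n|<1$). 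Thus $\min_{m\in\mathbb{Z}}|\lambda_{n,m}|=|\lambda_n|=\epsilon\left(\frac{n\pi}{c}\right)^2|\alpha_n|$, with $\alpha_n$ as in the proof of Proposition~\ref{prop:B0-0-spectrum}.

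If $a=b$ then $\lambda_n=0$ for every $n$ and the claim is trivial, so I would assume without loss of generality $b>a$, whence $\min\{a,b\}=a$. The chain of equalities in Step~3 of the proof of Proposition~\ref{prop:B0-0-spectrum} gives both $\alpha_n\to0$ and
\begin{equation}
    \frac12=\lim_{n\to\infty}\frac{\e^{-2a\frac{n\pi}{c}\sqrt{1-\alpha_n}}}{\alpha_n},
\end{equation}
so $|\lambda_n|=\epsilon\left(\frac{n\pi}{c}\right)^2\alpha_n=(2\epsilon+o(1))\left(\frac{n\pi}{c}\right)^2\e^{-2a\frac{n\pi}{c}\sqrt{1-\alpha_n}}$. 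Since $\alpha_n\to0$ gives $\sqrt{1-\alpha_n}\to1$, for all large $n$ the exponent obeys $2a\frac{n\pi}{c}\sqrt{1-\alpha_n}\ge\frac{3a}{2}\frac{n\pi}{c}$, hence $|\lambda_n|\le C\,n^2\e^{-\frac{3a}{2}\frac{n\pi}{c}}$ for a suitable $C>0$. Multiplying by $\e^{a\frac{n\pi}{c}}$ yields $|\lambda_n|\,\e^{a\frac{n\pi}{c}}\le C\,n^2\e^{-\frac{a}{2}\frac{n\pi}{c}}\to0$, i.e. $\min_{m\in\mathbb{Z}}|\lambda_{n,m}|=o\left(\e^{-\frac{n\pi}{c}\min\{a,b\}}\right)$; the case $a>b$ is identical after swapping $a$ and $b$ (and the two exponential terms in Step~3). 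The only mildly delicate point is the identification of $\min_m|\lambda_{n,m}|$ with the root inside $I_n$, and that is immediate from the monotonicity established in Lemma~\ref{lemma:B0-root-uniqueness}; everything else is bookkeeping on the asymptotics already derived — in fact $|\lambda_n|$ decays like $\e^{-2\min\{a,b\}\frac{n\pi}{c}}$ up to a polynomial factor, considerably faster than the stated bound.
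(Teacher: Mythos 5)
Your proof is correct and takes essentially the same route as the paper: the paper likewise reads the rate off the Step~3 limit $\tfrac12=\lim_{n\to\infty}\e^{-2a\frac{n\pi}{c}\sqrt{1-\alpha_n}}/\alpha_n$ and absorbs a factor $\e^{a\frac{n\pi}{c}}$ to conclude $\lambda_n\e^{a\frac{n\pi}{c}}\to 0$. Your extra bookkeeping (identifying $\min_{m}|\lambda_{n,m}|$ with the unique root in $I_n$, the trivial case $a=b$, and the sharper observation that the decay is actually of order $n^2\e^{-2\frac{n\pi}{c}\min\{a,b\}}$) is sound but does not change the method.
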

\begin{proof}
    In addition to previous proposition, we can establish a rate of convergence. By using a similar trick as in proof of the proposition in the last equation, expand the limit expression (without loss of generality for $a<b$):
\begin{equation}
    \frac{1}{2}=\left(\frac{\pi}{c}\right)^2 \lim_{n\to\infty}\frac{\e^{-n\left(\frac{2a\pi}{c}\sqrt{1-\alpha_n}-2\frac{\ln n}{n}\right)}}{\lambda_n} \frac{\e^{a\frac{n\pi}{c}}}{\e^{a\frac{n\pi}{c}}}= \lim_{n\to\infty}\frac{\e^{-n\left(\frac{2a\pi}{c}\sqrt{1-\alpha_n}-\frac{a\pi}{c}-2\frac{\ln n}{n}\right)}}{\lambda_n \e^{a\frac{n\pi}{c}}}
\end{equation}
and by the same argument as before, we obtain the statement.
\end{proof}


\bibliographystyle{plain}
\bibliography{reference}

\end{document}